\newenvironment{customthm}[1]
{\innercustomthm}
{\endinnercustomthm}
\newenvironment{customlem}[1]
{\innercustomlem}
{\endinnercustomlem}
\newtheorem{theorem}{Theorem}
\newtheorem{lemma}[theorem]{Lemma}
\newtheorem{claim}{Claim}
\newtheorem{definition}{Definition}
\theoremstyle{remark} \newtheorem{remark}{\bf Remark}}
\theoremstyle{definition}
\def\CC{\mathbb C}
\def\RR{\mathbb R}
\def\cF{\mathcal F}
\def\cG{\mathcal G}
\def\cI{\mathcal I}
\def\cL{\mathcal L}
\def\cN{\mathcal N}
\newcommand{\raf}[1]{(\ref{#1})}
\newcommand{\cV}{\ensuremath{\mathcal{V}}}
\newcommand{\cE}{\ensuremath{\mathcal{E}} }
\newcommand{\OPT}{\ensuremath{\textsc{Opt}}}
\newcommand{\polylog}{\operatorname{polylog}}
\newcommand{\re}{\ensuremath{\mathrm{Re}}}
\newcommand{\im}{\ensuremath{\mathrm{Im}}}
\title{Optimal Power Flow with Inelastic Demands for Demand Response in Radial Distribution Networks\thanks{Extended version of the journal paper appears in IEEE Transactions on Control of Network Systems. DOI: http://dx.doi.org/10.1109/TCNS.2016.2622362}}
\author{
	Majid Khonji, Chi-Kin Chau, and Khaled Elbassioni
	\thanks{M. Khonji is  with Dubai Electricity and Water Authority (DEWA), Dubai, UAE (e-mail: majid.khonji@dewa.gov.ae). C.-K. Chau, and K. Elbassioni are with the Dept. of EECS at Masdar Institute of Science and Technology, UAE (e-mail: \{ckchau, kelbassioni\}@masdar.ac.ae).}
}
\newif\ifsupplementary
\begin{document}

\maketitle

\begin{abstract}
The classical optimal power flow problem optimizes the power flow in a power network considering the associated flow and operating constraints. In this paper, we investigate optimal power flow in the context of utility-maximizing demand response management in distribution networks, in which customers' demands are satisfied subject to the operating constraints of voltage and transmission power capacity. The prior results concern only elastic demands that can be partially satisfied, whereas power demands in practice can be inelastic with binary control decisions, which gives rise to a mixed integer programming problem. We shed light on the hardness and approximability by polynomial-time algorithms for optimal power flow problem with inelastic demands. We show that this problem is inapproximable for general power network topology with upper and lower bounds of nodal voltage. Then, we propose an efficient algorithm for a relaxed problem in radial networks with bounded transmission power loss and upper bound of nodal voltage. We derive an approximation ratio between the proposed algorithm and the exact optimal solution. Simulations show that the proposed algorithm can produce close-to-optimal solutions in practice.
\end{abstract}
\begin{IEEEkeywords}
Optimal power flow, inelastic demands, approximation algorithms, inapproximability, discrete optimization
\end{IEEEkeywords}

\pagenumbering{arabic}
\section{Introduction}

Electric power network is a distinctive networked system, because the flows in such a network obey certain non-linear physics laws, unlike other networked systems. Despite being a century-old system, the control and optimization of electric power networks is still a challenging fundamental problem that baffles electrical power engineers.

The optimal power flow (OPF) problem is an optimization problem that minimizes a certain cost (e.g., power loss) subject to Kirchhoff's laws of electric flows, and several operating constraints of nodal voltage and transmission power capacity. A critical challenge in optimizing electric power networks is the presence of complex-valued quantities (for modeling periodic varying properties such as voltage and current) and quadratic constraints (for describing the behavior of power flows) that give rise to a non-convex optimization problem. Recently, there have been a number of breakthroughs \cite{gan2015exact,low2014convex1,low2014convex2} in tackling OPF by convex relaxations, which are shown to attain the exact optimal solutions under certain mild conditions. 

Demand response management is a critical mechanism to balance power demand and supply. In this work, we consider optimal power flow in the context of utility-maximizing demand response management in distribution networks, in which customers' demands are satisfied subject to operating constraints of nodal voltage and transmission power capacity. While this problem has been considered in previous papers (e.g., \cite{li2012optimization}), the prior results concern only elastic demands that can be partially satisfied, whereas some power demands in practice can be inelastic with binary control decisions (e.g., appliances that can be either switched on or off). We formulate a mixed integer programming problem to model utility-maximizing demand response management with inelastic and elastic demands. We shed light on the hardness and approximability by polynomial-time algorithms for this problem.

This work unifies three separate strands of work. First, it is related to the optimal power flow for demand response management with elastic demands \cite{li2012optimization}. Second, the combinatorial power allocation with inelastic demands has been studied in the single-link case (known as {\em complex-demand knapsack problem} \cite{CKM14}). We extend the combinatorial power allocation problem to general networks with power flows. Third, allocation of inelastic demands of real-valued commodity is known as {\em unsplittable flow problem} \cite{chekuri2009unsplittable,anagnostopoulos2014mazing}, which is an important topic in theoretical computer science. This work applies several ideas from solving the unsplittable flow problem to the flow problem of complex-valued commodity (i.e., electric power).  But our results generalizing those of complex-demand knapsack problem and unsplittable flow problem are non-trivial, because of the significant challenges in tackling complex-valued flow subject to the operating constraints.

The contributions of this paper are summarized as follows: 
\begin{enumerate}

\item {\em Hardness}: We show that the optimal power flow for demand response management with inelastic demands is inapproximable (even allowing constraint violation), when considering upper and lower bounds of nodal voltage, or transmission power capacity constraints in a general (cyclic) electric network.

\item {\em Approximability}: We propose an efficient approximation algorithm for a relaxed problem in radial networks (i.e., trees) with bounded transmission power loss and upper bound of nodal voltage. We derive an approximation ratio bounding the gap between the solution of the proposed algorithm and the exact optimal solution (which is computationally hard to obtain). 

\item {\em Mixed Demands}: We provide an extension of our approximation algorithm that can handle a mix of both elastic and inelastic demands.

\item {\em Evaluations}: We perform extensive simulations on a test electric network to evaluate the practical performance of our algorithms. The proposed algorithm is observed to produce close-to-optimal solutions in practice, while being faster by orders of magnitude compared to an exact mixed integer programming solver. 

\end{enumerate}

\section{Related work}

\subsection{Optimal Power Flow Problem}

In general power networks, the optimal power flow (OPF) problem is characterized by constraints defined by two models: Bus Injection Model (BIM)  and Branch Flow Model (BFM) (also called DistFlow model)\cite{low2014convex1,low2014convex2}. Variables in BIM (i.e., voltage and power) are assigned for every bus (or node). On the other hand, variables in BFM are assigned for every branch (or edge). BFM was first proposed in \cite{baran1989placement, baran1989sizing}. It has been shown by \cite{equal} that both models are in fact equivalent, in a sense that there is a bijection map between solutions in both models.
In radial (or tree) topologies one can reduce the number of optimization variables in BFM. Given a solution, one can recover all omitted angles uniquely in tree topologies through a polynomial time procedure described in \cite{farivar2013branch}. 
The  power  flow  equations,  in  either models,  are  nonlinear  and  the  solution  sets  are  non-convex, therefore hard to compute.
One way to solve OPF is to relax the feasible region to become convex. 
In the branch flow model, a second order cone (SOC) relaxation is shown in \cite{gan2015exact} to be exact under mild conditions in  tree topologies. Efficient algorithms exist for SOC programing.  
Other works obtained relaxations for BIM as well for tree topology that are also exact under some conditions \cite{low2014convex2}.


\subsection{Complex-demand Knapsack Problem}

On the other hand, there are several recent studies on demand response with inelastic demands.
For a single-link case, demand response with inelastic demands has been studied as the complex-demand knapsack problem {\sc (CKP)} and its application to power demand allocation was highlighted by \cite{YC13CKP}. Let  $\theta$ be the maximum angle between any complex valued demands. \cite{YC13CKP} obtained a $\frac{1}{2}$-approximation for the case where $0 \le \theta \le \frac{\pi}{2}$.  \cite{woeginger2000does} (also \cite{YC13CKP}) proved that no fully polynomial-time approximation scheme (FPTAS) exist. Recently, \cite{CKM14, CKM15} provided a polynomial-time approximation scheme (PTAS), and a bi-criteria FPTAS (allowing constraint violation) for $\frac{\pi}{2} < \theta < \pi - \varepsilon$, which closes the approximation gap. Also, \cite{KKCEZ16} provides a greedy efficient algorithm for solving CKP, and preliminary hardness result appears in \cite{KCE14}.

\subsection{Unsplittable Flow Problem}
When the demands are real-valued, our problem is related to the unsplittable flow problem (UPF). In UPF, each demand is associated with an arbitrary path from a source to a sink, while in our problem all demands share a single source (or sink). 
A special case of UPF is when all demands and edge capacities are uniform is the classical maximum edge-disjoint path problem (MEDP)
\cite{KR72}.   In directed graphs, the best  known approximation is $O( \min\{ \sqrt{m},n^{\frac{2}{3}} \log^{\frac{1}{3}} n \})$ \cite{KM96,VAV04}, while it is NP-Hard to approximate within $\Omega(n^{\frac{1}{2}-\epsilon})$ \cite{guruswami2003near}, where $n$ and $m$ are the number of nodes and edges respectively. 
\cite{AYR01} shows that UFP in directed graphs is $\Omega(n^{1-\epsilon})$-hard unless ${\rm P} = {\rm NP}$. In undirected graphs, there is an $O(\sqrt n )$-approximation \cite{CKS06}, and the best known hardness result is $\Omega(\log^{\frac{1}{2}-\epsilon} n)$ assuming ${\rm NP} \not \subseteq {\rm ZPTIME}(n^{O(\polylog(n))})$ \cite{AMC05}.  These hardness results suggest that the problem is difficult to solve in general graphs. For tree topology, the problem is APX-Hard (i.e., hard to approximate within a constant factor) even when demands are uniform \cite{garg1997primal}. \cite{chekuri2009unsplittable} obtained an $O(\log n)$-approximation. Recently, \cite{anagnostopoulos2014mazing} obtained a $2+\epsilon$-approximation for path topology.

In this work, we consider optimal power flow with inelastic demands as a mixed integer programming problem, which essentially generalizes CKP to a networked setting and UPF to consider complex-valued demands.

\section{Formulation and Notations}

We represent an electric distribution network by a graph $\cG=(\cV,\cE)$. The set of nodes $\cV$ denotes the electric buses, whereas the set of edges $\cE$ denotes the distribution lines. We index the nodes in $\cV$ by $\{0, 1..., |\cV|\}$, where the node $0$ denotes the generation source or the feeder to the main grid. 

A power flow in an alternating current (AC) electric network is characterized by a set of complex-valued quantities. Given a complex number $\psi \in \CC$, denote its real and imaginary components by $\psi^{\rm R} \triangleq \re(\psi)$ and $\psi^{\rm I} \triangleq \im(\psi)$ respectively, its complex conjugate by $\psi^\ast$, and its argument by ${\arg}(\psi)$.

Each node $i \in \cV\backslash\{0\}$ is associated with a load $\hat s_i \in \CC$. For node $i \in \cV$, we denote its voltage by $V_i \in \CC$. For each edge $e=(i,j) \in \cE$, we denote its current from $i$ to $j$ by $I_{i,j}$, its transmitted power by $S_{i,j}$, and its impedance by $z_{i,j} \in \CC$ (also denoted by $z_e$). A power flow in a steady state is described by a set of power flow equations: 
\begin{align} 
S_{i,j} & = V_i I_{i,j}^\ast,  \quad\forall (i,j) \in \cE \label{eq:1}\\
z_{i,j} I_{i,j} & =	V_i - V_j, \quad\forall (i,j) \in \cE  \label{eq:2}  \\
\hat s_j & = \sum_{i: (i,j) \in \cE}(S_{i,j} -  z_{i,j} |I_{i,j}|^2 ) - \sum_{l: (j,l)\in \cE} S_{j,l}, \ \forall j \in \cV  \label{eq:3} 
\end{align}

\subsection{Branch Flow Model for Radial Networks}

In particular, when $\cG$ is a radial network (i.e., a tree), node $0$ denotes the root of $\cG$. Without loss of generality, we assume that the root $0$ has only a single child, and the edge from the root to the child is denoted by $e_1$. 
If we denote an edge by a tuple $(i,j)$, then $i$ is referred to as the parent of $j$ (i.e., $i$ is the immediate upstream node from $j$ to the root $0$). Hence, Eqn.~\raf{eq:3} can be simplified as:
\begin{align}
& S_{i,j}  = \sum_{l: (j,l)\in \cE} S_{j,l} +\hat  s_j + z_{i,j} |I_{i,j}|^2  \quad\forall (i,j) \in \cE \\
& \hat s_0 + \sum_{j: (0,j)\in \cE}S_{0,j} = 0
\end{align}

As a result, the power flow equations (Eqns.~\raf{eq:1}-\raf{eq:3}) can be reformulated by the Branch Flow Model (BFM).
Let $v_i \triangleq |V_i|^2$ and $\ell_{i,j} \triangleq |I_{i,j}|^2$ be the magnitude square of voltage and current respectively. The BFM is given by the following:
\begin{align}
\ell_{i,j} &= \frac{|S_{i,j}|^2}{v_i}, \quad\forall (i,j) \in \cE \\
v_j &= v_i + |z_{i,j}|^2 \ell_{i,j} - 2 \re(z_{i,j}^\ast  S_{i,j}), \quad\forall (i,j) \in \cE \label{eq:vj}\\
S_{i,j} &= \sum_{l: (j,l)\in \cE} S_{j,l} + \hat s_j + z_{i,j}\ell_{i,j}, \quad\forall (i,j) \in \cE \label{eq:bf3}\\
\hat s_0 &=- \sum_{j: (0,j)\in \cE}S_{0,j}
\end{align} 
\ifsupplementary
   A formal proof can be found in the appendix.
\else
   A formal proof can be found in the technical report \cite{MCK16}.
\fi

\subsection{Utility Maximizing Optimal Power Flow Problem}
For each node $i\in\cV\backslash\{0\}$, there is a set of customers attached to $i$, denoted by $\cN_i$. Let $\cN \triangleq \cup_{i\in\cV\backslash\{0\}} \cN_i$ be the set of all customers. Among the customers, some have inelastic power demands, denoted by $\cI \subseteq\cN$, which are required to be either completely satisfied or curtailed. An example is an appliance that can be either switched on or off. The rest of customers, denoted by $\cF \triangleq \cN\backslash\cI$, have elastic demands, which can be partially satisfied. See an illustration in Fig.~\ref{fig:tree}.

Each customer $k\in \cN$ is associated with a tuple $(s_k, u_k)$, where $s_k\in \CC$ is a complex-valued demand, and $u_k\in \RR^+$ is the utility value when $k$'s demand ($s_k$) is completely satisfied. For customers with elastic demands, we assume that the utility value is proportional to the fraction of satisfied demand.   
We assign a control variable $x_k$ to each customer $k\in \cN$. If $k \in \cI$, then $x_k\in\{0,1\}$. Otherwise, if $k \in \cF$, then $x_k \in [0,1]$.

We observe that $\hat s_j = \sum_{k \in \cN_j} s_k x_k$. Hence, Eqn.~\raf{eq:bf3} can be reformulated as:  
\begin{equation}
	S_{i,j} = \sum_{l: (j,l)\in \cE} S_{j,l} + \sum_{k \in \cN_j}s_k x_k + z_{i,j}\ell_{i,j}
\end{equation}

Let $v_{\min}, v_{\max} \in\RR^+$ be the minimum and maximum allowable voltage magnitude square at any node, and $C_{i,j}\in \RR^+$ be the maximum allowable apparent power on edge $(i,j) \in \cE$. We consider two common operating constraints:
\begin{itemize}

\item ({\em Power Capacity Constraints}): $|S_e|\le C_e, \ \forall e \in \cE$.

\item ({\em Voltage Constraints}): $v_{\min} \le v_{j} \le v_{\max}, \ \forall j \in \cV$.

\end{itemize}

The goal of demand response is to decide a solution of control variables $(x_k)_{k \in \cN}$ that maximizes the total utility of satisfiable customers subject to the operating constraints. 
We define a utility maximizing optimal power flow (\textsc{MaxOPF}) by the following mixed integer programming problem.
\begin{align}
&\textsc{Input}:\; v_0;v_{\min}; v_{\max}; (u_k,s_k)_{k\in \cN}; (z_{i,j},C_{i,j})_{(i,j)\in \cE} \notag \\
&\textsc{Output}:\; s_0 ; (v_i)_{i\in \cV}; (S_{i,j}, \ell_{i,j})_{(i,j)\in \cE}; (x_k)_{k \in \cN} \notag \\
&\textsc{(MaxOPF)}\quad \max_{\substack{x_k, v_i,\ell_{i,j}, S_{i,j} \;\;}} \sum_{k \in \cN} u_k x_k,  \notag \\
\text{s.t.} \ & 	\ell_{i,j} = \frac{|S_{i,j}|^2}{v_i},  \qquad \forall (i,j) \in \cE  \label{eq:c0}\\
&  S_{i,j}=  \sum_{k \in \cN_j} s_k x_k  + \sum_{l: (j,l)\in \cE} S_{j,l} + z_{i,j}\ell_{i,j}, \ \forall (i,j) \in \cE  \label{eq:S}\\
& s_0 =-S_{0,1}\label{eq:S0}\\ 
&	v_j = v_i + |z_{i,j}|^2 \ell_{i,j} - 2 \re(z_{i,j}^\ast  S_{i,j}), \quad  \forall (i,j) \in \cE \label{eq:cv}\\
& |S_{i,j}| \le C_{i,j} ,  \qquad \forall (i,j) \in \cE \label{eq:c1}\\
& v_{\min} \le v_j \le v_{\max},  \quad  \forall j \in\cV\backslash \{0\}\label{eq:c2}\\
& x_k \in \{0,1\},\qquad \forall k \in \cI\\
& x_k \in [0,1], \qquad \forall k \in \cF\\
& v_i \in \RR^+, \ \ell_{i,j} \in \RR^+, \ S_{i,j} \in \CC \label{clast}
\end{align}

\begin{figure}[!htb]
\begin{center} 
	\includegraphics[scale=.6]{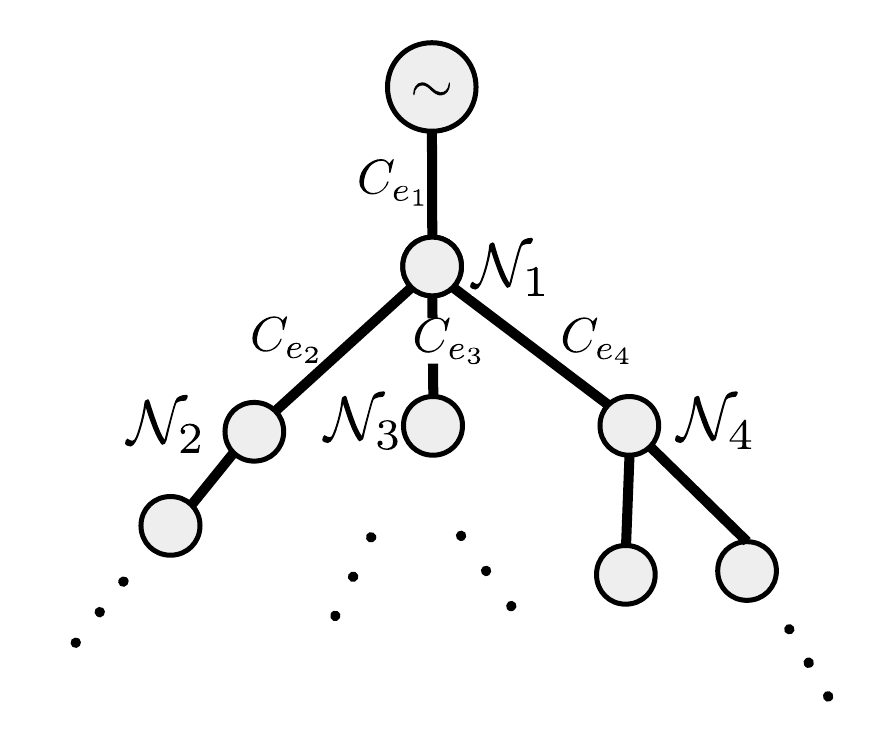}
\end{center}
\caption{An illustration of a radial network.}
\label{fig:tree}
\end{figure}  

In particular, we denote by {\sc MaxOPF$_{\sf C}$} when \textsc{MaxOPF} considers only power capacity constraints Cons.~\raf{eq:c1} without Cons.~\raf{eq:c2}, whereas by {\sc MaxOPF$_{\sf V}$} when \textsc{MaxOPF} considers only voltage constraints Cons.~\raf{eq:c2} without Cons.~\raf{eq:c1}. 

Note that a similar problem has been studied in \cite{li2012optimization}. But there are several differences: (1)  \cite{li2012optimization} considers only elastic demands, whereas we consider a mix of elastic and inelastic demands. (2)  \cite{li2012optimization} does not consider a power capacity constraint on each edge. (3)  \cite{li2012optimization} also considers transmission power loss in the objective function, whereas we consider a simpler problem that maximizes the total utility. Our problem provides the foundation for solving the more general problem. 

We observe that Cons.~\raf{eq:S0} is always satisfied and can be removed. As in \cite{gan2015exact}, we assume that the resistance and reactance for any edge is strictly positive (i.e.,  $z_e^{\rm R}, z_e^{\rm I} > 0$).

Note that \textsc{MaxOPF} is a difficult problem even when all demands are elastic (i.e., $\cN = \cF$), because of the non-convexity of Cons.~\raf{eq:c0}. Several convex relaxations have been shown to be exact under certain conditions \cite{gan2015exact,low2014convex1,low2014convex2}. The presence of inelastic demands makes the problem much harder (even for an approximate solution).

\subsection{Approximation Solutions}

Given a solution $x \triangleq (x_k)_{k \in \cN}$, we denote the total utility by $u(x)\triangleq \sum_{k \in \cN} u_k x_k$. We denote an optimal solution of \textsc{MaxOPF} by $x^\ast$ and $\OPT \triangleq u(x^\ast)$.

\begin{definition}
For $\alpha\in(0,1]$ and $\beta\ge1$, we define a bi-criteria $(\alpha,\beta)$-approximation to \textsc{MaxOPF} as a solution $\hat x =\big((\hat{x}_k)_{k \in \cI}, (\hat x_k)_{k\in\cF}\big) \in \{0, 1\}^{|\cI|}\times[0,1]^{|\cF|}$ satisfying
\begin{align}
& \text{\em Cons.~\raf{eq:c0}, \raf{eq:S}, \raf{eq:S0}, \raf{eq:cv}} \notag \\ 
& |S_{i,j}| \le \beta C_{i,j} ,  \qquad \forall (i,j) \in \cE \label{eq:betac1}\\
& \frac{1}{\beta} v_{\min} \le v_j \le \beta v_{\max},  \quad  \forall j \in\cV\backslash \{0\}\label{eq:betac2}
\end{align}
such that $u(\hat x) \ge \alpha \OPT$.
\end{definition}
For {\sc MaxOPF$_{\sf V}$}, Cons.~\raf{eq:betac1} will be dropped. For {\sc MaxOPF$_{\sf C}$}, Cons.~\raf{eq:betac2} will be dropped. 

In the above definition, $\alpha$ characterizes the approximation gap between an approximate solution and the optimal solution, whereas $\beta$ characterizes the violation bound of constraints. When $\beta=1$, an $(\alpha,\beta)$-approximation will be simply called an $\alpha$-approximation.


\section{Hardness Results}\label{sec:hard}

In this section, we present the hardness results of \textsc{MaxOPF}, which show that the general form of \textsc{MaxOPF} is hard to approximate. To study the hardness of approximation, we consider that all demands are inelastic (i.e., $\cN= \cI$ and $n = |\cI|$). First, we show that \textsc{MaxOPF} considering only voltage constraints is inapproximable by any efficient algorithm for any approximation gap and any violation bound polynomial in $n$, even for a one-edge network. Second, we show that \textsc{MaxOPF} with only power capacity constraints is inapproximable by any efficient algorithm in general (non-tree) networks, even in purely resistive electric networks. These hardness results motivate us to develop approximation algorithms for relaxed problem versions in the next section. 
\ifsupplementary
The proofs can be found in the appendix.
\else
{\color{black} The full proofs can be found in the technical report \cite{MCK16}.}
\fi

\subsection{Hardness of \textsc{MaxOPF}$_{\sf V}$}

\begin{theorem} \label{thm:OPFV-hard}
Unless {\rm P}={\rm NP}, there is no $(\alpha,\beta)$-approximation for {\sc MaxOPF$_{\sf V}$} (even when $|\cE| = 1$) by a polynomial-time algorithm in $n$, for any $\alpha$ and $\beta$ that have polynomial length in $n$.
\end{theorem}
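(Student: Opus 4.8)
The plan is to reduce from \textsc{Subset-Sum}: given positive integers $a_1,\dots,a_n$ of polynomial bit-length and a target $T\ge 1$ (the case $T=0$ being trivial), decide whether some subset of $\{a_1,\dots,a_n\}$ sums to exactly $T$. I will construct, in polynomial time, a single-edge instance of {\sc MaxOPF$_{\sf V}$} --- root $0$, one child $1$, the edge $(0,1)$ with impedance $z=1+i$ (so $z^{\rm R},z^{\rm I}>0$, as assumed), and all customers inelastic and attached to node~$1$ --- such that the instance has a feasible solution of positive utility \emph{iff} the \textsc{Subset-Sum} instance is a ``yes'' instance, and --- this is the heart of the matter --- this equivalence survives the bi-criteria relaxation for \emph{every} $\alpha$ and $\beta$. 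The key structural observation is that the $(\alpha,\beta)$-relaxation only loosens the voltage bounds (Cons.~\raf{eq:betac2}); the power-flow equations \raf{eq:c0}, \raf{eq:S}, \raf{eq:cv} must still hold \emph{exactly}, and on a single line these equations have a real solution only when a certain discriminant is nonnegative. I encode \textsc{Subset-Sum} inside that non-relaxable discriminant.

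Concretely, I would first reduce the single-line equations: \raf{eq:c0} and \raf{eq:S} give $\ell=|S|^2/v_0$ with $S=\hat s+z\ell$ and $\hat s:=\sum_k s_kx_k$; eliminating $S$ yields $|z|^2\ell^2+(2P-v_0)\ell+|\hat s|^2=0$, where $P+iQ:=z^{\ast}\hat s$ (so $P,Q$ are linear in $x$ and $|\hat s|^2=(P^2+Q^2)/|z|^2$). Hence a valid power flow exists for a binary selection $x$ exactly when $v_0^2-4v_0P(x)-4Q(x)^2\ge 0$ (the resulting roots $\ell$ are then automatically $\ge 0$), and in that case \raf{eq:cv} gives $v_1=\tfrac12\bigl((v_0-2P(x))\pm\sqrt{v_0^2-4v_0P(x)-4Q(x)^2}\bigr)$. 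Now fix $v_0=1$ and choose customers with $z^{\ast}s_k=ia_k$ for $k=1,\dots,n$ (encoding the weights; explicitly $s_k=\tfrac{a_k}{2}(-1+i)$) plus one extra customer $b$ with $z^{\ast}s_b=\tfrac14-iT$ (explicitly $s_b=(\tfrac14-iT)\tfrac{1+i}{2}$), all of polynomial bit-length, and give $b$ utility $1$ (the other utilities are immaterial, say $1$). If $x_b=1$ then $P(x)=\tfrac14=v_0/4$ and $Q(x)=\sum_{k=1}^n a_kx_k-T$, so the solvability condition becomes $1+4Q(x)^2\le 1$, i.e.\ $Q(x)=0$, i.e.\ the chosen subset sums to exactly $T$; if $x_b=0$ then $P(x)=0$ and the condition is $\bigl(\sum_k a_kx_k\bigr)^2\le\tfrac14$, which (each $a_k\ge 1$) forces the empty selection. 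Thus any positive-utility selection that admits a valid power flow must switch $b$ on together with a subset of the $a_k$'s summing to $T$. Finally I set $v_{\min}=v_0/4$ and $v_{\max}=v_0$, so that the ``target'' selection (which lies at the voltage-collapse point, where the two branches merge into the double root $v_1=v_0/4$) and the all-off selection (using the high-voltage root $v_1=v_0$) are both genuinely feasible; hence $\OPT\ge 1>0$ on ``yes'' instances and $\OPT=0$ on ``no'' instances.

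To conclude, suppose a polynomial-time algorithm $\mathcal A$ always outputs an $(\alpha,\beta)$-approximation. On a ``yes'' instance $\OPT\ge 1$, so $\mathcal A$ returns a $\beta$-feasible $\hat x$ with $u(\hat x)\ge\alpha\,\OPT>0$. On a ``no'' instance, every selection of positive utility violates the power-flow equations --- which the relaxation does \emph{not} touch --- so no $\beta$-feasible solution has positive utility and $\mathcal A$'s output has value $0$. Hence the sign of $u(\mathcal A(\cdot))$ decides \textsc{Subset-Sum} in polynomial time, forcing ${\rm P}={\rm NP}$. Because the construction never refers to $\alpha$ or $\beta$, it in fact rules out $(\alpha,\beta)$-approximation for \emph{all} $\alpha\in(0,1]$ and $\beta\ge 1$, a fortiori for all $\alpha,\beta$ of polynomial length in $n$.

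I expect the main obstacle to be the multiplicative slack $\beta$, and the reason the obvious approach fails: one cannot encode \textsc{Subset-Sum} through the voltage window $v_{\min}\le v_1\le v_{\max}$ itself, since on a single line $v_1$ changes only by a multiplicative factor $1+O(1/\sqrt{D})$ as the load moves near a target, where $D\sim 2^{\poly(n)}$ is the \textsc{Subset-Sum} scale, and the bi-criteria relaxation then erases this. The idea that sidesteps the obstacle is to place the hard constraint not in a relaxable inequality but in the \emph{solvability} (discriminant $\ge 0$) of the power-flow equations, which the relaxation leaves intact. What remains is routine verification: that the ``target'' and ``all-off'' solutions satisfy every equation with $v_i,\ell_{i,j}\in\RR^{+}$, and that no other selection slips through with positive utility --- a selection with $x_b=1$ needs its subset sum to equal $T$, a selection with $x_b=0$ and any customer on needs $\sum_k a_kx_k\le\tfrac12$ (impossible), and $b$ alone needs $T=0$.
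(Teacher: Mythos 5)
Your proof is correct, but it works by a genuinely different mechanism than the paper's. The paper also reduces from \textsc{SubSum} on a single edge with $z=1+\mathbf{i}$, but it encodes the target equation in the \emph{voltage window}: it scales every demand by $\overline\Lambda \triangleq \max\{v_0-\tfrac{1}{\beta}v_{\min},\,\beta v_{\max}-v_0\}$ so that the $\beta$-relaxed window, measured in units of one integer step of $\sum_k a_kx_k-B$, has width strictly less than $1$, which pins the integer combination to zero; this is why the theorem statement (and the paper's proof) needs $\alpha,\beta$ of polynomial length, and why the gadget requires a demand with $\re(z_e^\ast s_k)<0$. You instead put the hard constraint into the \emph{solvability} of the power-flow quadratic $|z|^2\ell^2+(2P-v_0)\ell+|\hat s|^2=0$, i.e.\ the discriminant $v_0^2-4v_0P-4Q^2\ge 0$, which the bi-criteria relaxation never touches; your algebra (the reduction to $P,Q$, the sign of the roots, the values $v_1\in\{1,\tfrac14\}$ for the two surviving selections, and the case analysis on $x_b$) checks out, and your construction is independent of $\alpha,\beta$, so it actually rules out $(\alpha,\beta)$-approximation for \emph{all} $\alpha\in(0,1]$, $\beta\ge 1$ rather than only those of polynomial length --- a formally stronger conclusion. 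Two caveats are worth recording. First, your hardness lives exactly at the voltage-collapse knife edge ($P=v_0/4$, zero discriminant), so it is essentially the ``AC feasibility is NP-hard'' phenomenon that the paper explicitly distinguishes itself from in the remark following Theorem~\ref{thm:OPFC-hard}; it does not isolate the role of the two-sided voltage constraint at all (your voltage window is only used to make the two intended solutions feasible), it uses only demands with $\re(z_e^\ast s_k)\ge 0$ (the inductive case that Remark~1 deliberately leaves open), and it says nothing in the bounded-loss regime $\ell_e\le\bar\ell_e$ underlying the paper's positive results, whereas the paper's construction keeps the power flow well-posed and attributes the hardness specifically to the $v_{\min}\le v_j\le v_{\max}$ sandwich. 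Second, your stated motivation for avoiding the direct route --- that the relaxation ``erases'' a voltage-window encoding --- is mistaken: the paper's $\overline\Lambda$-scaling makes one integer step as large as the relaxed window, so the direct route does go through.
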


\begin{remark} 
Theorem~\ref{thm:OPFV-hard} implies that when both upper and lower bounds of voltage ($v_{\min} \le v_j \le v_{\max}$) are considered, no practical approximation algorithm for {\sc MaxOPF$_{\sf V}$} exists. Hence, one has to relax the voltage constraints in order to obtain practical algorithms. 
In particular, Theorem~\ref{thm:OPFV-hard} holds whenever $\re(z^\ast_e s_k)$ is allowed to be arbitrary. If all demands are inductive (i.e., $\re(z^\ast_e s_k)\ge 0$), then the hardness result does not necessarily hold. 
{\color{black} Also, by a slight modification in the proof of the theorem, Theorem~\ref{thm:OPFV-hard} holds when more than one generator (or capacitor) is allowed, because  a generator (or capacitor) is associated with negative power in our formulation.} 
\end{remark}

\subsection{Hardness of \textsc{MaxOPF}$_{\sf C}$}

Next, we consider \textsc{MaxOPF}$_{\sf C}$ in general (non-tree) networks. The formulation of \textsc{MaxOPF} will be slightly modified to accommodate a general topology. More precisely, Cons.~\raf{eq:S}-\raf{eq:S0} are replaced by:
\begin{equation}
\sum_{i: (i,j) \in \cE}(S_{i,j} -  z_{i,j} |I_{i,j}|^2 ) - \sum_{k: (j,k)\in \cE} S_{j,k} = \sum_{k\in \cN _j}  s_k x_k,  \quad\forall j \in \cV \label{eq:genS}
\end{equation}
and Cons.~\raf{eq:c0}, \raf{eq:cv} by \raf{eq:1}, \raf{eq:2}. 

\medskip
\begin{definition}
For $\alpha\in(0,1]$ and $\beta\ge 1$, we define a bi-criteria $(\alpha,\beta)$-approximation to \textsc{MaxOPF}$_{\sf C}$ as a solution $\hat x =\big((\hat{x}_k)_{k \in \cI}, (\hat x_k)_{k\in\cF}\big) \in \{0, 1\}^{|\cI|}\times[0,1]^{|\cF|}$ satisfying
\begin{align}
& \text{\em Cons.~\raf{eq:1}, \raf{eq:2}, \raf{eq:genS}} \notag \\ 
& |S_{i,j}| \le \beta C_{i,j} ,  \qquad \forall (i,j) \in \cE 
\end{align}
such that $u(\hat x) \ge \alpha \OPT$.
\end{definition}

\begin{theorem} \label{thm:OPFC-hard}
Unless P=NP, there exists no ($\alpha, \beta$)-approximation for \textsc{MaxOPF}$_{\sf C}$ in general networks, for any $\alpha$ and $\beta$ having  polynomial length in $n$, even in purely resistive electric networks (i.e., ${\rm Im}(z_{i,j})= 0$ for all $(i, j) \in \cE$ and ${\rm Im}(s_k) = 0$ for all $k \in {\cal N}$).
\end{theorem}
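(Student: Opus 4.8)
The plan is to reduce from a known hard problem in which binary routing decisions on a graph interact through capacity constraints — the natural candidate being the maximum edge-disjoint paths problem (MEDP) in directed graphs, which by Guruswami et al.\ is $\mathrm{NP}$-hard to approximate within $n^{1/2-\epsilon}$, or, to get inapproximability for \emph{any} polynomial $\alpha,\beta$, an exact decision version such as $2$-commodity integral flow (known $\mathrm{NP}$-complete even for unit demands). Given an instance of the latter on a directed graph, I would build a purely resistive network whose topology mirrors the graph, place one inelastic customer per commodity with a real demand, and choose the edge resistances and the capacities $C_{i,j}$ so that routing customer $k$'s power along a given path forces a prescribed real power injection on each edge of that path. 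Because the network is resistive (${\rm Im}(z_e)=0$) and all $s_k$ are real, every $S_{i,j}$, every $I_{i,j}$ and every $v_i$ stays real, so the quadratic flow equations \raf{eq:1}--\raf{eq:2} collapse to essentially DC flow equations; this is what makes the construction tractable and is the point of restricting to the resistive case.

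The key steps, in order: (i) From the graph $G$ of the source problem, construct $\cG$ by keeping the same vertices and edges (adding a single super-source node $0$ if needed), and for each commodity $k$ create a customer with real demand $s_k$ attached so that satisfying $x_k=1$ corresponds to pushing one unit of (real) flow from its source to its sink. (ii) Set the impedances $z_{i,j}$ small and positive so that line losses $z_{i,j}|I_{i,j}|^2$ are negligible relative to the transmitted powers, so Kirchhoff's current/power laws \raf{eq:genS} become, up to lower-order terms, ordinary flow conservation; make the capacities $C_{i,j}$ equal to the edge capacities in the source instance (scaled appropriately). (iii) Argue that any feasible $x$ with $u(\hat x)>0$ yields, via the (now essentially linear) flow it induces, a feasible integral flow/path packing in the source instance whose value is at least a $1/\beta$-fraction of $u(\hat x)$ after accounting for the $\beta$-relaxed capacities — and conversely an optimal flow gives $\OPT$ large. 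Hence an $(\alpha,\beta)$-approximation with $\alpha,\beta$ of polynomial length would decide the source problem (or approximate MEDP below its hardness threshold) in polynomial time, contradicting $\mathrm{P}\neq\mathrm{NP}$.

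The main obstacle is controlling the nonlinearity that does not completely vanish: even in the resistive case the term $z_{i,j}|I_{i,j}|^2$ in \raf{eq:genS} and the voltage-drop equation \raf{eq:2} couple the flows quadratically, and the feasible-region definition for \textsc{MaxOPF}$_{\sf C}$ still requires \raf{eq:1}--\raf{eq:2} to hold exactly (no $\beta$-slack there). I would handle this by a scaling argument: choose all resistances $z_{i,j}=\delta r_{i,j}$ with $\delta\to 0$ while keeping the $r_{i,j}$ fixed, so that the loss terms are $O(\delta)$ and cannot change which integral routing patterns are feasible once capacities are set with a fixed additive margin; one must check that the real power flow still has a consistent solution of \raf{eq:1}--\raf{eq:2} for each routing pattern (existence of real voltages solving the resulting system on a connected graph), and that these auxiliary $v_i$ never need to be constrained since \textsc{MaxOPF}$_{\sf C}$ drops Cons.~\raf{eq:betac2}. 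A secondary subtlety is that power, unlike abstract flow, splits at nodes according to the physics rather than being freely routable; restricting to topologies (e.g.\ a layered or series-parallel gadget per commodity) where the only feasible power flow realizing $x_k=1$ is the intended path avoids this, so the reduction should be carried out from a graph family where such gadget constructions are available while preserving the source problem's hardness.
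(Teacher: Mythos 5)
Your proposal takes a genuinely different route from the paper (which reduces from \textsc{SubSum} using a single three-node cycle), and it has gaps that I do not think can be repaired along the lines you sketch. The central problem is that electric power flow is not routable: once the binary decisions $(x_k)$ are fixed, Kirchhoff's laws determine the flows (the solver does not choose paths), so the combinatorial structure of MEDP or $2$-commodity integral flow --- which is entirely about \emph{choosing} among exponentially many routings under shared capacities --- does not transfer. Your own fix, gadgets that make ``the only feasible power flow realizing $x_k=1$'' the intended path, would remove exactly the routing freedom that makes the source problem hard, collapsing the reduction. A second, independent gap concerns the quantifiers on $\alpha$ and $\beta$: the theorem asserts that no $(\alpha,\beta)$-approximation exists for \emph{any} $\alpha,\beta$ of polynomial encoding length, which includes exponentially small $\alpha$ and exponentially large $\beta$. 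A reduction from MEDP gives at best $n^{1/2-\epsilon}$-hardness of the objective, and for unit-capacity unit-demand instances of MEDP or $2$-commodity flow, allowing capacities to be violated by any factor $\beta\ge 2$ makes the instance trivially feasible, so the $\beta$-robustness fails outright.

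The paper's proof avoids both issues. It builds a triangle $\{0,a,b\}$ with all resistances $1$, attaches real demands $\overline\Lambda$-scaled copies of the \textsc{SubSum} integers $a_1,\dots,a_m$ at node $a$ and $B$ at node $b$, and gives customer $m+1$ utility $1$ while all others total less than $\alpha$ --- so any solution of utility $\ge\alpha$ must serve customer $m+1$, which handles arbitrary $\alpha$. Using Ohm's law and power balance it derives the identity $S_a-S_b=(V_b-V_a)(2V_b+2V_a-V_0)$, so the capacity of the chord $(a,b)$ directly bounds the \emph{imbalance} $\lvert\sum_k a_kx_k-B\rvert$; choosing $C_{a,b}$ small enough that even the $\beta$-violated constraint forces this imbalance below $1$, integrality of the $a_k$ and $B$ forces it to be exactly $0$. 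This integrality-gap mechanism is what absorbs an arbitrary $\beta$, and the cycle is essential because the chord constraint is what measures the difference of two injections --- something no tree edge or loss-scaling argument provides. If you want to salvage your plan, you would need to replace the routing source problem with a numeric one and find an analogous constraint in your construction whose $\beta$-relaxation still pins an integer quantity to an interval of length less than one.
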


\begin{remark} The ramification of Theorem~\ref{thm:OPFC-hard} is that \textsc{MaxOPF}$_{\sf C}$ is inapproximable in general networks (e.g., topologies with cycles). To derive practical algorithms, one has to consider acyclic topologies (i.e., trees).
 
There are other papers which studied NP-hardness of AC power flow problems \cite{LGH16AChard,verma09thesis,bv15AChard}. But there are several differences compared to our hardness results. The results in \cite{LGH16AChard,verma09thesis,bv15AChard} consider a different set of constraints, namely the phase angle difference on each link is bounded by some threshold, in addition to the voltage constraints. In our paper, we consider, either voltage constraints alone, or power capacity constraints alone (the latter can be related to phase angle constraints). The setting in \cite{LGH16AChard,verma09thesis,bv15AChard} with no binary variables implies that checking feasibility is already NP-hard; on the other hand, since we allow binary variables associated with loads in our setting, the all-zero solution (and all voltages equal to $v_{\text min}$ in case $v_{\text min}>0$) is trivially feasible. In this case, the non-trivial question is about optimization rather than decision. {\color{black}  While the results in \cite{LGH16AChard,verma09thesis,bv15AChard} show NP-hardness of (continuous) AC feasibility, we study the discrete problem. We show hardness of approximation, even if we allow the capacity/voltage constraints to be violated by some multiplicative parameter $\beta$. }
\end{remark}
\section{Approximation Algorithms}\label{sec:ind}

In this section, we present an approximation algorithm for \textsc{MaxOPF}. Motivated by the hardness results in the last section, we relax \textsc{MaxOPF} to consider one-sided voltage constraints and tree topology. First, we define a simplified model (called \textsc{sMaxOPF}) assuming bounded transmission power loss. We then provide an efficient approximation algorithm for \textsc{sMaxOPF} considering all inelastic demands, based on an approximation algorithm for the unsplittable flow problem. We then analyze the approximation ratio of our approximation algorithm. Next, we adapt our algorithm to \textsc{MaxOPF} considering a mix of inelastic and elastic demands.

\subsection{\mbox{Simplified Utility Maximizing Optimal Power Flow Problem}}

In this section, we consider a simplified model in tree topology (which is related to linear DistFlow model proposed in \cite{baran1989placement}, \cite{baran1989sizing}). The basic idea is that the transmission power loss in an electric network is usually small. One can approximate the optimal power flow model by upper bounding the terms associated with transmission power loss (i.e., $|z_{e}|^2 \ell_{e}, z_{e}\ell_{e}$), and then derive a feasible solution without explicitly considering the transmission power loss. 

First, denote by $P_e \subseteq \cE$ the path from edge $e$ to the root $0$, and by $P_k \subseteq \cE$ the path from node $k$ to the root $0$.
Note that $\ell_{i,j} = \frac{|S_{i,j}|^2}{v_i}$ is positive. We assume that $\ell_e \le \bar\ell_e$ for a constant $\bar\ell_e$ independent of solution $(x_k)_{k \in {\cal N}}$.

We rewrite Eqn.~\raf{eq:S} in \textsc{MaxOPF} by recursively substituting $S_{j,l}$:
\begin{equation}
S_e = \sum_{k: e  \in P_k} s_k x_k + \sum_{e': e \in P_{e'}} z_{e'}  \ell_{e'} \label{eq:se}
\end{equation}
Note that 
\begin{align*}
 |S_e| & \le
\bigg|\sum_{k: e  \in P_k} s_k x_k\bigg| + \bigg|\sum_{e': e \in P_{e'}} z_{e'}  \ell_{e'}\bigg| \\
& \le \bigg|\sum_{k: e  \in P_k} s_k x_k\bigg| + \bigg|\sum_{e': e \in P_{e'}} z_{e'} \bar\ell_e\bigg| 
\end{align*}
Let $\hat{L}_e \triangleq \big|\sum_{e': e \in P_{e'}} z_{e'}  \bar\ell_e\big|$. Thus, the power capacity constraint Cons.~\raf{eq:c1} can be implied by the following constraint:
$$
\bigg| \sum_{k: e  \in P_k} s_k x_k \bigg| \le \hat{C}_e \triangleq \max\{ C_e - \hat{L}_e, 0 \}
$$

Also, we rewrite Cons.~\raf{eq:cv} recursively by substituting $v_i$:
$$
v_j = v_0 -2\sum_{ e' \in P_{e} } \re(z_{e'}^\ast  S_{e'}) +  \sum_{ e' \in P_{e} } |z_{e'}|^2 \ell_{e'},
$$
where $e=(i,j)$. Hence, Cons.~\raf{eq:c2} becomes:
\begin{align}
&\tfrac{1}{2}(v_0 - v_{\max}  + \sum_{ e' \in P_{e} } |z_{e'}|^2 \ell_{e'} \notag) \\
\le \ &   \sum_{ e' \in P_{e} } \re(z_{e'}^\ast  S_{e'})  \le \tfrac{1}{2}(v_0 - v_{\min}+ \sum_{ e' \in P_{e} } |z_{e'}|^2 \ell_{e'} ) \label{eq:c2-e}
\end{align}
Let $\underline{V_e} \triangleq \tfrac{1}{2}(v_0 \mbox{-} v_{\max} + \sum_{ e' \in P_{e} } |z_{e'}|^2 \bar\ell_e)$ and $\overline{V_e} \triangleq \tfrac{1}{2}(v_0 \mbox{-} v_{\min})$.
Thus, voltage constraint Cons.~\raf{eq:c2-e} can be implied by the following constraints:
\begin{align*}
\underline{V_e} \le & \sum_{ e' \in P_{e} } \sum_{k: e'  \in P_k} (z^{\rm R}_{e'} s_k^{\rm R}+ z^{\rm I}_{e'} s_k^{\rm I} ) x_k \le \overline{V_e}, \\
\Leftrightarrow \quad \underline{V_e} \le & \sum_{k \in \cN} \Big( \sum_{ e'  \in P_k \cap P_e} z^{\rm R}_{e'} s_k^{\rm R}+ z^{\rm I}_{e'} s_k^{\rm I} \Big) x_k \le \overline{V_e},
\end{align*}
where the second statement follows from exchanging the summation operators.

Therefore, we define a simplified utility maximizing optimal power flow problem (\textsc{sMaxOPF}), such that a feasible solution to \textsc{sMaxOPF} is a feasible solution to \textsc{MaxOPF}.
\begin{align}
&\textsc{(sMaxOPF)}\quad \max_{x_k} \sum_{k \in \cN} u_k x_k \notag\\
\text{s.t.} \ 	& \bigg| \sum_{k: e  \in P_k} s_k x_k \bigg| \le \hat{C}_{e}, \qquad \forall  e \in \cE\label{con} \\
&\underline{V_e} \le \sum_{k \in \cN} \Big( \sum_{ e'  \in P_k \cap P_e} z^{\rm R}_{e'} s_k^{\rm R}+ z^{\rm I}_{e'} s_k^{\rm I} \Big) x_k \le \overline{V_e}, \ \forall  e \in \cE \label{conV}\\
& x_k \in \{0,1\},\quad \forall k \in \cI\\
& x_k \in [0,1], \quad \forall k \in \cF 
\end{align}

If we assume $z_{e} \ell_{e} \to 0$ and $|z_{e}|^2 \ell_{e} \to 0$, this is known as the linear DistFlow model proposed in \cite{baran1989placement}, \cite{baran1989sizing}.

We denote by {\sc sMaxOPF$_{\sf C}$} when \textsc{sMaxOPF} considers only power capacity constraints Cons.~\raf{con} without Cons.~\raf{conV}, whereas by {\sc sMaxOPF$_{\sf V}$} when \textsc{sMaxOPF} considers only voltage constraints Cons.~\raf{conV} without Cons.~\raf{con}. 

\subsection{Approximation Algorithm for \textsc{sMaxOPF} with All Inelastic Demands}

\begin{algorithm}[ht!]
\caption{\mbox{\sc InelasDemAlloc}$[(u_k,s_k)_{k \in \cN}]$}
\label{alg1}
\begin{algorithmic}[1]
	\Require customers' utilities $(u_k)$ and inelastic demands $(s_k)$
	\State Let $L \triangleq \frac{u_{\max}}{n^2}$ and $u_{\max} = \max_{k\in\cN}u_k$
	\State Let $\bar u_k \triangleq \big\lfloor\frac{u_k}{L}\big\rfloor$ for each $k \in \cN$
	\Statex \Comment{Group customers according to the range of their utilities}
	\State $\hat\cN_1 \leftarrow \{k \in \cN \mid \bar u_k \in [0, 2)\}$ \label{alg:partition0}
	\For {$i = 2,...,\lceil 2\log n \rceil + 1 $}
	\State $\hat\cN_i \leftarrow \{k \in \cN \mid \bar u_k \in [2^{i-1}, 2^{i})\}$ \label{alg:partition}		
	\EndFor		
	\Statex \Comment{Call {\sc GreedyAlloc} to solve sub-problems with $\hat\cN_i$}
	\For {$i = 1,...,\lceil 2\log n \rceil+1 $}
	\State $M_i \leftarrow  \mbox{\sc GreedyAlloc} [(s_k)_{k \in \hat\cN_i}]$ \label{alg:up}
	\EndFor
	\Statex \Comment{Return the group with maximum utility}
	\State \Return $M$ such that $u(M)=\displaystyle \max_{i = 1,...,\lceil 2\log n \rceil +1} u(M_i)$
\end{algorithmic}
\end{algorithm}

\begin{algorithm}[htb!]
\caption{$\mbox{\sc GreedyAlloc} [(s_k)_{k \in \hat\cN}]$}\label{alg2}
\begin{algorithmic}[1]
\Require customers' inelastic demands $(s_k)$

\State Sort customers in $\hat\cN$ according to the magnitudes of demands: 
\Statex \qquad\qquad  $|s_1| \le |s_2| \le ... \le \big|s_{|\hat\cN|}\big|$

\State $M \leftarrow \varnothing$
\For{each $k \in \hat\cN$}
\State {\bf if} $\displaystyle \bigg|\sum_{k' \in M \cup \{k\}: e \in P_{k'}} s_{k'} \bigg| \le \hat{C}_e, \ \forall e \in \cE$ {\bf and}  \label{alg:gd_feasible2} 
\Statex \  \ 
$\displaystyle \underline{V_e} \le \sum_{ e' \in P_{e} } \sum_{k' \in M \cup \{k\}: e'\in P_{k'}} z^{\rm R}_{e'} s_{k'}^{\rm R}+ z^{\rm I}_{e'} s_{k'}^{\rm I}\le \overline{V_e}, \ \forall e \in \cE$ 
\State {\bf then} $ M \leftarrow M \cup \{k\}$ 
\EndFor
			
\State \Return $M$
\end{algorithmic}
\end{algorithm}

In this section, we provide an approximation algorithm  ({\sc InelasDemAlloc}) to \textsc{sMaxOPF} considering all inelastic demands (i.e., $\cN= \cI$). This algorithm is inspired by an $O(\log n)$-approximation algorithm for the unsplittable flow problem in \cite{CKS06}. 

Algorithm~\ref{alg1} ({\sc InelasDemAlloc}) first normalizes the customers' utilities by $\bar u_k \triangleq \big\lfloor \frac{u_k}{L} \big\rfloor$. Then it partitions customers into groups $(\hat\cN_1,...,\hat\cN_{\lceil 2\log n \rceil  + 1})$ according to the ranges of normalized utilities, such that the utilities of the $i$-th group are within $[2^{i-1}, 2^{i})$. For each group, it next calls {\sc GreedyAlloc} to return a feasible solution for the group of customers. Finally, {\sc InelasDemAlloc} returns the output solution as the group from {\sc GreedyAlloc} with the maximum utility. 

Algorithm~\ref{alg2} ({\sc GreedyAlloc}) first sorts the customers in a non-decreasing order according to the magnitudes of their demands. Then, it packs their demands greedily sequentially in that order, if the power capacity constraints or voltage constraints are not violated. The customers who can be satisfied are placed in the set $M$.

Evidently, both {\sc InelasDemAlloc} and {\sc GreedyAlloc} have polynomial running time in $n$. 

\subsection{Analysis of Approximation Ratio for \textsc{sMaxOPF}}
 
We first provide an intuition of {\sc InelasDemAlloc} and {\sc GreedyAlloc}. {\sc InelasDemAlloc} groups the customers with similar utilities, whereas {\sc GreedyAlloc} finds a solution that maximizes the number of satisfied customers greedily. If {\sc GreedyAlloc} can find a solution that is close to the optimal solution, when all customers have the same utility, then {\sc InelasDemAlloc} can find a group that approximates the optimal solution in general.

We denote by \textsc{GreedyAlloc$_{\sf C}$} when solving {\sc sMaxOPF$_{\sf C}$} (i.e., $\underline{V_e} \to - \infty$ and $\overline{V_e} \to \infty$), and by \textsc{GreedyAlloc$_{\sf V}$} when solving {\sc sMaxOPF$_{\sf V}$} (i.e., $\hat{C}_e \to \infty$ for all $e \in \cE$).

\ifsupplementary
In the appendix,
\else
In the technical report \cite{MCK16},
\fi 
we also show that {\sc sMaxOPF$_{\sf V}$} with both upper and lower voltage constraints are also inapproximable by any efficient algorithm for any approximation gap and any violation bound polynomial in $n$. Hence, we drop the lower voltage constraints ($\underline{V_e}$) as we analyze the approximation ratio of {\sc GreedyAlloc}.

\medskip

\noindent
{\em C.1. Analysis of {\sc GreedyAlloc}}

Although {\sc GreedyAlloc} resembles an $O(\log n)$-approximation algorithm for unsplittable flow problem provided in \cite{CKS06}, our proof for the approximation ratio is substantially more involved than that in \cite{CKS06}, because of the presence of complex-valued demands makes {\sc GreedyAlloc} behave very differently. 

To analyze the approximation ratio of {\sc GreedyAlloc}, we first consider a simple setting where all utilities are identical (i.e., $u_k = 1$ for all $k\in \cN$). The objective of \textsc{sMaxOPF} then becomes to maximize the number of satisfied customers.


\medskip

We will define the following notations:
\begin{itemize}

\item
A demand path is a path from a customer to the root.
Let $\eta \triangleq \max_{e \in \cE} |P_e|$ be the maximum length of any demand path.

\item
Let $\theta \triangleq \max_{k,k' \in \cN} |\arg(s_k) - \arg(s_{k'})|$ be the maximum angle difference between any pair of demands

\item
Let $\theta_{\rm zs} \triangleq \max_{k \in \cN, e \in P_k} |\arg(s_k) - \arg(z_e)|$ be the maximum angle difference between demands and line impedance along any path to the root. We assume $0\le \theta_{\rm zs} < \tfrac{\pi}{2}$.

\item
Let $\rho_{{\tilde{e}}} \triangleq \max_{e,e' \in P_{{\tilde{e}}}} \tfrac{|z_e|}{|z_{e'}|}$ be the maximum ratio of impedance magnitude between any pair of edges along the path $P_{{\tilde{e}}}$.

\item
Let $\rho \triangleq \max_{{\tilde{e}} \in \cE} \rho_{{\tilde{e}}}$ be the maximum of all ratios.

\end{itemize}
Since $0 \le \theta_{\rm zs} <\tfrac{\pi}{2}$, it necessarily holds that $z^{\rm R}_{e} s_k^{\rm R}+ z^{\rm I}_{e} s_k^{\rm I}\ge 0$, for all $k\in \cN$ and $e\in \cE$. 
It follows that  Cons.~\raf{conV} on edge $e$ is at least as large as when $e \in \cL$ is a leaf edge, where $\cL$ is the set of all leaf edges defined by: 
$$
\cL\triangleq \{(i,j) \in \cE \mid \nexists k \in \cV \text{ such that } (j,k) \in \cE\}
$$

Therefore, it suffices to consider Cons.~\raf{conV} for each $e\in \cL$:
\begin{equation}
\sum_{k \in \cN} \Big( \sum_{ e'  \in P_k \cap P_e} z^{\rm R}_{e'} s_k^{\rm R}+ z^{\rm I}_{e'} s_k^{\rm I} \Big) x_k\le \overline{V_e}, \ \ \forall e \in \cL \label{conV2}
\end{equation}
\medskip

\begin{theorem} \label{thm:greedyalloc}
Consider $u_k=1$ for all $k\in \cN$ and assume $0\le \theta,\theta_{\rm zs} < \tfrac{\pi}{2}$.
\begin{enumerate}
\item  {\sc GreedyAlloc$_{\sf C}$} is $\alpha$-approximation for {\sc sMaxOPF$_{\sf C}$}, where 
$$\alpha =  \Big(  \big\lfloor   \sec \theta \cdot \sec\tfrac{\theta}{2}    \big\rfloor + 1 \Big)^{-1}$$

\item {\sc GreedyAlloc$_{\sf V}$} is $\alpha$-approximation for {\sc sMaxOPF$_{\sf V}$}, where
$$\alpha =  \Big(\big\lfloor   \eta\cdot  \rho \cdot\sec\theta_{\rm zs} \big\rfloor + 1  \Big)^{-1}$$ 	

\item  {\sc GreedyAlloc} is $\alpha$-approximation for \textsc{sMaxOPF}, where
$$\alpha =  \Big(\big\lfloor   \eta\cdot  \rho \cdot\sec\theta_{\rm zs} \big\rfloor + \big\lfloor   \sec \theta \cdot \sec\tfrac{\theta}{2}\big\rfloor + 2 \Big)^{-1}$$ 
\end{enumerate}
\label{thm:1}
\end{theorem}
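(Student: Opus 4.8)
The plan is to prove the three parts of Theorem~\ref{thm:greedyalloc} in sequence, with parts 1 and 2 being the genuine work and part 3 following by a clean combination argument. Throughout, fix an optimal solution $x^\ast$ for the relevant problem and let $O$ be the set of customers it selects, and let $M$ be the set returned by the corresponding variant of {\sc GreedyAlloc}. The customers are processed in nondecreasing order of $|s_k|$. The core of the argument is a charging scheme: I would show that each customer $k \in O \setminus M$ was rejected because adding it would have violated some constraint (capacity or voltage) on some edge $e$, and that constraint is then "blamed" on an edge $e \in \cE$ (resp.\ $e\in\cL$); I then bound, for each fixed blamed edge $e$, how many customers of $O$ can be charged to it. The key geometric fact enabling this is that $0 \le \theta_{\rm zs} < \tfrac\pi2$ forces $z^{\rm R}_{e'} s^{\rm R}_k + z^{\rm I}_{e'} s^{\rm I}_k \ge 0$ (already noted in the text), so the voltage quantities behave monotonically, and similarly $\theta < \tfrac\pi2$ controls cancellation in the complex capacity sums.

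\textbf{Part 1 ({\sc GreedyAlloc$_{\sf C}$}).} For a rejected $k \in O\setminus M$, let $e=e(k)$ be an edge on $P_k$ where the capacity constraint Cons.~\raf{con} failed when $k$ was considered, i.e.\ $\bigl|\sum_{k'\in M_k\cup\{k\}:\, e\in P_{k'}} s_{k'}\bigr| > \hat C_e$, where $M_k$ is the partial solution at that step. Since customers are sorted by magnitude and $k$ is rejected after all of $M_k$, every $k' \in M_k$ with $e \in P_{k'}$ has $|s_{k'}| \le |s_k|$. The plan is to show that the number of customers of $O$ routed through $e$ — call them $O_e$ — with magnitude at most $|s_k|$ is at most $\lfloor \sec\theta\sec\tfrac\theta2\rfloor$: their complex sum has magnitude $\le \hat C_e$ (feasibility of $x^\ast$), while each has magnitude $\ge \tfrac12|s_k|\,\text{(something)}$... more precisely, using the $\sec\theta$ bound, a set of complex numbers pairwise within angle $\theta$ each of magnitude at least $r$ has vector-sum magnitude at least $(\text{count})\cdot r\cos\tfrac\theta2$, so the count is at most $\hat C_e\sec\tfrac\theta2 / r$. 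Combined with a lower bound on $r$ coming from the fact that the $M_k$-part already nearly fills $\hat C_e$ (it must, since adding $s_k$ overflowed, and $|s_k|$ is no larger than any magnitude in $M_k$ on that edge up to the $\sec\theta$ slack), I obtain the stated $\lfloor\sec\theta\sec\tfrac\theta2\rfloor$ bound on charges per edge. Summing, $|O\setminus M| \le \lfloor\sec\theta\sec\tfrac\theta2\rfloor\cdot|M|$ after a careful accounting (each edge blamed is an edge containing a distinct member of $M$), giving $|O| \le (\lfloor\sec\theta\sec\tfrac\theta2\rfloor+1)|M|$.

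\textbf{Part 2 ({\sc GreedyAlloc$_{\sf V}$}) and Part 3.} For part 2, a rejected $k$ violated the upper voltage bound $\overline{V_e}$ on some leaf edge $e \in \cL$ (lower bounds dropped as stated). Here the new ingredients are $\eta$ and $\rho$: the coefficient $\sum_{e'\in P_k\cap P_e} (z^{\rm R}_{e'}s^{\rm R}_k + z^{\rm I}_{e'}s^{\rm I}_k)$ is a sum of at most $\eta$ nonnegative terms, each comparable — up to the ratio $\rho$ and the factor $\sec\theta_{\rm zs}$ relating $|z_{e'}||s_k|$ to the inner product — across customers sharing an edge. I would show that on the blamed edge, the number of optimal customers chargeable is at most $\lfloor \eta\rho\sec\theta_{\rm zs}\rfloor$ by the same magnitude-comparison argument (sorted order gives $|s_{k'}|\le|s_k|$, the partial sum already exceeds $\overline{V_e}$, and each optimal customer contributes a comparable positive amount bounded below), yielding $|O|\le(\lfloor\eta\rho\sec\theta_{\rm zs}\rfloor+1)|M|$. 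For part 3, when both constraint types are active, each rejected $k$ fails \emph{some} constraint — either a capacity constraint (charge it as in part 1) or a voltage constraint (charge it as in part 2); partitioning $O\setminus M$ accordingly and applying the two bounds gives $|O| \le (\lfloor\eta\rho\sec\theta_{\rm zs}\rfloor + \lfloor\sec\theta\sec\tfrac\theta2\rfloor + 2)|M|$.

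\textbf{Main obstacle.} The delicate point — and where I expect the "substantially more involved" difficulty the authors warn about — is establishing the per-edge charging bound in the presence of complex cancellation. Unlike real-valued unsplittable flow where the greedy overflow argument is immediate, here "the partial sum on edge $e$ has magnitude exceeding $\hat C_e$" does \emph{not} directly say the individual magnitudes summed exceed $\hat C_e$; one must use the angular spread bound $\theta<\tfrac\pi2$ twice — once to lower-bound $|s_k|$ relative to the partial-sum magnitude (so sorted order is usable) and once to lower-bound the magnitude of the optimal customers' vector sum relative to their count — and the interplay of these two $\sec$ factors is exactly what produces the product $\sec\theta\sec\tfrac\theta2$. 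Getting the constants right (as opposed to an $O(\log n)$-style bound) and verifying that distinct charges land on distinct $M$-members is the crux.
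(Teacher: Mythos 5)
Your high-level strategy (greedy by magnitude, charge rejected optimal customers to violated constraints, use angular-spread lemmas to produce the $\sec\theta$ and $\sec\tfrac{\theta}{2}$ factors, split into capacity- and voltage-violations for part 3) matches the paper's intent, but your concrete charging scheme is \emph{per edge}, whereas the paper's is a \emph{per-greedy-step exchange argument}, and the per-edge version as you describe it does not close. Two specific problems. First, your per-edge bound is not a constant: from ``$M_k$ plus $s_k$ overflows $\hat C_e$'' and the triangle inequality you can only conclude $|s_k| > \hat C_e/(m_e+1)$, where $m_e$ is the number of greedy customers already routed through $e$; dividing the optimal load $\hat C_e\sec\tfrac{\theta}{2}$ by this lower bound gives at most $(m_e+1)\sec\tfrac{\theta}{2}$ charges on $e$, which depends on the congestion $m_e$ and, after summing over edges (each $M$-customer lies on up to $\eta$ blamed edges), yields a bound with an $\eta$ factor even in part 1 --- the stated part-1 ratio has none. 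Second, the accounting step ``each edge blamed is an edge containing a distinct member of $M$'' is asserted but not justified; blamed edges on a common root-path share their $M$-members, and nothing prevents the number of distinct blamed edges from exceeding $|M|$.

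The missing idea is the mechanism that makes the per-charge bound a genuine constant. The paper iteratively transforms $R^\ast$ into $M$: at step $j$ it inserts the single greedy customer $k_j$ into the hybrid feasible set $R_{j-1}$ and removes a \emph{minimal} blocking set $W_j$; these $W_j$ partition $R^\ast\setminus M$ and each is charged to one element of $M\setminus R^\ast$, which fixes the accounting. The bound $|W_j^{(1)}|\le\lfloor\sec\theta\sec\tfrac{\theta}{2}\rfloor+1$ then comes from a head-to-head comparison on the critical edge $e^\circ$ (the violated edge closest to the root, common to all of $W_j^{(1)}$): writing $d_0$ for the common offset (the rest of $R_j$ on $e^\circ$ plus one distinguished $s_{k^\circ}$), one has $|d_0+s_{k_j}|>\hat C_{e^\circ}\ge|d_0+\sum_{k\in W_j^{(1)}\setminus\{k^\circ\}}s_k|$, whence $\bigl|\sum_{k\in W_j^{(1)}\setminus\{k^\circ\}}s_k\bigr|\le\sec\theta\,|s_{k_j}|$ (Lemma~\ref{lem:ratio}); combined with $\sum|s_k|\le\sec\tfrac{\theta}{2}\,|\sum s_k|$ and the ordering $|s_{k_j}|\le|s_k|$ for $k\in W_j$, this bounds $|W_j^{(1)}|$ with no reference to $\hat C_e$ or to how many customers sit on the edge. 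Note also that the ordering fact you need is the opposite of the one you cite: not that accepted customers are smaller than $k$, but that the \emph{removed optimal} customers are at least as large as the \emph{inserted greedy} one, which requires observing that $W_j$ lives in the rejected blocks $\bigcup_{l\ge i}B_l$ when $k_j\in A_i$ (Claim~\ref{claim:kj}). Without the minimal-blocking-set structure and the $|d_0+d_1|>|d_0+d_2|$ comparison, I do not see how to recover the stated constants, so the plan as written has a genuine gap.
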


\begin{proof}
We first present the basic idea as follows. {\sc GreedyAlloc} first sorts customers in $\hat{\cN}$ in a non-decreasing order according to the magnitudes of their demands:
$$
\overbrace{|s_1| \le |s_2| \le ...}^{A_1} \le \overbrace{|s_t| \le ...}^{B_1} \le \overbrace{|s_{p}| \le ...}^{A_2} \le ...
\le \overbrace{|s_{z}| \le ...}^{B_m} 
$$
We index the customers in the solution set $M $ as $ \{k_1, k_2,...,k_r\}$. {\sc GreedyAlloc} attempts to pack their demands greedily sequentially (by placing the satisfied customers into $M$), if the power capacity constraints or voltage constraints are not violated. Let the sets of customers who can be satisfied consecutively be $(A_1,...,A_m)$ and the sets of customers who violate Con.~\raf{con} or \raf{conV2} be $(B_1,...,B_m)$ (where $B_m$ may be empty).

Let the optimal solution be $R^\ast \subseteq \hat{\cN}$ as the maximal set of satisfied customers. We follow an exchange argument by constructing two sequences of sets $(W_1,...,W_r)$ and $(R_1,...,R_r)$, such that the following conditions hold: 
\begin{enumerate}

\item $W_j \subseteq R^\ast \cap  \bigcup_{i=1}^m B_i$ for each $j = 1,...,r$.

\item Set $R_j \triangleq(R_{j-1} \backslash  W_{j}) \cup  \{k_j\}$ for each $j = 1,...,r$.

\item Finally, we obtain $R_r = M$.

\end{enumerate}
The size of each $|W_j|$ will be used to derive the approximation ratio $\alpha$.

Formally, we define 
$$
Q_k(e) \triangleq \sum_{e' \in P_k \cap P_{ e}}  (z_{e'}^{\rm R} s_{k}^{\rm R} + z_{e'}^{\rm I} s_{k}^{\rm I} )
$$
and Cons.~\raf{conV2} becomes
$$
 \sum_{k \in \cN} Q_k(e) x_k \le \overline{V_e}, \ \ \forall e \in \cL 
$$
Let $R_1 \triangleq (R^\ast \backslash W_1) \cup \{k_1\}$, such that customer $k_{1}\in M$ is added to $R^\ast$, and $W_1$ is removed. Recursively, define $R_j \triangleq(R_{j-1} \backslash  W_{j}) \cup  \{k_j\} $ for $j=2,...,r$.  For each step $j$, $W_j\subseteq R_{j-1}\cap  \bigcup_{i=1}^m B_i$ is defined to be any {\em minimal}  subset such that $R_j$ is a feasible solution. 

Define $W_j^{(1)} \subseteq W_j$ and $W_j^{(2)} \subseteq W_j$ as follows.
\begin{align} \hspace{-10pt}
W_j^{(1)} \triangleq \Bigg\{  k \in W_j \ \ \bigg| & \ \ \exists e \in \cE \notag, \ \ \bigg| \sum_{k' \in R_{j} \wedge e \in P_{k'}} s_{k'}  \bigg| \le \hat{C}_e \\
&  \  \mbox{and\ } 
\bigg| \sum_{k' \in R_{j} \cup\{k\} \wedge e \in P_{k'}} s_{k'} \bigg| > \hat{C}_e
\Bigg\} \label{eq:v1}
\end{align}
\begin{align} 
W_j^{(2)} \triangleq \Bigg\{  k \in W_j \ \ \bigg| & \ \ \exists e \in \cL \notag, \ \ \sum_{k' \in R_j} Q_{k'}(e) \le \overline{V_e} \\
&  \qquad \mbox{and\ } 
\sum_{k' \in R_j\cup\{k\}} Q_{k'}(e)>\overline{V_e}
\Bigg\} \label{eq:v2}
\end{align}
See Fig.~\ref{fig:thm} for an illustration.

\begin{figure}[!htb]
	\begin{center}
		\includegraphics[scale=0.8]{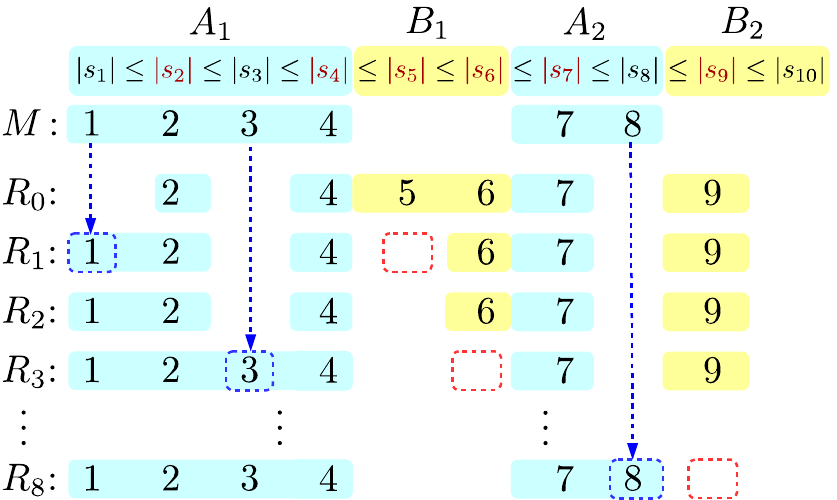}
	\end{center}
	\caption{A pictorial illustration of the definition of $R_j$ for an instance of $10$ customers $\hat\cN = \{1,2,...,10\}$. }
	\label{fig:thm}
\end{figure}
The approximation ratio $\alpha$ is equivalent to the following bound:
$$
|M| \ge \alpha |R^\ast|
$$
The proof is completed by Lemmas~\ref{lem:greedyalloc3}, \ref{lem:greedyalloc1}, \ref{lem:greedyalloc2}.
\end{proof}

\begin{lemma} \label{lem:greedyalloc3} 
\

\begin{enumerate}
\item  Consider {\sc GreedyAlloc$_{\sf C}$} for {\sc sMaxOPF$_{\sf C}$}. 
$$|M| \ge \frac{ |R^\ast| }{ \lfloor \sec \theta \cdot \sec\tfrac{\theta}{2}  \rfloor + 1}$$

\item Consider {\sc GreedyAlloc$_{\sf V}$} for {\sc sMaxOPF$_{\sf V}$}.
$$|M| \ge \frac{ |R^\ast| }{\lfloor  \eta\cdot  \rho \cdot\sec\theta_{\rm zs}  \rfloor + 1} $$	

\item Consider {\sc GreedyAlloc} for \textsc{sMaxOPF}.
$$|M| \ge \frac{ |R^\ast| }{\lfloor  \eta\cdot  \rho \cdot\sec\theta_{\rm zs}  \rfloor +  \lfloor \sec \theta \cdot \sec\tfrac{\theta}{2}  \rfloor + 2} $$
\end{enumerate} 
\end{lemma}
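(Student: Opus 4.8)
\begin{proofsketch}
The plan is to reduce all three bounds to one cardinality estimate on the exchange sets $W_j$ built in the proof of Theorem~\ref{thm:greedyalloc}, and then prove that estimate by a geometric counting argument using that \textsc{GreedyAlloc} scans customers in non-decreasing order of $|s_k|$. For the reduction, recall $M=\{k_1,\dots,k_r\}$ with $R_r=M$, each $W_j\subseteq\bigcup_i B_i$ disjoint from $\{k_1,\dots,k_r\}$, and $R_j=(R_{j-1}\setminus W_j)\cup\{k_j\}$ (with $R_0:=R^\ast$), whence $R_{j-1}\subseteq R_j\cup W_j$; iterating from $j=r$ down to $1$ gives $R^\ast\subseteq M\cup\bigcup_{j=1}^{r}W_j$, and as the $W_j$ are pairwise disjoint and disjoint from $M$,
$$
|R^\ast|\ \le\ |M|+\sum_{j=1}^{r}|W_j|\ \le\ |M|\Big(1+\max_{1\le j\le r}|W_j|\Big),
$$
using $r=|M|$. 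So it suffices to show, for each $j$: $|W_j|\le\lfloor\sec\theta\sec\tfrac{\theta}{2}\rfloor$ in case~(1), $|W_j|\le\lfloor\eta\,\rho\,\sec\theta_{\rm zs}\rfloor$ in case~(2), and $|W_j|\le\lfloor\eta\,\rho\,\sec\theta_{\rm zs}\rfloor+\lfloor\sec\theta\sec\tfrac{\theta}{2}\rfloor+1$ in case~(3).

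I would first record an ordering property: every $k\in W_j$ has $|s_k|\ge|s_{k_j}|$. If instead $|s_k|<|s_{k_j}|$, then \textsc{GreedyAlloc} examined and rejected $k$ while its accepted set lay inside $\{k_1,\dots,k_{j-1}\}\subseteq R_{j-1}$, and the constraint responsible for that rejection is still violated by $R_j\cup\{k\}$ — immediately for a voltage constraint, since all coefficients $Q_{k'}(e)\ge0$, and with some care for a capacity constraint, by tracking the witnessing edge and its partial sum through the exchange — contradicting minimality of $W_j$. Nailing down the capacity sub-case here, where complex cancellation could in principle occur, is the first place I expect friction.

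For case~(1), each $k\in W_j=W_j^{(1)}$ has a witness edge $e\in P_k$ on which the $R_j$-partial sum has magnitude $\le\hat C_e$ while adding $s_k$ raises it above $\hat C_e$. I would group $W_j^{(1)}$ by witness edge and bound each group by projecting all demands involved onto the bisector of the width-$\theta$ cone that contains them: the partial sum on the witness edge lies in that cone (so its bisector-projection is in $[0,\hat C_e]$); each blocked demand contributes bisector-projection $\ge|s_k|\cos\tfrac{\theta}{2}\ge|s_{k_j}|\cos\tfrac{\theta}{2}$ by the ordering property; and since $k_j$ fit at step~$j$, $\hat C_e$ is comparable to $|s_{k_j}|$ up to a factor $\sec\theta\sec\tfrac{\theta}{2}$. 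Chaining these estimates caps the group size, hence $|W_j^{(1)}|$, by $\lfloor\sec\theta\sec\tfrac{\theta}{2}\rfloor$. This trigonometric counting lemma — a tree-structured analogue of the single-link argument of \cite{CKM14} — is the technical heart of the proof and the step I expect to be hardest to make rigorous, precisely because different blocked customers may use different witness edges and the partial sums are complex-valued. Case~(2) is analogous but easier: Cons.~\raf{conV2} is a packing problem over the leaf edges with non-negative coefficients $Q_{k'}(e)$, so for $k\in W_j=W_j^{(2)}$ with witness leaf edge $e$, $\sum_{k'\in R_j}Q_{k'}(e)\le\overline{V_e}<\sum_{k'\in R_j}Q_{k'}(e)+Q_k(e)$; bounding $Q_k(e)$ above via $|P_{k'}\cap P_e|\le\eta$ and the impedance-magnitude ratio $\rho$, and below via $z^{\rm R}_{e'}s^{\rm R}_{k'}+z^{\rm I}_{e'}s^{\rm I}_{k'}\ge|z_{e'}||s_{k'}|\cos\theta_{\rm zs}$ (using $\theta_{\rm zs}<\tfrac{\pi}{2}$) together with $|s_k|\ge|s_{k_j}|$, one caps $|W_j^{(2)}|$ by $\lfloor\eta\,\rho\,\sec\theta_{\rm zs}\rfloor$; here no angular assumption on the demands is needed, because $Q_{k'}(e)\ge0$ precludes cancellation.

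Finally, for case~(3), minimality of $W_j$ makes $R_j\cup\{k\}$ infeasible for every $k\in W_j$, so each such $k$ violates, relative to the feasible set $R_j$, a capacity or a voltage constraint; hence $W_j\subseteq W_j^{(1)}\cup W_j^{(2)}$ and $|W_j|\le|W_j^{(1)}|+|W_j^{(2)}|\le\lfloor\sec\theta\sec\tfrac{\theta}{2}\rfloor+\lfloor\eta\,\rho\,\sec\theta_{\rm zs}\rfloor$, which together with the reduction yields the stated denominator (the extra unit of slack absorbing the combination of the two counting bounds). Assembling the reduction, the ordering property, the two counting lemmas and this last combination establishes all three parts.
\end{proofsketch}
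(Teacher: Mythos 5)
Your overall architecture matches the paper's: an exchange argument with sets $W_j$ partitioning $R^\ast\setminus M$, a split of $W_j$ into capacity-violators $W_j^{(1)}$ and voltage-violators $W_j^{(2)}$, the ordering property $|s_{k_j}|\le|s_k|$ for $k\in W_j$, and trigonometric counting on a witness edge. However, there is an off-by-one accounting gap that your sketch does not close. Your reduction gives $|R^\ast|\le|M|+\sum_j|W_j|\le|M|\,(1+\max_j|W_j|)$, so to reach the stated denominators you must prove $|W_j^{(1)}|\le\lfloor\sec\theta\sec\tfrac{\theta}{2}\rfloor$ and $|W_j^{(2)}|\le\lfloor\eta\rho\sec\theta_{\rm zs}\rfloor$ \emph{without} a ``$+1$''. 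But the counting argument you sketch (and the one that actually works, Lemmas~\ref{lem:greedyalloc1}--\ref{lem:greedyalloc2}) must first set aside one distinguished element $k^\circ\in W_j$ --- the one whose individual addition to the partial sum of $R_j$ on the witness edge overflows --- and only then compare $\sum_{k\in W_j\setminus\{k^\circ\}}|s_k|$ against $|s_{k_j}|$ via the $\sec\tfrac{\theta}{2}$ (sum of magnitudes vs.\ magnitude of sum) and $\sec\theta$ (ratio) estimates; $|s_{k^\circ}|$ itself can be arbitrarily large and cannot be folded into these bounds. That exemption is exactly where the ``$+1$'' comes from, so the provable per-set bounds are $\lfloor\cdot\rfloor+1$, and your reduction then yields denominators $\lfloor\cdot\rfloor+2$ for parts (1) and (2) --- weaker than claimed. (Your intermediate claim that ``$\hat C_e$ is comparable to $|s_{k_j}|$ up to a factor $\sec\theta\sec\tfrac{\theta}{2}$'' is also not meaningful as stated: $\hat C_e$ can be arbitrarily large relative to $|s_{k_j}|$; only the residual slack on the witness edge is controlled, and controlling it is precisely the $d_0,d_1,d_2$ argument that forces the exemption of $k^\circ$.)

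The fix is the paper's sharper bookkeeping rather than sharper per-set bounds. Note that $W_j=\varnothing$ whenever $k_j\in R^\ast$ (since $k_j\in M$ is never removed by any $W_{j'}$, $k_j\in R_{j-1}$ iff $k_j\in R^\ast$), so the $W_j$ with $k_j\in M\cap R^\ast$ contribute nothing and
$$
|R^\ast| \;=\; |M\cap R^\ast| + \sum_{j:\,k_j\in M\setminus R^\ast}|W_j| \;\le\; |M\cap R^\ast| + D\,|M\setminus R^\ast| \;\le\; D\,|M|,
$$
where $D\ge 1$ is the per-set bound $\max_j|W_j|$. This recovers denominator $D=\lfloor\cdot\rfloor+1$ (resp.\ the sum of the two bounds, $\lfloor\cdot\rfloor_1+\lfloor\cdot\rfloor_2+2$, for part (3)) exactly as in the lemma, using only the per-set bounds of Lemmas~\ref{lem:greedyalloc1} and~\ref{lem:greedyalloc2}. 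With that substitution the rest of your plan is sound.
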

\begin{proof}
	By the definition of $W_j$, we observe the following:
	\begin{itemize}
		\item[({\sf o1})] $|W_j| \le |W_j^{(1)} | + |W_j^{(2)} |$.
		\item[({\sf o2})] If  $k_j \in R_{j-1}$, then  $|W_j| = 0$.
		\item[({\sf o3})] By ({\sf o2}), $(W_1,...,W_r)$ form a partition over $R^\ast \backslash M$.
	\end{itemize}
	Consider {\sc GreedyAlloc} for \textsc{sMaxOPF}.
	By Lemmas~\ref{lem:greedyalloc1} and \ref{lem:greedyalloc2} and ({\sf o3}), one can relate $|M|$ to the optimal $|R^\ast|$ as follows:
	\begin{align*}
	\ & |M| =  |M \cap R^\ast| + |M \backslash R^\ast| \\
	= \ &|M \cap R^\ast| +  \sum_{j \in M \backslash R^\ast} \frac{1}{|W_j|} |W_j| \\
	\ge  \ & |M \cap R^\ast| + \frac{ \sum_{j \in  M \backslash R^\ast} |W_j| }{\lfloor  \eta\cdot  \rho \cdot\sec\theta_{\rm zs}  \rfloor +  \lfloor \sec \theta \cdot \sec\tfrac{\theta}{2}  \rfloor + 2} \\
	= \ & |M \cap R^\ast| + \frac{ |R^\ast\backslash M| }{\lfloor  \eta\cdot  \rho \cdot\sec\theta_{\rm zs}  \rfloor +  \lfloor \sec \theta \cdot \sec\tfrac{\theta}{2}  \rfloor + 2}  \ \ (\mbox{by ({\sf o3})}) \\
	\ge \ & \frac{ |M \cap R^\ast| + |R^\ast\backslash M| }{\lfloor  \eta\cdot  \rho \cdot\sec\theta_{\rm zs}  \rfloor +  \lfloor \sec \theta \cdot \sec\tfrac{\theta}{2}  \rfloor + 2} \\
	= \ & \frac{ |R^\ast| }{\lfloor  \eta\cdot  \rho \cdot\sec\theta_{\rm zs}  \rfloor +  \lfloor \sec \theta \cdot \sec\tfrac{\theta}{2}  \rfloor + 2} \\
	\end{align*}
	The cases of {\sc GreedyAlloc$_{\sf C}$} and {\sc GreedyAlloc$_{\sf V}$} are special cases of $W_j = W_j^{(1)}$ and  $W_j = W_j^{(2)}$ respectively.
\end{proof}

\begin{lemma} \label{lem:greedyalloc1}
Define $W_j^{(1)}$ as in Eqn.~\raf{eq:v1}. We obtain:
\begin{equation}
|W_j^{(1)}| \le \lfloor \sec \theta \cdot \sec\tfrac{\theta}{2}  \rfloor+ 1 \label{eq:b3}
\end{equation}
\end{lemma}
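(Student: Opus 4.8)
The plan is to show that the customers in $W_j^{(1)}$ all have demand magnitude at least $|s_{k_j}|$, that their demands are squeezed into a narrow angular cone (of width at most $\theta$), and that they all sit on a common congested edge, so that geometrically only a bounded number of them can fit into the ``slack'' that $k_j$ occupies on that edge. First I would fix an arbitrary $k \in W_j^{(1)}$ and the witnessing edge $e \in \cE$ from Eqn.~\raf{eq:v1}; because adding $k$ to $R_j$ violates the capacity on $e$ while $R_j$ itself does not, we have $e \in P_k$, and similarly $e \in P_{k_j}$ (otherwise $k_j$ would not have changed the flow on $e$ relative to $R_{j-1}$, contradicting the minimality of $W_j$ and the fact that $k \in W_j$ needed removing). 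The key ordering observation is that $W_j^{(1)} \subseteq R_{j-1} \cap \bigcup_i B_i$, i.e.\ every $k \in W_j^{(1)}$ was rejected by {\sc GreedyAlloc} at the moment it was considered, and since customers are processed in nondecreasing order of $|s_k|$ and $k_j$ was accepted after $k_j$'s turn, one deduces $|s_k| \ge |s_{k_j}|$ for every $k \in W_j^{(1)}$ — this is the crucial place where the complex-magnitude sort is used.

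Next I would set up the geometric counting. Let $e^\star$ be the edge realizing the violation for the ``last'' element of $W_j^{(1)}$ to be counted, or more carefully, argue that all of $W_j^{(1)}$ share a single congested edge: since each $k\in W_j^{(1)}$ lies on a path to the root through a congested edge and $k_j$ also lies on every such edge, pick the congested edge $e$ closest to the root; then $W_j^{(1)}\setminus\{k_j\}$-flow plus $s_{k_j}$ on $e$ has magnitude $\le \hat C_e$, while the partial sums keep exceeding $\hat C_e$ after removing fewer customers, so the total demand mass $\big|\sum_{k \in W_j^{(1)}} s_k\big|$ is comparable to $\hat C_e$. Using $|\arg(s_k)-\arg(s_{k'})|\le\theta$ for all pairs, the triangle-like inequality $\big|\sum_{k\in W_j^{(1)}} s_k\big| \ge \cos\tfrac{\theta}{2}\sum_{k\in W_j^{(1)}}|s_k| \ge |W_j^{(1)}|\,\cos\tfrac{\theta}{2}\,|s_{k_j}|$ holds (project all demands onto the bisector direction of their cone). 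On the other hand the flow on $e$ just before $k_j$ was considered was feasible, $|s_{k_j}|\le \hat C_e$ trivially, and the ``excess'' over the feasible pre-$k_j$ flow that must be vacated is at most $|s_{k_j}|\cdot\sec\theta$ in the bisector projection (here $\sec\theta$ absorbs the worst misalignment between $s_{k_j}$ and the cone bisector, since $s_{k_j}$ is itself within angle $\theta$ of every other demand). Combining, $|W_j^{(1)}|\,\cos\tfrac{\theta}{2}\,|s_{k_j}| \le |s_{k_j}|\sec\theta$, which rearranges to $|W_j^{(1)}| \le \sec\theta\cdot\sec\tfrac{\theta}{2}$, and taking floors (plus the ``$+1$'' slack from the boundary customer that straddles the threshold) gives the claimed bound $|W_j^{(1)}| \le \lfloor \sec\theta\cdot\sec\tfrac{\theta}{2}\rfloor + 1$.

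The main obstacle I anticipate is making the ``excess mass is at most $|s_{k_j}|\sec\theta$'' step fully rigorous: one must carefully track how the insertion of $k_j$ into $R_{j-1}$, followed by the deletion of the minimal set $W_j$, changes the flow on the witnessing edge $e$, and argue that the amount of capacity on $e$ that $k_j$ alone can be ``responsible'' for clearing is controlled by $|s_{k_j}|$ rather than by the (possibly much larger) individual $|s_k|$ for $k\in W_j^{(1)}$. This requires exploiting minimality of $W_j$ (removing any proper subset leaves $e$ infeasible, so each removed customer is genuinely needed) together with the monotone ordering. A secondary technical point is the projection inequality $\big|\sum_k s_k\big|\ge\cos\tfrac{\theta}{2}\sum_k|s_k|$ when all $s_k$ lie in a cone of opening $\theta$; this is standard (project onto the cone's axis) but should be invoked explicitly, and it is exactly the place where the hypothesis $0\le\theta<\tfrac{\pi}{2}$ is needed so that $\sec\theta$ and $\sec\tfrac{\theta}{2}$ are finite and positive.
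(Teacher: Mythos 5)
Your proposal follows the paper's proof almost step for step: the ordering claim $|s_{k_j}|\le|s_k|$ for $k\in W_j^{(1)}$ (the paper's Claim~1), the choice of the violated edge closest to the root, which lies on $P_k$ for every $k\in W_j^{(1)}$ because all paths share the root, the cone--projection bound $\big|\sum_k s_k\big|\ge\cos\tfrac{\theta}{2}\sum_k|s_k|$ (the paper's Lemma~\ref{lem:tb}), and a $\sec\theta$ comparison between the mass of $W_j^{(1)}$ and $|s_{k_j}|$. So the architecture is right.

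The one genuine gap is exactly the step you flag: the claim that the removed mass is at most $|s_{k_j}|\sec\theta$. Your justification (``$\sec\theta$ absorbs the worst misalignment between $s_{k_j}$ and the cone bisector'') is not the mechanism that makes this work, and as stated it does not yield the bound: the issue is not the angle between $s_{k_j}$ and the bisector, but the presence of a large common offset vector on the congested edge. The paper isolates this as a separate Lemma~\ref{lem:ratio}: if $d_0,d_1,d_2$ lie in the positive quadrant with pairwise angles at most $\theta<\tfrac{\pi}{2}$ and $|d_0+d_1|\ge|d_0+d_2|$, then $|d_2|/|d_1|\le\sec\theta$. Here $d_1=s_{k_j}$, $d_2=\sum_{k\in W_j^{(1)}\setminus\{k^\circ\}}s_k$, and $d_0$ is the flow of $R_j\setminus\{k_j\}$ on $e^\circ$ \emph{plus} $s_{k^\circ}$ for one distinguished $k^\circ\in W_j^{(1)}$; the inequality $|d_0+d_1|>\hat C_{e^\circ}\ge|d_0+d_2|$ comes from comparing the infeasible set $R_j\cup\{k^\circ\}$ with the feasible subset of $R_{j-1}$. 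Proving $|d_2|/|d_1|\le\sec\theta$ requires a nontrivial optimization (the paper maximizes $|d_2'|/(1-|d_0'|)$ subject to $|d_0'+d_2'|\le1$ and shows the supremum is $\sec\theta$), so this cannot be waved through. A secondary inaccuracy: the $+1$ in the final bound is not a generic ``boundary customer'' slack; it arises precisely because $k^\circ$ is absorbed into $d_0$, so the $\sec\theta\cdot\sec\tfrac{\theta}{2}$ estimate bounds $|W_j^{(1)}|-1$, not $|W_j^{(1)}|$.
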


\begin{lemma} \label{lem:greedyalloc2}
Define $W_j^{(2)}$ as in Eqn.~\raf{eq:v2}. We obtain:
\begin{equation}
|W_j^{(2)}| \le \lfloor \eta\cdot {\rho\cdot\sec\theta_{\rm zs}}\rfloor + 1 \label{eq:vpf3}.
\end{equation}
\end{lemma}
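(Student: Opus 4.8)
The plan is analogous in spirit to the (omitted) argument behind Lemma~\ref{lem:greedyalloc1}, but it exploits that, since $\theta_{\rm zs}<\tfrac\pi2$, every summand $z^{\rm R}_{e'}s_k^{\rm R}+z^{\rm I}_{e'}s_k^{\rm I}=|z_{e'}||s_k|\cos(\arg z_{e'}-\arg s_k)$ is nonnegative; this removes the cancellation issues of the capacity case, at the price of an extra $\eta\rho$ factor coming from summing impedance contributions along root paths. First I would, for each $k\in W_j^{(2)}$, fix (as guaranteed by Eqn.~\raf{eq:v2}) a leaf witness edge $e_k\in\cL$ with $\sum_{k'\in R_j}Q_{k'}(e_k)\le\overline{V_{e_k}}<\sum_{k'\in R_j\cup\{k\}}Q_{k'}(e_k)$. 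Second, I would establish the magnitude ordering $|s_k|\ge|s_{k_j}|$ for every $k\in W_j$: {\sc GreedyAlloc} scans customers in non-decreasing $|s_k|$ order, so when it rejected $k$ its current set was $\{k_1,\dots,k_p\}$ for some $p$ with $\{k_1,\dots,k_p,k\}$ infeasible; since every $R_i$ contains $\{k_1,\dots,k_i\}$ and those customers are never removed from the exchange set, $k$ must leave it by step $p$, so $k\in W_j$ forces $j\le p$ and hence $|s_k|\ge|s_{k_p}|\ge|s_{k_j}|$.

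Next I would derive the key inequality. Writing $R_j\cup\{k\}=(R_{j-1}\setminus(W_j\setminus\{k\}))\cup\{k_j\}$ with $k_j\notin R_{j-1}$ (observation~({\sf o2}), which applies since we may assume $W_j^{(2)}\ne\varnothing$), feasibility of $R_{j-1}$ at $e_k$ gives $\sum_{k'\in R_{j-1}}Q_{k'}(e_k)\le\overline{V_{e_k}}$, while the chosen violation at $e_k$ reads $\sum_{k'\in R_{j-1}}Q_{k'}(e_k)-\sum_{k'\in W_j\setminus\{k\}}Q_{k'}(e_k)+Q_{k_j}(e_k)>\overline{V_{e_k}}$; subtracting yields, for every $k\in W_j^{(2)}$, that $Q_{k_j}(e_k)>\sum_{k'\in W_j\setminus\{k\}}Q_{k'}(e_k)\ge\sum_{k'\in W_j^{(2)}\setminus\{k\}}Q_{k'}(e_k)$. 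I would then control the $Q$-ratios via the tree geometry: for any customer $m$ and leaf edge $e$, $P_m\cap P_e$ is a nonempty sub-path of $P_e$ (it contains the edge $e_1$ at the root) of length at most $\eta$, and any two impedance magnitudes on $P_e$ differ by a factor at most $\rho$; with each summand of $Q_m(e)$ lying in $[\,|z_{e'}||s_m|\cos\theta_{\rm zs},\,|z_{e'}||s_m|\,]$ this gives $Q_m(e)\le\eta|s_m|\max_{e'\in P_e}|z_{e'}|$ and $Q_m(e)\ge|s_m|\cos\theta_{\rm zs}\max_{e'\in P_m\cap P_e}|z_{e'}|>0$, hence $Q_m(e)/Q_{m'}(e)\le(|s_m|/|s_{m'}|)\,\eta\,\rho\,\sec\theta_{\rm zs}$ for any customers $m,m'$ and any leaf edge $e$.

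Finally I would combine the two. The bound is trivial when $|W_j^{(2)}|\le 1$, so assume $|W_j^{(2)}|\ge 2$ and fix any $k\in W_j^{(2)}$. Applying the ratio bound with $m=k_j$, $m'=k'$ together with $|s_{k_j}|\le|s_{k'}|$ (from the magnitude ordering above) gives $Q_{k'}(e_k)\ge Q_{k_j}(e_k)/(\eta\rho\sec\theta_{\rm zs})$ for every $k'\in W_j^{(2)}\setminus\{k\}$; summing these and comparing with the key inequality yields $(|W_j^{(2)}|-1)/(\eta\rho\sec\theta_{\rm zs})<1$, i.e.\ $|W_j^{(2)}|<\eta\rho\sec\theta_{\rm zs}+1$, and since $|W_j^{(2)}|$ is an integer this gives $|W_j^{(2)}|\le\lfloor\eta\rho\sec\theta_{\rm zs}\rfloor+1$.

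The hard part will be the two structural ingredients, not the algebra: correctly establishing the magnitude ordering in the first step (which requires carefully tracking how the exchange sequence $(R_i)$ is synchronized with the order in which {\sc GreedyAlloc} examines customers, together with observations ({\sf o2})--({\sf o3})), and recognizing that $P_m\cap P_e$ is always a sub-path of the leaf path $P_e$ incident to the root — this is exactly what lets the single parameter $\rho$ bound $Q_{k_j}(e_k)/Q_{k'}(e_k)$ even though different $k'\in W_j^{(2)}$ carry different witness edges. The remaining steps (the subtraction producing the key inequality, and the final counting) are routine.
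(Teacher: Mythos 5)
Your proof is correct and follows essentially the same route as the paper's: the magnitude ordering $|s_{k_j}|\le|s_k|$ for $k\in W_j$ (the paper's Claim~1), a witness leaf edge at which $Q_{k_j}$ dominates $\sum_{k'\in W_j^{(2)}\setminus\{k\}}Q_{k'}$, and the per-edge ratio bound $Q_{k_j}(e)/Q_{k'}(e)\le\eta\,\rho\,\sec\theta_{\rm zs}$ obtained from $\cos\theta_{\rm zs}$, the path length at most $\eta$, and the impedance ratio at most $\rho$. The only differences are presentational: you derive the witness inequality by comparing the violation at $R_j\cup\{k\}$ against the feasibility of $R_{j-1}$ (the paper instead invokes minimality of $W_j$ to exhibit a single edge $\tilde e$ for one arbitrary $k'$), and you bound the $Q$-ratio directly via $\max_{e'\in P_e}|z_{e'}|$ where the paper introduces $e_{\max}\in P_{k_j}$ and averages over the pairs $(k,e)$ with $e\in P_k\cap P_{\tilde e}$.
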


\bigskip

\noindent
{\em C.2. Analysis of {\sc InelasDemAlloc}}

We complete the analysis of {\sc InelasDemAlloc} by the following theorem.
\medskip

\begin{theorem}	\label{thm:chek}
Assume that $\rho$, $\sec\theta$ and $\sec\theta_{zs}$ are constants, and $\theta, \theta_{zs} < \tfrac{\pi}{2}$, then		
\begin{enumerate}
\item {\sc InelasDemAlloc$_{\sf C}$} is  $\frac{1}{O( \log n)}$-approximation for {\sc sMaxOPF$_{\sf C}$}.
\item {\sc InelasDemAlloc$_{\sf V}$} is $\frac{1}{O(\eta\cdot \log n)}$-approximation for {\sc sMaxOPF$_{\sf V}$}.
\item {\sc InelasDemAlloc} is $\frac{1}{O(\eta\cdot \log n)}$-approximation for \textsc{sMaxOPF}.
\end{enumerate}
\end{theorem}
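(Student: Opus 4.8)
The plan is to use the standard utility-bucketing reduction (in the style of \cite{CKS06}) that converts the max-cardinality guarantee of Theorem~\ref{thm:greedyalloc} into a max-utility guarantee at the price of an extra $O(\log n)$ factor. Throughout I work with the upper-bound-only version of \textsc{sMaxOPF} (the lower bounds $\underline{V_e}$ being dropped for the analysis, exactly as already done for {\sc GreedyAlloc}), so that feasibility is \emph{monotone}: every subset of a feasible customer set is feasible. I also assume without loss of generality that each singleton $\{k\}$ is feasible --- a customer violating a capacity or upper voltage bound on its own can never belong to a feasible solution nor be selected by {\sc GreedyAlloc}, so it may be discarded --- which gives $u_{\max}\le\OPT$.

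First I would check that the partition of Algorithm~\ref{alg1} is complete with only $O(\log n)$ parts: since $\bar u_k=\lfloor u_k/L\rfloor\le u_{\max}/L=n^2$, every $\bar u_k$ falls in some $[2^{i-1},2^i)$ with $i\le\lceil 2\log n\rceil+1$, so $(\hat\cN_1,\dots,\hat\cN_{\lceil 2\log n\rceil+1})$ partition $\cN$. Next, fix a bucket $i\ge 2$ and let $S_i^\ast$ be the set of customers of $\hat\cN_i$ chosen by a fixed optimal solution $x^\ast$. By monotonicity $S_i^\ast$ is feasible in the sub-instance on $\hat\cN_i$, so its maximum-cardinality feasible subset $R_i^\ast$ satisfies $|R_i^\ast|\ge|S_i^\ast|$; the parameters $\eta,\rho,\theta,\theta_{\rm zs}$ of this sub-instance are at most the global ones and the $\alpha$ in Theorem~\ref{thm:greedyalloc} is monotone in them, so Theorem~\ref{thm:greedyalloc} (applied with the global $\alpha$) yields $|M_i|\ge\alpha\,|S_i^\ast|$. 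Since every $k\in\hat\cN_i$ has $L2^{i-1}\le u_k<L2^{i}$, this gives
\[
u(M_i)\ \ge\ |M_i|\,L2^{i-1}\ \ge\ \alpha\,|S_i^\ast|\,L2^{i-1}\ \ge\ \tfrac{\alpha}{2}\,u(S_i^\ast).
\]
For the lowest bucket I would bound its optimal mass directly: $u(S_1^\ast)<2L\,|S_1^\ast|\le 2Ln=2u_{\max}/n\le 2\OPT/n$, which is negligible. Summing over buckets and using $u(M)=\max_i u(M_i)$,
\[
\OPT=\sum_{i\ge1}u(S_i^\ast)\ \le\ \tfrac{2}{n}\OPT\ +\ \lceil 2\log n\rceil\cdot\tfrac{2}{\alpha}\,u(M),
\]
hence $u(M)\ge\frac{\alpha(1-2/n)}{2\lceil 2\log n\rceil}\OPT=\frac{\alpha}{O(\log n)}\OPT$ (the smallest values of $n$ being handled trivially).

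It remains to substitute the three values of $\alpha$ supplied by Theorem~\ref{thm:greedyalloc}. For {\sc sMaxOPF$_{\sf C}$} we take $\alpha=(\lfloor\sec\theta\cdot\sec\tfrac{\theta}{2}\rfloor+1)^{-1}$, a constant under the hypotheses, so {\sc InelasDemAlloc$_{\sf C}$} is $\frac{1}{O(\log n)}$-approximate; for {\sc sMaxOPF$_{\sf V}$} and for the full \textsc{sMaxOPF} we have $\alpha=\frac{1}{O(\eta)}$, so both give $\frac{1}{O(\eta\log n)}$. The argument is largely routine once Theorem~\ref{thm:greedyalloc} is available; the only delicate points are (i) the preprocessing and monotonicity observations that justify restricting $x^\ast$ to a bucket and ensure $u_{\max}\le\OPT$, and (ii) verifying that the scaling constant $L=u_{\max}/n^2$ simultaneously keeps the number of buckets at $O(\log n)$ and limits the discarded low-utility mass to $O(\OPT/n)$ --- this is the one place the precise value of $L$ is used.
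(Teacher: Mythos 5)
Your proposal is correct and follows essentially the same route as the paper's proof: bucket customers by rounded utility into $O(\log n)$ groups, apply the unit-utility guarantee of Theorem~\ref{thm:greedyalloc} within each bucket, bound the rounding/low-bucket loss by $O(\OPT/n)$ using $u_{\max}\le\OPT$, and lose the $O(\log n)$ factor by taking the best bucket. If anything, you are slightly more explicit than the paper about the monotonicity of feasibility, the within-bucket factor of $2$, and the separate treatment of the lowest bucket, but these are presentational refinements of the same argument.
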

\ifsupplementary
The proof can be found in the appendix.
\else
The proof can be found in the technical report \cite{MCK16}.
\fi

\begin{remark}\color{black}
Basically, the approximation ratio is inversely proportional to the number of inelastic customers logarithmically, and the depth of  the electric network. The running time of {\sc InelasDemAlloc} is $O(n \log n + n \cdot \eta)$.
\end{remark}

\bigskip

\subsection{Approximation Algorithm for \textsc{MaxOPF} with Elastic and Inelastic Demands}

To incorporate elastic demands, we first solve the relaxed problem \textsc{rMaxOPF} by relaxing all inelastic demands to be elastic as follows:
\begin{align}
&\textsc{(rMaxOPF)}\quad \max_{\substack{x_k, v_i,\ell_{i,j}, S_{i,j} \;\;}} \sum_{k \in \cN} u_k x_k,  \notag \\
\text{s.t.} \ \ & 	\text{Cons.~\raf{eq:c0}, \raf{eq:S}, \raf{eq:S0}, \raf{eq:cv}, \raf{eq:c1}, \raf{eq:c2}} \notag\\
& x_k \in [0,1], \qquad \forall k \in \cN \notag
\end{align}

Note that Cons.~\raf{eq:c0} is non-convex and the problem is generally difficult to solve. Instead, we can consider a convex relaxation by relaxing the constraint to be $\ell_{i,j}\ge\frac{|S_{i,j}|^2}{v_i}$ as in  \cite{gan2015exact}. 
Let the solution be $\tilde{x} = \Big( (\tilde{x}_k)_{k \in \cF}, (\tilde{x}_k)_{k \in \cI} \Big)$. 


We next define a simplified residual problem \textsc{siMaxOPF$_\delta[\tilde x]$} by assuming the elastic demands are set according to $(\tilde{x}_k)_{k \in \cF}$, and the links capacity are reduced by a factor of $(1-\delta)$ for a given $\delta \in (0,1)$:
\begin{align}
&\textsc{(siMaxOPF$_\delta[\tilde x]$)}\quad \max_{x_k} \sum_{k \in \cN} u_k x_k \notag\\
\text{s.t.} \ & \bigg| \sum_{k: e  \in P_k} s_k x_k \bigg| \le (1-\delta)\cdot C_{e}, \qquad \forall  e \in \cE \\
&  \sum_{k \in \cN} \Big( \sum_{ e'  \in P_k \cap P_e} z^{\rm R}_{e'} s_k^{\rm R}+ z^{\rm I}_{e'} s_k^{\rm I} \Big) x_k \le \overline{V_e}, \ \forall  e \in \cE \\
& x_k \in \{0,1\},\quad \forall k \in \cI\\
& x_k = \tilde{x}_k, \qquad \forall k \in \cF
\end{align}

Then, we solve \textsc{siMaxOPF$_\delta[\tilde x]$} by {\sc InelasDemAlloc}.
To verify the feasibility of a solution $\bar x$ by {\sc InelasDemAlloc}, we consider the following problem with given demands $\bar x$:
\begin{align}
&\textsc{(OPF$[\bar x]$)}\quad \min_{\substack{v_i,\ell_{i,j}, S_{i,j} \;\;}} \sum_{e \in \cE} |z_e|\cdot \ell_e,  \notag \\
\text{s.t.} \ \ & 	\text{Cons.~\raf{eq:S}, \raf{eq:S0}, \raf{eq:cv}, \raf{eq:c1}, \raf{eq:c2}} \notag\\
&\ell_{i,j}\ge\frac{|S_{i,j}|^2}{v_i}\qquad \forall (i,j)\in \cE\notag\\
& x_k = \bar x_k , \qquad \forall k \in \cN \notag
\end{align}

\begin{algorithm}[ht!]
	\caption{\mbox{\sc MixDemAlloc}$[(u_k,s_k)_{k \in \cN}, \epsilon]$}
	\label{alg:lin}
	\begin{algorithmic}[1]
		\Require customers' utilities $(u_k)$ and inelastic demands; small step size $\epsilon\in (0,1)$
		\State $\delta \leftarrow 0$
		\Repeat
		\State $\tilde x\;\, \leftarrow$ Solution of {\sc rMaxOPF} \label{alg:m1}
		\State $M \leftarrow \textsc{InelasDemAlloc}$ on {\sc siMaxOPF}$_\delta[\tilde x]$ \label{alg:m2}
		\For{$k \in \cN$}
		\State $\bar x_k \triangleq \left\{\begin{array}{cc}
		1 & \text{if } k \in M \\
		\tilde x_k & \text{if } k \in \cF\\
		0 & \text{otherwise}
		\end{array}\right.$ 
		\EndFor
		\State $\delta \leftarrow \delta + \epsilon$
		\Until { {\sc OPF$[\bar x]$} is feasible }
		\State \Return $\bar x$		
	\end{algorithmic}
\end{algorithm}

We provide Algorithm~\ref{alg:lin} ({\sc MixDemAlloc}) as an efficient method to obtain a feasible solution to {\sc maxOPF} with both inelastic and elastic demands.
\begin{remark}\color{black}
	The running time of {\sc MixDemAlloc} is $O(\tfrac{1}{\delta}(n \log n + n \cdot \eta + T))$ where $T$ is the running time of solving {\sc OPF$[\bar x]$} via convex optimization  \cite{farivar2013branch}.
\end{remark}
  The theoretical approximation ratio of {\sc MixDemAlloc} with respect to {\sc maxOPF} is hard to obtain. But the empirical performance of {\sc MixDemAlloc} will be evaluated in the next section.

\section{Evaluation}
We provided analysis on the approximations ratios of our algorithms in the previous sections, which are the worst-case guarantees. In this section, we evaluate the empirical average-case ratios by simulations. We observe that our algorithms perform relatively well in several scenarios which are far below the theoretical worst-case values.

\subsection{Simulation Settings}

{\color{black} We consider two electric networks: a 38-node system adopted from \cite{singh2007loadmodel} (see Fig.~\ref{fig:net}), and the de-facto IEEE 123-node system.
\ifsupplementary
For the 38-node system, the settings of line impedance and maximum capacity are provided in the appendix.
\else
For the 38-node system, the settings of line impedance and maximum capacity are provided in the technical report \cite{MCK16}.
\fi 
In 38-node system, we assume that the generation source is attached to node $1$, whereas the power demands are randomly generated at other 37 nodes uniformly.

 The IEEE 123-node networks are unbalanced three-phase networks with several devices that are not modeled in our formulation (Cons.~\raf{eq:c0}-\raf{clast}). As in \cite{gan2015exact}, we modify the IEEE network by the following:
\begin{itemize}
	\item The three phases are assumed to be decoupled into three identical single phase networks.
	\item Closed circuit switches are modelled as shorted lines and ignore open circuit switches.
	\item Transformers are modelled as lines with appropriate impedances.
\end{itemize}
We assume that the generation source is attached to the substation (node 150), whereas the power demands are randomly generated at the other nodes uniformly.
}

\begin{figure}[!htb]
	\begin{center}
		\includegraphics[scale=.7]{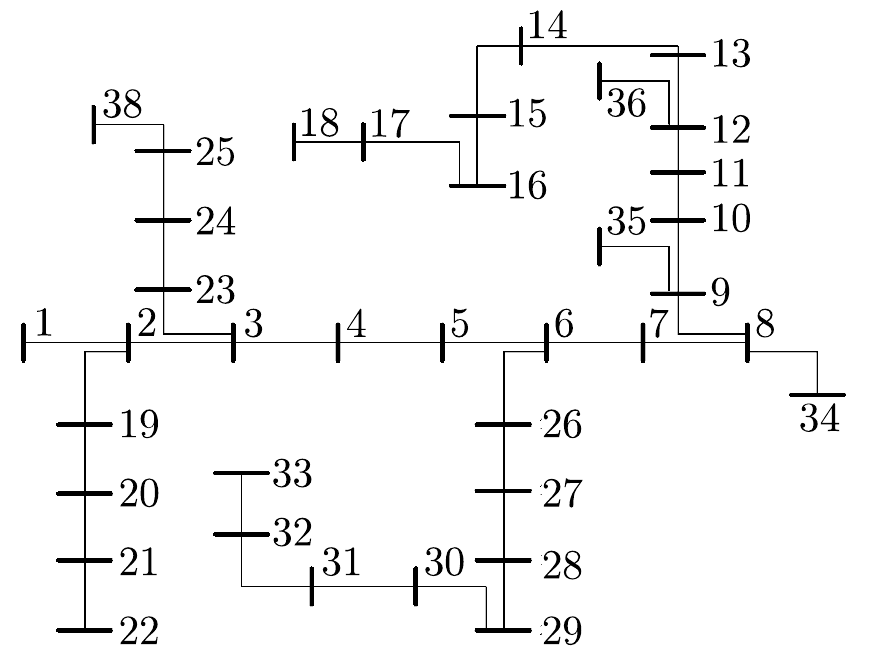}
	\end{center}
	\caption{A 38-node electric network for evaluation.}
	\label{fig:net}
\end{figure}

We consider diverse case studies of various settings of power demands by taking into account the correlation between customer demand and utility considering various demand types. The following are the settings of power demands at the customers:

\begin{enumerate}
	
	\item[(i)] {\em Utility-demand correlation}:
	\begin{enumerate}
		
		\item {\em Correlated setting (C)}: The utility of each customer is a function of the power demand:
		\begin{equation}\label{eq:valuationfunction}
		u_k({|s_k|}) = a\cdot {|s_k|}^2 + b\cdot {|s_k|} + c 
		\end{equation}
		where $a > 0, b, c \ge 0$ are constants. For simplicity, we consider $u_k(|s_k|) = |s_k|^2$.
		
		\item {\em Uncorrelated setting (U)}: The utility of each customer is independent of the power demand and is generated randomly from $[0, |s_{\max}(k)|]$. Here $s_{\max}(k)$ depends on the customer type (as defined the following). If customer $k$ is an industrial customer then $|s_{\max}(k)| = 1$MVA, otherwise $|s_{\max}(k)| = 5$KVA.
		
	\end{enumerate}
	\medskip
	
	\item[(ii)] {\em Customer types}:
	\begin{enumerate}
		
		\item {\em Residential (R) customers}: The customers are comprised of residential customers having small power demands ranging from 500VA to 5KVA.
		
		\item {\em Industrial (I) customers}: The customers have big demands ranging from 300KVA up to 1MVA and non-negative reactive power.
		
		\item {\em Mixed (M) customers}: The customers are comprised of a mix of industrial and residential customers.  Industrial customers constitute no more than 20\% of all customers chosen at random.

	\end{enumerate}
	
\end{enumerate}

In this paper, the case studies will be represented by the aforementioned acronyms. For example, the case study named CM stands for the one with mixed customers and correlated utility-demand setting.

In order to quantify the performance of our algorithms, we use Gurobi  optimizer to obtain numerically  close-to-optimal solutions for {\sc MaxOPF} and {\sc sMaxOPF} respectively.
 We denote output solution for {\sc MaxOPF} (resp., {\sc sMaxOPF}) obtained by Gurobi optimizer  by $\OPT$ (resp., $\OPT_s$). 
 To ensure the feasibility of $\OPT_s$, we perform a linear search similar to that in Algorithm~\ref{alg:lin} with a small modification.
Note that there is no guarantee that the optimizer will return an optimal solution nor it will terminate in a reasonable time (e.g., within 500 seconds for each run). Whenever the optimizer exceeds the time limit, the current best solution is considered to be optimal.

 We set the step size to be $\epsilon=0.005$ (i.e., $0.5\%$) for both {\sc MixDemAlloc} and $\OPT_s$. The power factor for each customer varies between $0.8$ to $1$ (to comply with IEEE standards) and thus we restrict the phase angle $\theta$ of demands to be in the range of $[-36^{\circ}, 36^{\circ}]$.

The simulations were evaluated using 2 Quad core Intel Xeon CPU E5607 2.27 GHz processors with 12 GB of RAM. The algorithms were implemented using Python programming language with Scipy library for scientific computation. 

\subsection{Evaluation Results}

\begin{figure*}[!htb]
	\begin{subfigure}[b]{0.5\textwidth}
		\begin{center}
			\includegraphics[scale=0.5]{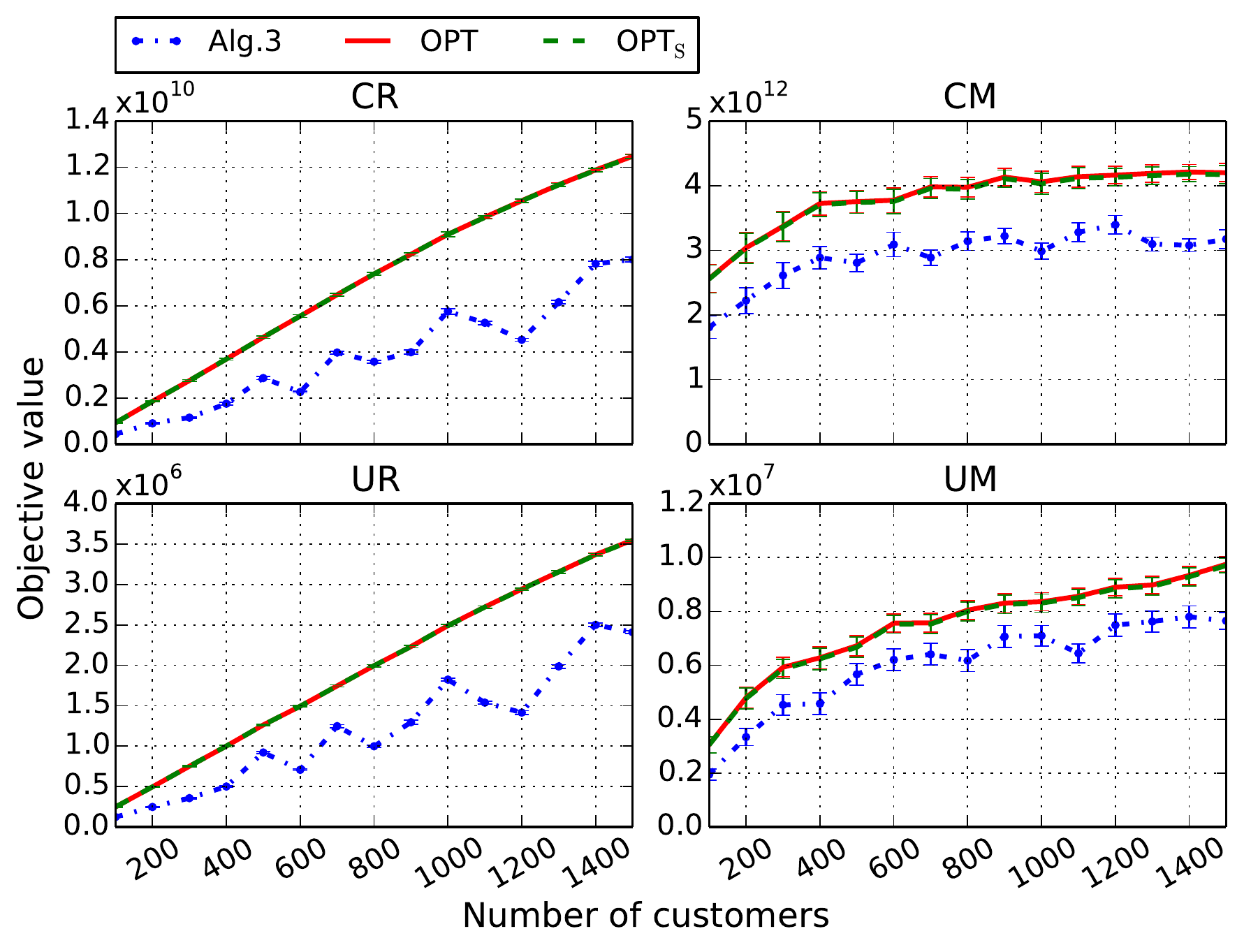}
		\end{center}
		\caption{}
		\label{fig:obj38}
	\end{subfigure}
	\begin{subfigure}[b]{0.5\textwidth}
		\begin{center}
			\includegraphics[scale=0.5]{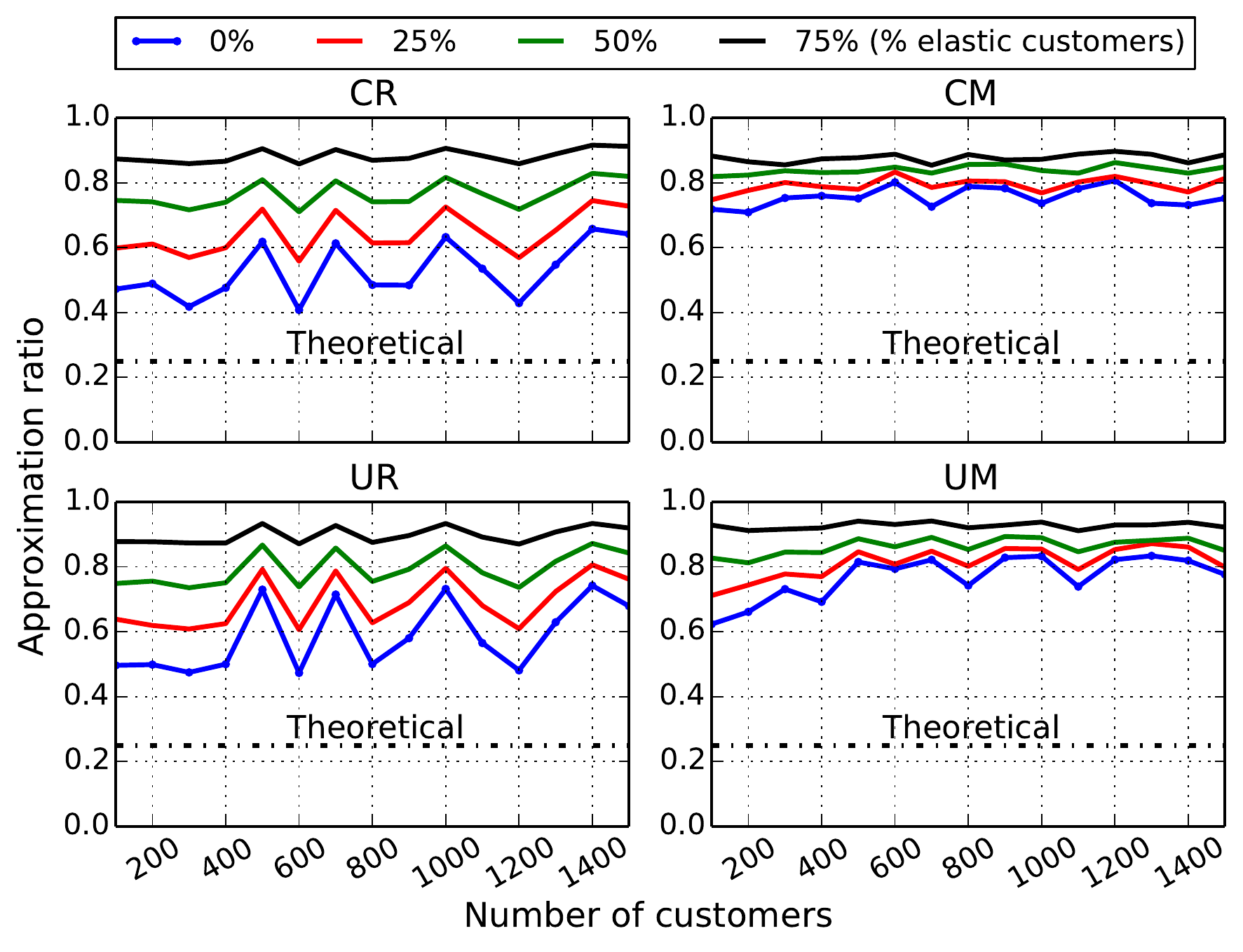}
		\end{center}
		\caption{}
		\label{fig:ar38}
	\end{subfigure}
	\caption{\color{black} The 38-node system: (a) The average objective values of {\sc MixDemAlloc}  with inelastic demands only (denoted by {\sf Alg.3}), $\OPT$ and {\sc $\OPT_s$}. (b) The average approximation ratios of {\sc InelasDemAlloc} applied to instances with different percentage of elastic demands, against the number of customers with 95\% confidence interval. }
\end{figure*}
\begin{figure*}[!htb]
	\begin{subfigure}[b]{0.5\textwidth}
		\begin{center}
			\includegraphics[scale=0.5]{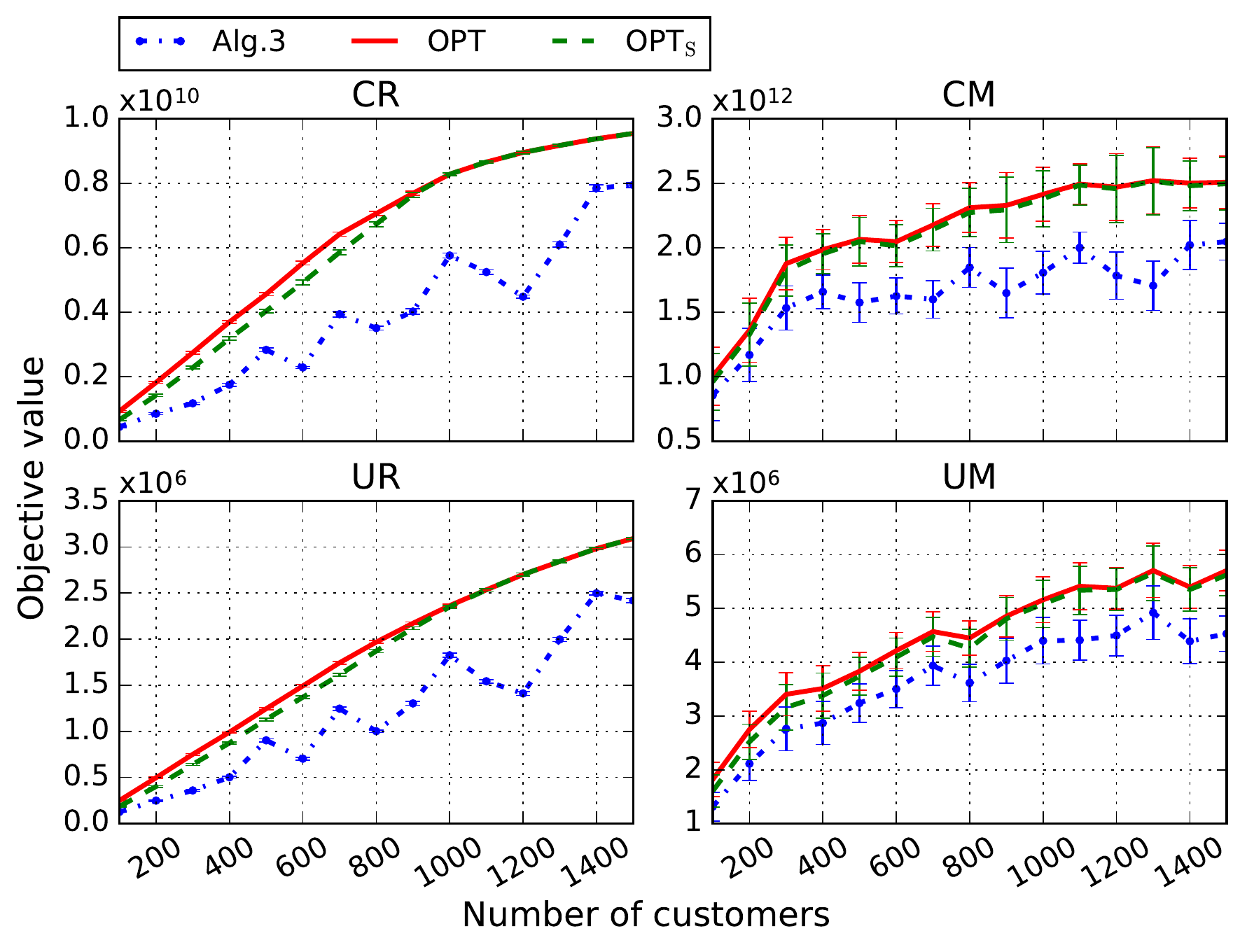}
		\end{center}
		\caption{}
		\label{fig:obj123}
	\end{subfigure}
	\begin{subfigure}[b]{0.5\textwidth}
		\begin{center}
			\includegraphics[scale=0.5]{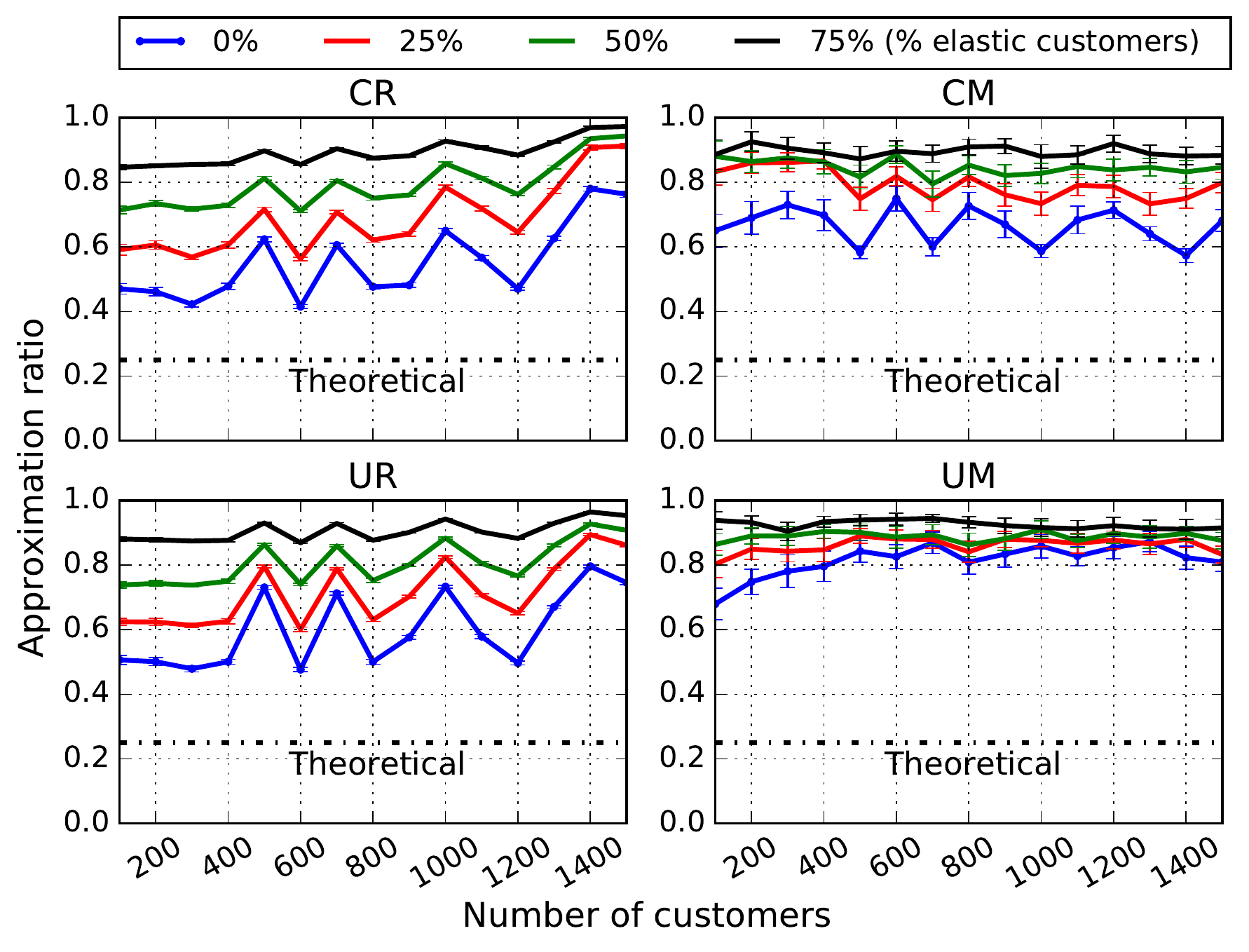}
		\end{center}
		\caption{}
		\label{fig:ar123}
	\end{subfigure}
		\caption{\color{black}  The IEEE 123-node system: (a) The average objective values of {\sc MixDemAlloc}  with inelastic demands only (denoted by {\sf Alg.3}), $\OPT$ and {\sc $\OPT_s$}. (b) The average approximation ratios of {\sc InelasDemAlloc} applied to instances with different percentage of elastic demands, against the number of customers with 95\% confidence interval.}
	\label{fig:sim}
\end{figure*}
\begin{figure*}[!htb]
	\begin{subfigure}[b]{0.5\textwidth}
		\begin{center}
			\includegraphics[scale=0.5]{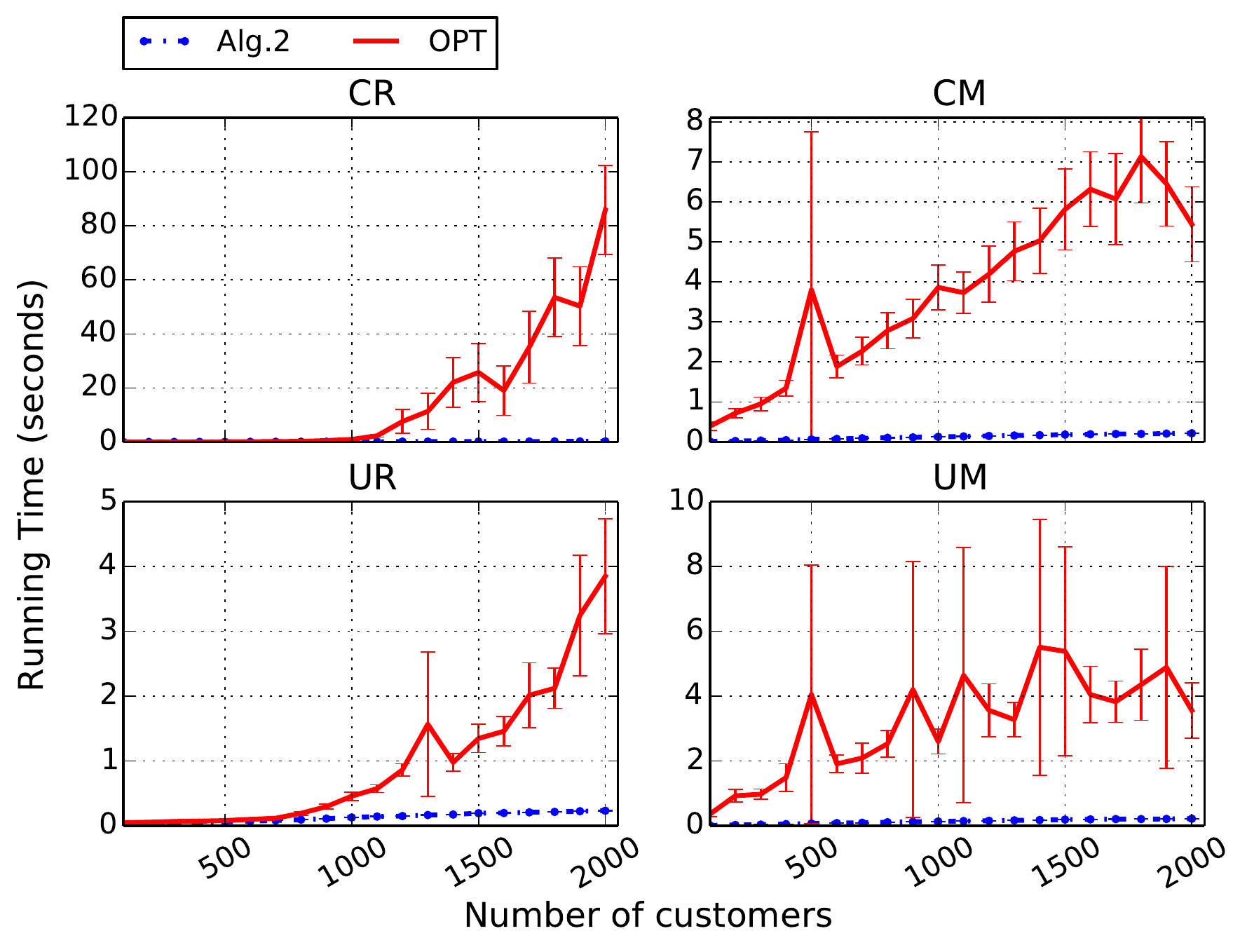}
		\end{center}
		\caption{}
	\end{subfigure}
	\begin{subfigure}[b]{0.5\textwidth}
		\begin{center}
			\includegraphics[scale=0.5]{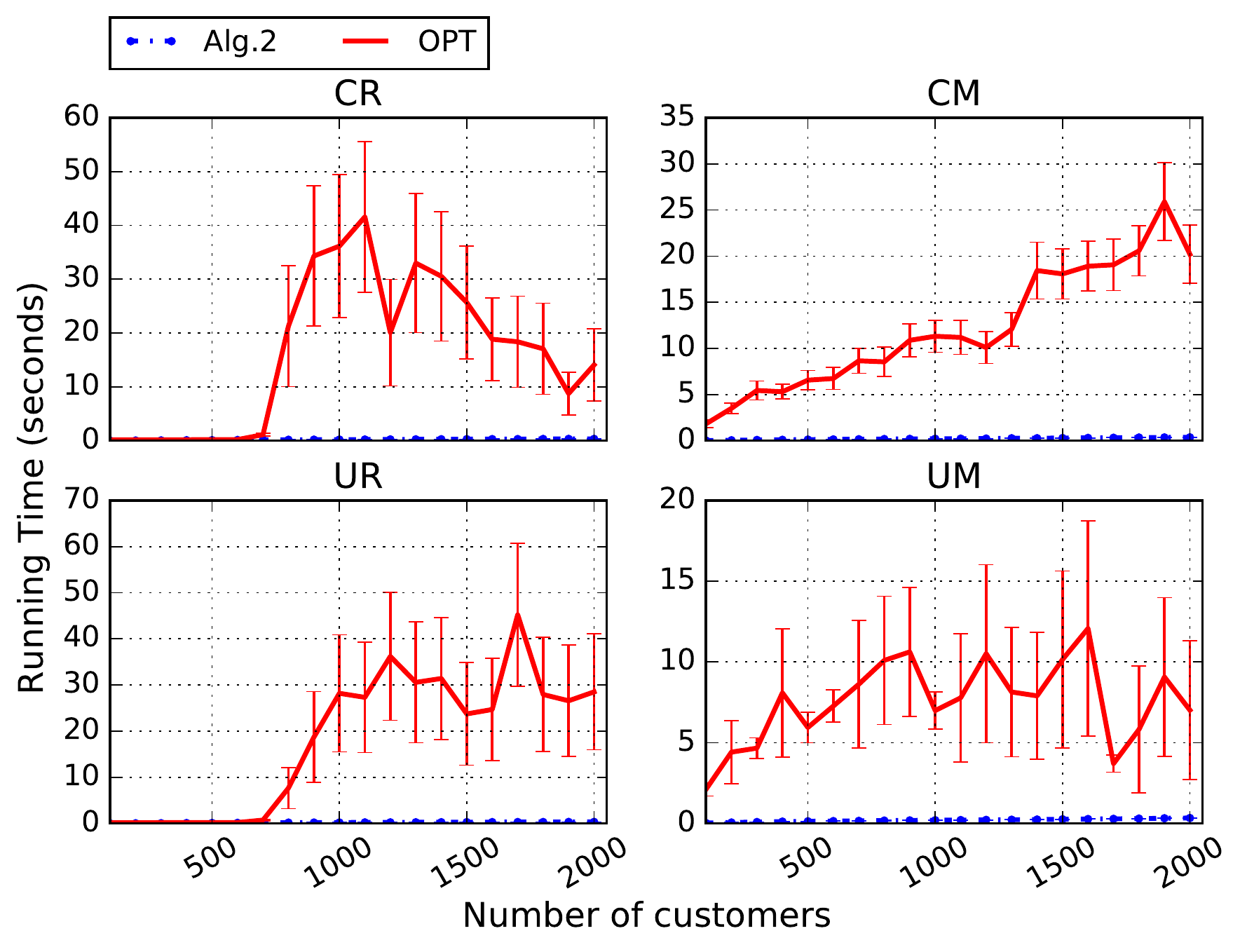}
		\end{center}
		\caption{}
	\end{subfigure}
	\caption{\color{black} The average running time of {\sc InelasDemAlloc} (denoted by {\sf Alg.2}) and $\OPT$ for (a) the 38-node system and (b) the IEEE 123-node system against the number of customers with 95\% confidence interval. }
	\label{fig:time}
\end{figure*}

\subsubsection{Optimality}
\

{\color{black} Fig~\ref{fig:obj38} (resp., \ref{fig:obj123}) present the objective value attained by {\sc MixDemAlloc} with only inelastic demands, $\OPT$, and $\OPT_s$ respectively using the 38-node system (resp., the IEEE 123-node system) for up to $1500$ customers. }Each run is repeated $40$ times.
The utility values attained by $\OPT$ and $\OPT_s$ are almost identical in all scenarios. This is due to the insignificance of the terms associated with transmission power loss in {\sc MaxOPF}. We observe from the figure that {\sc MixDemAlloc} performs relatively better when loads are mixed between residential and industrial (CM and UM). 

We note that {\sc MixDemAlloc} objective does not smoothly increase in the number of customers which is due to the way customers are arranged into different groups in algorithm {\sc InelasDemAlloc}. Customer utility is rounded by the factor $L$ which is a function of the number of customers. We observe from the figure that such rounding sometimes obtains lower utilities by increasing the number of customers.
 
{\color{black} The empirical approximation ratios for the two networks are plotted in Fig.~\ref{fig:ar38} and \ref{fig:ar123} against the number of customers, along with the theoretical approximation ratio given by Theorem~\ref{thm:greedyalloc} part 1.} The lines in Fig.~\ref{fig:ar38} (resp.,~\ref{fig:ar123}) correspond to different percentages of elastic demands (i.e.,  $\tfrac{|\cF|}{|\cN|}=0, 0.25, 0.50, 0.75$).  When a line is close to $y=1$, it is close to the optimal solution. As the percentage of elastic demands increases, {\sc MixDemAlloc} consistently achieves better solutions in all scenarios. The average empirical ratios are more than $0.4$ in all cases which is well above the theoretical worst case results. This suggests that {\sc MixDemAlloc} performs relatively well in practice under difference scenarios.
 

\medskip

\subsubsection{Transmission Power Loss}
\

To understand the transmission power loss in practice, we evaluate the loss ratio (i.e., $\delta$) in {\sc MixDemAlloc} (with inelastic demands only) and $\OPT_s$ respectively for the 38-node system. The results are plotted in Fig.~\ref{fig:loss}. As one may expect, $\OPT_s$ has a higher loss percentage since it satisfies more demands than {\sc MixDemAlloc} in general. We observe that when customers are all residential, {\sc MixDemAlloc} always obtains feasible solutions without any reduction in link capacities (i.e., $\delta= 0$). The maximum loss ratio obtained is $5.5\%$ in UM scenario for both $\OPT_s$ and {\sc MixDemAlloc}. The ramification is that {\sc MixDemAlloc} can attain a good empirical approximation ratio in practice, because the transmission power loss is usually small in practical electric networks.

\begin{figure}[!h]
	\begin{center}\hspace*{-5pt}
		\includegraphics[scale=0.5]{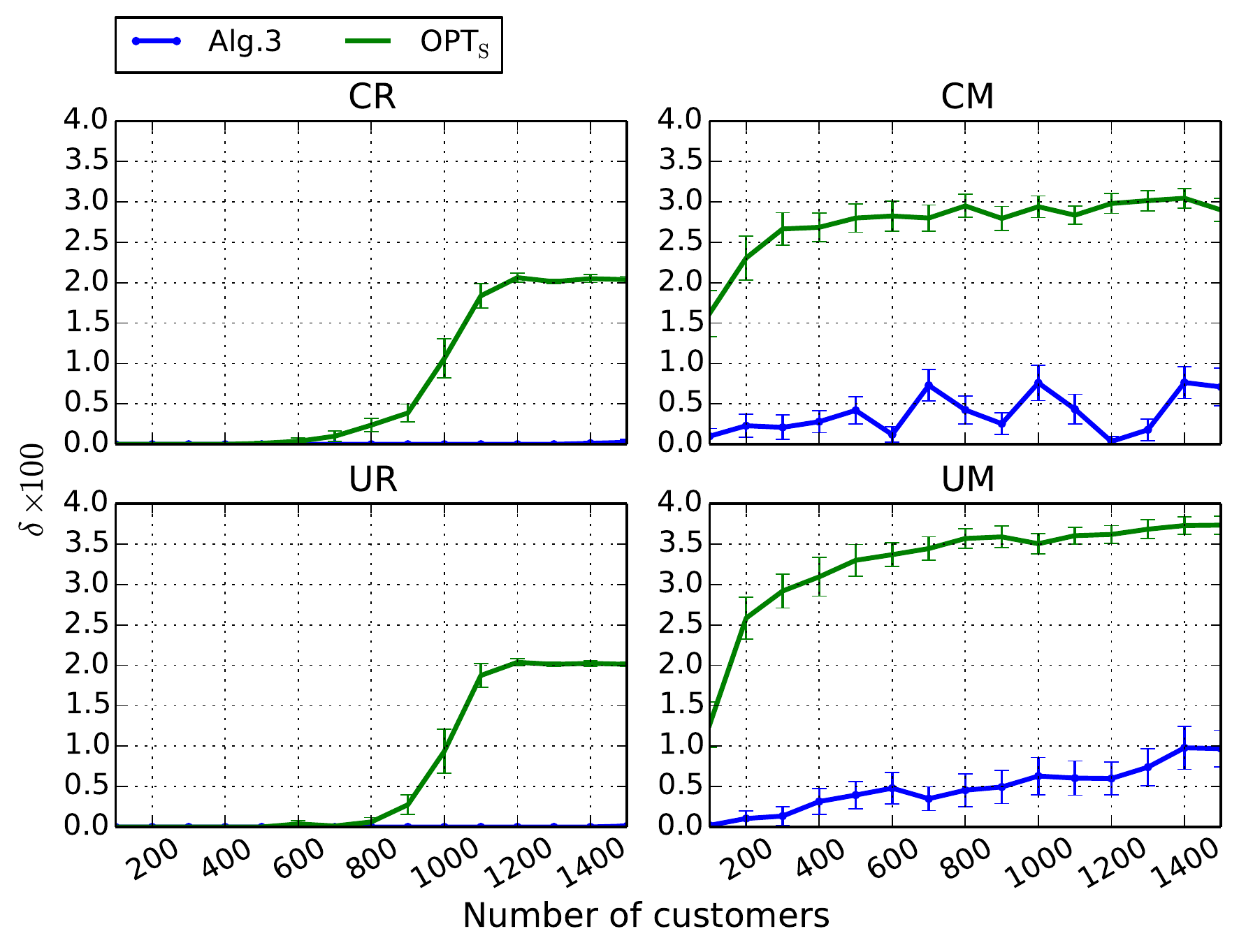}
	\end{center}
	\caption{The average loss of {\sc MixDemAlloc}  with inelastic demands only (denoted by {\sf Alg.3}) and $\OPT_s$ for the $38$ node system.}
	\label{fig:loss} 
\end{figure}

\medskip

\subsubsection{Running Time}
\

One of the main goals of this work is to develop efficient algorithms that ensure a polynomial running time.
The computational time of {\sc InelasDemAlloc} is compared against the Gurobi solver. Computational time is of significant importance when designing centralized controllers for micro-grids since this will have implications on the overall stability. 
The running time is presented in Fig.~\ref{fig:time} under different scenarios for up to $2000$ customers, each point is repeated $100$ times.  

We observe the running time of {\sc InelasDemAlloc} is always in milliseconds and linearly increases in the number of customers $n$. On the other hand, the average running time of $\OPT$ is much higher in many cases (measured in minuets) and has no polynomial guarantee. Throughout the simulations, we observed many timeouts especially in scenario CR. The actual running time of $\OPT$ may substantially increase if we increase the timeout parameter in Gurobi optimizer. 
The running time of  $\OPT$ can be much higher if we consider larger network topologies, whereas, linear increase is expected for {\sc InelasDemAlloc} in practice. Therefore, our algorithm is far more scalable than any known optimal algorithm.
We note that the implementation of our algorithms can be further optimized using C programming language since the current one is based on Python that is relatively slow.


\section{Conclusion}\label{sec:concl}

While optimal power flow problem has been extensively considered in power engineering literature, the theoretical understanding of this problem is lacking. Recent advances in convex relaxation techniques for optimal power flow problem \cite{gan2015exact,low2014convex1,low2014convex2} have generated substantial leaps in proven efficient algorithms for optimal power flow problem, which thus far were only applied to demand response with elastic demands. 

In order to advance the frontier for tackling optimal power flow problem, we consider combinatorial allocation of inelastic demands considering power flows. We first showed the hardness of this problem in a general form. We next presented an efficient approximation algorithm to a relaxed problem. Our simulation studies show that the proposed algorithm can produce close-to-optimal solutions in practice. Our results present the first step of fundamental understanding of combinatorial allocation problem of power flows, which naturally extends the classical real-valued combinatorial flow optimization, but is also a substantial departure from the classical problem.  Our results generalize the recent works of complex-demand knapsack problem and unsplittable flow problem.  Recently, power allocation has been extended to consider scheduling problems \cite{KKEC16CSP,KKCE16b}.

\bibliographystyle{ieeetr}
\bibliography{reference}

\appendix

\subsection{Derivation of Branch Flow Model for Trees}
The branch flow model can be derived from Eqns.~\raf{eq:1}-\raf{eq:3} as follows. 
First rewrite Eqn.~\raf{eq:1} by taking the complex conjugate of the both sides:
\begin{align}
& I_{i,j} = \frac{S_{i,j}^\ast}{V_i^\ast} \label{eq:i}\\
\Rightarrow \ \ & |I_{i,j} |^2 =  \frac{|S_{i,j}|^2}{|V_i|^2} \label{eq:ii}
\end{align}
Substituting Eqn.~\raf{eq:i} in Eqn.~\raf{eq:2}, we obtain:
$$
V_j = V_i - I_{i,j} z_{i,j}= V_i - \frac{S_{i,j}^\ast}{V_i^\ast} z_{i,j}
$$
Taking the magnitude square of the both sides yields:
\begin{align}
|V_j|^2 &=  |V_i|^2 + | \tfrac{S_{i,j}^\ast}{V_i^\ast} z_{i,j} |^2 -\notag\\
&\qquad \qquad  2 |V_i| |\tfrac{S_{i,j}^\ast}{V_i^\ast} z_{i,j}| (\sin \theta_1 \sin \theta_2 + \cos \theta_1 \cos \theta_2) \notag\\
&=  |V_i|^2 + \frac{|S_{i,j}|^2}{|V_i|^2} |z_{i,j}|^2 - 2 |V_i| |\tfrac{S_{i,j}^\ast}{V_i^\ast} z_{i,j}| \cos( \theta_1 - \theta_2), \label{eq:b1}
\end{align}
where we let $\theta_1 \triangleq \arg(V_i)$ and $\theta_2 \triangleq \arg(\tfrac{S_{i,j}^\ast}{V_i^\ast} z_{i,j})$. 
Note that 
$$
\re\Big(\frac{V_i}{\tfrac{S_{i,j}^\ast}{V_i^\ast} z_{i,j}} \Big) = \frac{|V_i|}{\tfrac{|S_{i,j}^\ast|}{|V_i^\ast|}| z_{i,j}|}  \cos( \theta_1 - \theta_2)
$$
We can then rewrite Eqn.~\raf{eq:b1} in the following form
\begin{align*}
|V_j|^2 & =  |V_i|^2 + \frac{|S_{i,j}|^2}{|V_i|^2} |z_{i,j}|^2 -2   |\tfrac{S_{i,j}^\ast}{V_i^\ast} z_{i,j}|^2 \re\Big(\frac{V_i}{\tfrac{S_{i,j}^\ast}{V_i^\ast} z_{i,j}} \Big) \\
 & = |V_i|^2 + \frac{|S_{i,j}|^2}{|V_i|^2} |z_{i,j}|^2 -2   \re\Big(	\frac{|S_{i,j}|^2}{|V_i|^2} |z_{i,j}|^2  \frac{V_i \cdot V_i^\ast}{S_{i,j}^\ast z_{i,j}} \Big)
\end{align*}
Then applying the following from the properties of complex numbers: $|V_i|^2 = V_i \cdot V_i^\ast$, $\tfrac{1}{z_{i,j}}=\frac{z^\ast_{i,j}}{|z_{i,j}|^2} $, and $\tfrac{1}{S^\ast_{i,j}}=\frac{S_{i,j}}{|S_{i,j}|^2} $ we obtain:
\begin{align}
|V_j|^2 & = |V_i|^2 + \frac{|S_{i,j}|^2}{|V_i|^2} |z_{i,j}|^2 -2 \re\Big( \frac{|S_{i,j}|^2}{|V_i|^2} |z_{i,j}|^2  \frac{|V_i|^2 z_{i,j}^\ast S_{i,j}}{|S_{i,j}|^2 |z_{i,j}|^2}	\Big) \notag\\
& = |V_i|^2 + \frac{|S_{i,j}|^2}{|V_i|^2} |z_{i,j}|^2 -2 \re(z_{i,j}^\ast \cdot S_{i,j}) \label{eq:v}
\end{align}
 The branch follow model is obtained by Eqns.~\raf{eq:3}, \raf{eq:ii}, \raf{eq:v}.

\subsection{Proofs}

In the following proofs, we rely on a hardness result of a well-known (weakly) NP-Hard problem called the Subset Sum problem (\textsc{SubSum}).
\medskip

\begin{definition}[\textsc{SubSum}]
Given a set of positive integers ${A} \triangleq\{a_1,\ldots,a_m\}$ and a positive integer $B$, decide if there exists a subset of ${A}$ that sums-up to exactly $B$.
\end{definition}
Note that $B$ is generally not polynomial in $m$. Otherwise, \textsc{SubSum} can be solved easily in polynomial time by dynamic programming.

\medskip

\begin{customthm}{1}
	Unless {\rm P}={\rm NP}, there is no $(\alpha,\beta)$-approximation for {\sc MaxOPF$_{\sf V}$} (even when $|\cE| = 1$) by a polynomial-time algorithm in $n$, for any $\alpha$ and $\beta$ have polynomial length in $n$.
\end{customthm}
\begin{proof} 
The basic idea is that we show a reduction from \textsc{SubSum} to \textsc{MaxOPF}$_{\sf V}$. 
Assume that there is an ($\alpha, \beta$)-approximation for \textsc{MaxOPF}$_{\sf V}$. We construct an instance $I'$ of  \textsc{MaxOPF}$_{\sf V}$ for each instance  $I$ of \textsc{SubSum}, such that \textsc{SubSum}$(I)$ is a ``yes'' instance if and only if the $(\alpha, \beta)$-approximation of 	\textsc{MaxOPF}$_{\sf V}(I')$ gives a total utility at least $\alpha$. Since \textsc{SubSum} is NP-hard, there exists no $(\alpha,\beta)$-approximation for {\sc MaxOPF$_{\sf V}$} in polynomial time. Otherwise, \textsc{SubSum} can be solved in polynomial time. 

Given a \textsc{SubSum} instance $I = (A, B)$, where $A = \{a_1,...,a_m \}$, we define \textsc{MaxOPF}$_{\sf V}$ instance $I'$ as follows.

\begin{itemize}

\item Consider a network with a single edge $e=(0,1)$. Let $z_e \triangleq 1 + \bf i$. 
\item Fix some $\delta >0$, set $v_0$ in instance $I'$ by
$$
v_0 \triangleq \frac{1}{\tfrac{\delta}{2} - \epsilon}  \Big|\sum_{k=1}^{m+1} s_k\Big|^2, \ \text{ for arbitrarily small } \epsilon > 0.
$$ 

Set $v_{\max} = v_0 + \delta$ and $v_{\min} = v_0 - \delta$.
\item Let $\cN= \cI =  \{1,...,m+1\}$ be the set of customers attached to node $1$ (see Fig.~\ref{fig:hard-V}). 
Define  
\begin{align*}
\overline\Lambda &\triangleq  \max\{v_0 - \tfrac{1}{\beta} v_{\min}, \beta v_{\max} - v_0\}. 
\end{align*}
For each $k \in \{1,...,m+1\}$, define the customers' demands and utilities as follows:
$$
s_k  \triangleq \overline\Lambda a_k, \quad   u_k \triangleq  \tfrac{\alpha}{m+1}	 \quad
s_{m+1} \triangleq - {\bf i}  \overline\Lambda B, \quad  u_{m+1} \triangleq 1
$$

\end{itemize}

\begin{figure}[!htb]
	\begin{center}
		\includegraphics{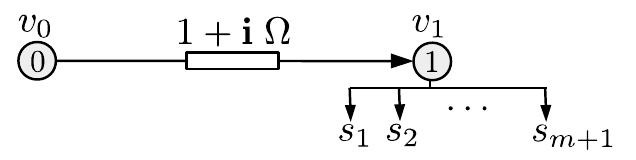}
	\end{center}
	\caption{A gadget for reduction from \textsc{SubSum} to \textsc{MaxOPF}$_{\sf v}$.}
	\label{fig:hard-V}
\end{figure}

First, we prove that if \textsc{SubSum}$(I)$ is a ``yes'' instance, then the $(\alpha, \beta)$-approximation of \textsc{MaxOPF}$_{\sf V}(I')$ gives a total utility at least $\alpha$. If \textsc{SubSum}$(I)$ is a ``yes'' instance, then $\sum_{k = 1}^{m}a_k \hat{x}_k = B$, where $\hat x \in \{0,1\}^m$ is a solution of \textsc{SubSum}. We construct a solution $x\in \{0,1\}^{m+1}$ for \textsc{MaxOPF}$_{\sf V}(I')$ by
$$
x_k = \left\{
\begin{array}{l l}
\hat x_k & \quad \text{if $k = 1, ..., m$}\\
1 & \quad \text{if $k = m+1$}
\end{array} \right.
$$
We formulate Cons.~\raf{eq:c0} as $S_e = \sum_{k\in \cN} s_k + z_e \ell_e$ and substitute it in Cons.~\raf{eq:c2} to obtain:
$$
\tfrac{1}{2}(v_0 - v_{\max} ) \le  \re(z_{e }^\ast  S_{e }) - \tfrac{1}{2}|z_e|^2 \ell_e  \le \tfrac{1}{2}(v_0 - v_{\min})
$$
Note that
\begin{align*}
\tfrac{1}{2}(v_0 - v_1) = \ & \re(z_{e }^\ast  S_{e }) - \tfrac{1}{2}|z_e|^2 \ell_e  \\
= \ & \re( \sum_{k\in\cN} z_{e }^\ast s_k x_k + |z_e|^2 \ell_e ) - \tfrac{1}{2}|z_e|^2 \ell_e  \\
= \ &  \sum_{k=1}^{m+1} (z_e^{\rm R}s^{\rm R}_k + z_e^{\rm I}s^{\rm I}_k)x_k + \tfrac{1}{2}|z_e|^2 \ell_e \\
= \ &  \overline{\Lambda} \Big(\sum_{k=1}^m  a_k x_k -  B x_{m+1}\Big) +\ell_e   =  \ell_e,
\end{align*}
where $\sum_{k = 1}^{m}a_k \hat x_k - B = 0$ and $|z_e|^2 = 2$.

By Cons.~\raf{eq:c0} and the definition of $v_0$, we obtain: 
\begin{equation}
\ell_e = \frac{|S_e|^2}{v_0} = \frac{(\tfrac{\delta}{2}- \epsilon)\cdot \Big|\sum_{k=1}^{m+1} s_k x_k\Big|^2 }{ \Big|\sum_{k=1}^{m+1} s_k\Big|^2} \le  \tfrac{\delta}{2} - \epsilon \label{eq:le}
\end{equation}
Hence,
\begin{align*}
0 \le \frac{v_0 - v_1}{2} = \ell_e  \le  \tfrac{\delta}{2}- \epsilon <\frac{v_0 - v_{\min}}{2}
\end{align*}
Therefore, $v_{\min} \le v_1 \le v_0 = v_{\max}$ and Cons.~\raf{eq:c2} is satisfied.
Since $u_{m+1}=1$, we have $u(x) \ge 1$, and $\OPT$ is also at least 1. By the feasibility of this solution, the $(\alpha, \beta)$-approximation of \textsc{MaxOPF}$_{\sf V}(I')$ gives a total utility at least $\alpha$.


Conversely, assume that the $(\alpha,\beta)$-approximation algorithm gives a solution $x\in\{0,1\}^{m+1}$ of total utility at least $\alpha$. Customer $m+1$ must be satisfied in this solution. Then, Cons.~\raf{eq:betac2} implies
\begin{align*}
& \frac{v_0 - \beta v_{\max}}{2} \le \sum_{k=1}^m   \overline{\Lambda}(a_k x_k -  B) +\ell_e \le  \frac{v_0 - \tfrac{1}{\beta} v_{\min}}{2} \notag \\
\Rightarrow & - \frac{ (\beta v_{\max}-v_0) }{2 \overline{\Lambda}} -\frac{\ell_e}{\overline{\Lambda}} \le \sum_{k=1}^m a_k x_k -  B \le \frac{ v_0 - \tfrac{1}{\beta} v_{\min} }{2\overline{\Lambda}} - \frac{\ell_e}{\overline{\Lambda}} \label{eq:h2}
\end{align*}

The right-hand side can be bounded by
$$ \frac{v_0 - \tfrac{1}{\beta} v_{\min}}{2 \overline{\Lambda}}  - \frac{\ell_e}{\overline{\Lambda}}  \le \frac{1}{2}  < 1$$
Using Eqn.~\raf{eq:le}, the left-hand side can be bounded by
$$  - \frac{ (\beta v_{\max}-v_0 )}{2 \overline{\Lambda}} -\frac{\ell_e}{\overline{\Lambda}}  \ge -\frac{1}{2} - \frac{\tfrac{\delta}{2}- \epsilon}{ \overline{\Lambda}}> -1$$

Since $|\sum_{k=1}^m a_k x_k - B| < 1$,	 and $a_k, B$ are integers, this implies  $ \sum_{k=1}^m a_k x_k - B = 0$. Hence, \textsc{SubSum}$(I)$ is a ``yes" instance.
\end{proof}

\begin{customthm}{2} 
	Unless P=NP, there exists no ($\alpha$, $\beta$)-approximation for \textsc{MaxOPF}$_{\sf C}$ in general networks, for any $\alpha$ and $\beta$ have polynomial length in $n$, even in purely resistive electric networks (i.e., ${\rm Im}(z_{i,j})= 0$ for all $(i, j) \in \cE$ and ${\rm Im}(s_k) = 0$ for all $k \in {\cal N}$).
\end{customthm}

\begin{proof}
The basic idea is similar to that of Theorem~\ref{thm:OPFV-hard}. We consider a purely resistive electric network that contains a cycle.

Given a \textsc{SubSum} instance $I = (A, B)$, where $A = \{a_1,...,a_m \}$, we define a $\textsc{MaxOPF}_{\sf C}$ instance $I'$ as follows.
Define the customers' demands and utilities by
$$
s_k  \triangleq  a_k, \quad   u_k \triangleq  \tfrac{\alpha}{m+1}	 \quad
s_{m+1} \triangleq   B, \quad  u_{m+1} \triangleq 1
$$
Consider the network in Fig.~\ref{f5} for \textsc{MaxOPF}$_{\sf C}$ such that all power demands $\{ s_k \}_{k=1,...,m}$ are attached to node $a$,  $s_{m+1}$ is attached to $b$, and $z_{0, a} = z_{0, b} =  z_{a, b} = 1$.

\begin{figure}[!htb]
\begin{center}
	\includegraphics{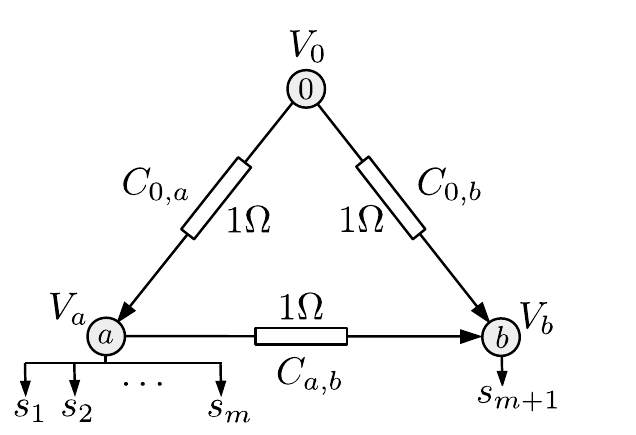}
\end{center}
\caption{A gadget for reduction from \textsc{SubSum} to \textsc{MaxOPF}$_{\sf C}$.}
\label{f5}
\end{figure}

Denote the transmitted power, current, and resistance on edge $(i,j)$ by $S_{i,j}$, $I_{i,j}$, $z_{i,j}$ respectively. Let $S_a\triangleq \sum_{k=1}^{m} s_k x_k$ and  $S_b\triangleq s_{m+1} x_{m+1}$ be the total demand on node $a$ and $b$ respectively.
Without loss of generality, assume $V_a \ge V_b$. By the power balance equations, we obtain:
\begin{align*}
S_a &= S_{0,a}- I^2_{0,a} z_{0,a} - S_{a,b},\\
S_b &= S_{0,b} - I^2_{0,b} z_{0,b} + S_{a,b} - I^2_{a,b}z_{z,b}.
\end{align*}
Using Ohm's law $I_{i,j} = \frac{V_i - V_j}{z_{i,j}}$ and $S_{i,j} = V_i I_{i,j}$, we obtain:
\begin{align*}
S_a &= \frac{V_0 (V_0 - V_a)}{z_{0,a}} - \frac{(V_0 - V_a)^2}{z_{0,a}} - \frac{V_a (V_ a - V_b)}{z_{a,b}},\\
S_b &= \frac{V_0 (V_0 - V_b)}{z_{0,b}} - \frac{(V_0 - V_b)^2}{z_{0,b}} + \frac{V_a (V_ a - V_b)}{z_{a,b}} -  \frac{(V_ a - V_b)^2}{z_{a,b}}
\end{align*}
Note that the above equations can also obtained when $V_b \ge V_a$.

Since $z_{0,a}=z_{0,b} = z_{a,b} = 1$, we obtain:
\begin{align*}
2 V_a^2 -  (V_0 + V_b) V_a + S_a = 0 \\
2 V_b^2 -  (V_0 + V_a) V_b + S_b = 0
\end{align*}
It follows that
\begin{align}
S_a - S_b & = 2 V_b^2 - 2 V_a^2 + V_0 (V_a - V_b) \\
& = (V_b - V_a) (2 V_b + 2 V_a - V_0) \label{eqn:pwr-diff} 
\end{align}

Let $x=(x_1,\ldots,x_{m+1})$ be a solution of the $(\alpha,\beta)$-approximation to \textsc{MaxOPF}$_{\sf C}$, where $x_k$ indicates if power demand $s_k$ is satisfied for $k \in \{1, ..., m\}$, and $x_{m+1}$ indicates if power demand $s_{m+1}$ is satisfied.  If the total utility of $x$ is at least $\alpha$, then we necessarily have $x_{m+1}=1$. 
Considering the capacity constraints, we obtain:
\begin{align*}
|V_a (V_a - V_b)| & \le \beta C_{a,b}, \\
|V_0 (V_0 - V_a)| & \le \beta C_{0,a}, \\  
|V_0 (V_0 - V_b)| & \le \beta C_{0,b}
\end{align*}
Note that $V_0 > V_a$, because $V_0$ is attached to generation. Then, we obtain:
\begin{equation}
\frac{V_0^2  - \beta C_{0,a}}{V_0} \le  V_a \label{eqn:pwr-diff2}
\end{equation}

By substituting Eqn.~(\ref{eqn:pwr-diff}), Eqn.~(\ref{eqn:pwr-diff2}) and considering $V_0 > V_a$ and $V_0 > V_b$, we obtain:
\begin{align*}
|V_a (V_a - V_b)| & \le \beta C_{a,b} \\
 \Rightarrow \qquad |S_a - S_b| & \le  \beta C_{a,b} \Big|\frac{2 V_b + 2 V_a - V_0}{V_a} \Big| \\
 & \le \beta C_{a,b} \frac{3 V_0}{V_a} \le \frac{3 \beta C_{a,b} V_0^2}{V_0^2  - \beta C_{0,a}}
\end{align*}

Next, we set $C_{a,b} < \frac{V_0^2  - \beta C_{0,a}}{3 \beta V_0^2}$, such that 
$$
|V_a (V_a - V_b)| \le \beta C_{a,b} \; \Rightarrow \; |S_a - S_b| = \bigg| B - \sum_{k = 1}^m a_k x_{k}\bigg| < 1
$$
Thus, \textsc{SubSum}$(I)$ is a ``yes" instance.

Conversely, a feasible solution $x\in\{0,1\}^{m+1}$ satisfying $\sum_{k=1}^na_kx_k-Bx_{m+1}=0$, with $x_{m+1}=1$, we can see that $S_a = S_b = B$. Next, we set $V_{a} = V_{b} = V'$ for some positive value $V' < V_0$ and $C_{0,a} = C_{0,b} = B + (V_0 - V')^2$. This is a feasible solution (with $\beta=1$) to $\textsc{MaxOPF}_{\sf C}(I')$, which has utility at least 1. Thus the $(\alpha,\beta)$-approximation returns a solution of utility at least $\alpha$.
\end{proof}

\begin{customlem}{5} 
	Define $W_j^{(1)}$ as in Eqn.~\raf{eq:v1}. We obtain:
	\begin{equation}
	|W_j^{(1)}| \le \lfloor \sec \theta \cdot \sec\tfrac{\theta}{2}  \rfloor+ 1 \label{eq:b3}
	\end{equation}
\end{customlem}
\begin{proof}
Assume $|W_j^{(1)}| > 1$. We note that based on the tree topology, all demand paths share a single source (i.e., the root).
When adding demand $s_{k_j}$ to $R_{j-1}$, Eqn.~\raf{eq:v1} implies that each element of $W_j^{(1)}$ if added to $R_j$  must cause violation at some (possibly more than one) edges. These violations occur only along the path $P_{k_j}$. Denote by $E \subseteq P_{k_j}$  the set of edges at which violations occur (after adding some $k \in W_j^{(1)}$ to $R_j$). Define $e^\circ \in E$ to be the closest edge to the root satisfies $e^\circ \in P_k$ for all  $k \in W_j^{(1)}$ because all demands share the same source (see Fig.~\ref{fig:treepf}(a)). This property allows us to bound $| W_j^{(1)}|$.

\begin{figure*}[!htb]
	\begin{subfigure}[b]{0.5\textwidth}
		\begin{center}
			\includegraphics[scale=.6]{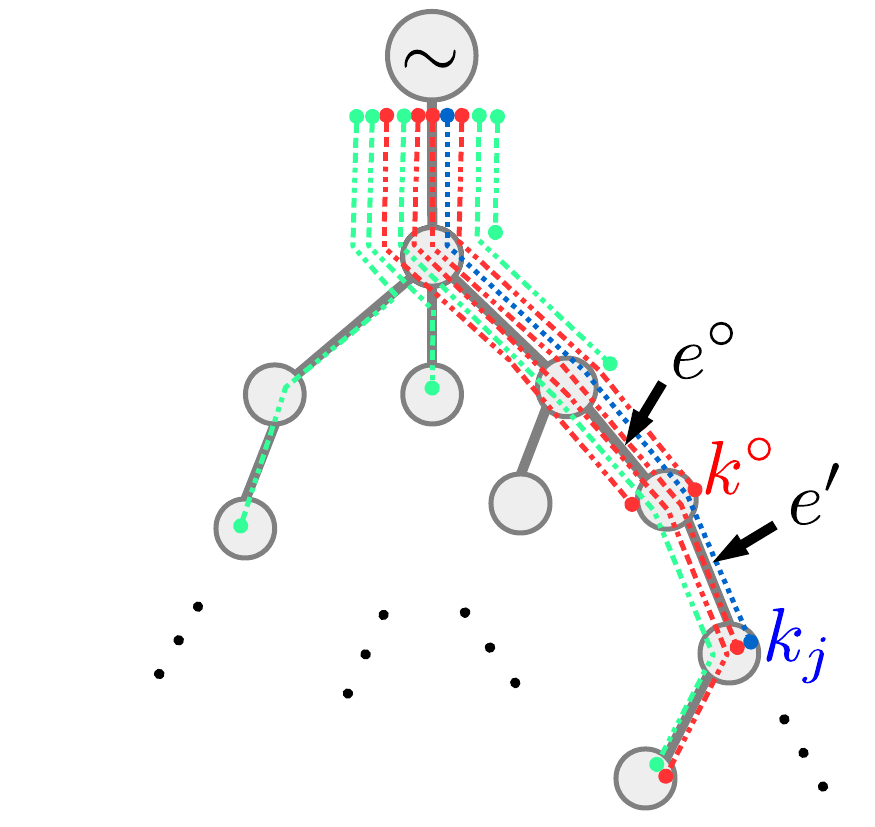}
		\end{center}
		\caption{}
	\end{subfigure}
	\begin{subfigure}[b]{0.5\textwidth}
		\begin{center}
			\includegraphics[scale=0.6]{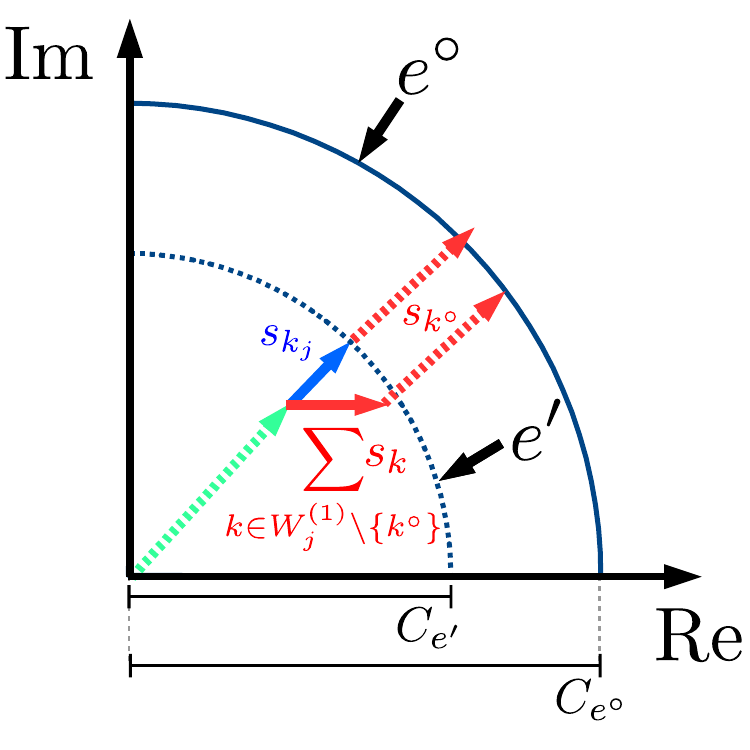}
		\end{center}
		\caption{}
	\end{subfigure}
	\caption{(a) The dotted lines illustrate the demand paths in $R_{j}\backslash \{k_j\}$ (green), $W_j^{(1)}$ (red), and $k_j$ (black). The paths of  $W_j^{(1)}$ and $k_j$ intersect at edge ${e^\circ}$. Adding any demand from $W_j^{(1)}$ (red) to $R_j$  (green and black) will violate the pwoer capacity constraints at $C_{{e^\circ}}$ and $C_{e'}$. (b) The horizontal and vertical axes correspond to the real and imaginary components of complex-valued demands respectively. The circular lines visualize Cons.~\raf{con} at edges ${e^\circ}$ and $e'$ respectively. The dotted green arrow corresponds to $\sum_{k \in R_{j}\backslash  \{k_j\}} s_k$, the solid black is $s_{k_j}$, the solid red is $\sum_{k \in W_j^{(1)}\backslash \{k^\circ\}} s_{k}$, and the dotted red arrows is $s_{k^\circ}$ (which is replicated to illustrate the violation). }
	\label{fig:treepf}
\end{figure*}  

More specifically, define customer $k^\circ \in W_j^{(1)}$, such that 
$$\Bigg|\sum_{k: k \in R_j, e^\circ \in P_k} s_k +  s_{k^\circ } \Bigg| > C_{e^\circ}$$
See Fig.~\ref{fig:treepf}(b) for an illustration. Note that
\begin{equation}
\Bigg|\sum_{k: k \in R_j \wedge e^\circ \in P_k} s_k+  s_{k^\circ } \Bigg| \ > \ C_{e^\circ} \ \ge \ \Bigg|\sum_{\substack{k: k \in  R_j\cup W_j^{(1)}\backslash \{k_j\} \\ \wedge e^\circ \in P_k }} s_k \Bigg| \label{eq:rel}
\end{equation}

\begin{claim} \label{claim:kj}
Given $k_j \in A_i$ for some $i$, then $W_j \cap \bigcup_{l=1}^{i-1} B_{l} = \varnothing$. Thus, for any $k_j \in M $, we have:
\begin{equation}
|s_{k_j}|\le | s_{k}| \mbox{ \ for all \ } k \in W_j \label{eq:order}
\end{equation}
\end{claim}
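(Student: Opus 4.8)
The plan is to deduce both parts of the claim from three ingredients: (a) a few elementary structural facts about the sequences $(R_j)$ and $(W_j)$; (b) a monotonicity lemma stating that, under the standing hypotheses $\theta,\theta_{\rm zs}<\tfrac{\pi}{2}$, enlarging a set of customers can only tighten Cons.~\raf{con} and~\raf{conV2}; and (c) the defining property of {\sc GreedyAlloc}, that it scans customers in non-decreasing order of $|s_k|$.

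First I would record the structural facts. Since the customers of $M=\{k_1,\dots,k_r\}$ are listed in the scan order, $M=A_1\cup\cdots\cup A_m$ is disjoint from $\bigcup_l B_l$, and since every $W_{j'}$ is a subset of $\bigcup_l B_l$, the customers $k_1,\dots,k_{j-1}$ are never deleted once inserted; hence $\{k_1,\dots,k_{j-1}\}\subseteq R_{j-1}$, and this set is exactly the prefix of the sorted list $M$ that precedes $k_j$. In particular, if $k_j\in A_i$ then $A_1\cup\cdots\cup A_{i-1}\subseteq\{k_1,\dots,k_{j-1}\}\subseteq R_{j-1}$. I would also note that each $R_{j-1}$ is feasible for \textsc{sMaxOPF}: $R_0=R^\ast$ is feasible, and each $W_{j'}$ is chosen precisely so that $R_{j'}$ is feasible.

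Next I would prove the monotonicity lemma: if $S'\subseteq S$ and $S'$ violates Cons.~\raf{con} or~\raf{conV2} at some edge, then so does $S$. For~\raf{conV2} this is immediate, since $\theta_{\rm zs}<\tfrac{\pi}{2}$ makes every term $Q_k(e)\ge 0$, so the left-hand side is monotone in the customer set. For~\raf{con}, fix the violated edge $e$, put $\psi\triangleq\arg\!\big(\sum_{k\in S':\,e\in P_k}s_k\big)$, and use that all demands (and hence all their partial sums) lie within a closed arc of angular width $\theta<\tfrac{\pi}{2}$, so that $\re\!\big(e^{-\mathbf{i}\psi}s_k\big)\ge|s_k|\cos\theta\ge 0$ for every customer $k$; then
\[
\Big|\sum_{k\in S:\,e\in P_k}s_k\Big|\ \ge\ \re\!\Big(e^{-\mathbf{i}\psi}\!\!\!\sum_{k\in S:\,e\in P_k}\!\!\! s_k\Big)\ \ge\ \re\!\Big(e^{-\mathbf{i}\psi}\!\!\!\sum_{k\in S':\,e\in P_k}\!\!\! s_k\Big)\ =\ \Big|\sum_{k\in S':\,e\in P_k}s_k\Big|\ >\ \hat C_e .
\]
(The contrapositive, that every subset of a feasible set is feasible, is also what guarantees that a valid $W_j$ exists.)

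With these in hand the claim follows. For the first part, suppose toward a contradiction that some $k\in W_j$ lies in $B_l$ with $l<i$, where $k_j\in A_i$. Then $k\in R_{j-1}$ but $k\notin M$, so {\sc GreedyAlloc} examined and rejected $k$; at that moment its accepted set was precisely $\hat M\triangleq A_1\cup\cdots\cup A_l$, and $\hat M\cup\{k\}$ violated~\raf{con} or~\raf{conV2}. Since $l<i$, the structural facts give $\hat M\cup\{k\}\subseteq R_{j-1}$, so by the monotonicity lemma $R_{j-1}$ is infeasible, contradicting its feasibility; hence $W_j\cap\bigcup_{l=1}^{i-1}B_l=\varnothing$. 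The second part is then immediate: $W_j\subseteq R_{j-1}\cap\bigcup_{l\ge i}B_l$, and in the magnitude-sorted list every block $B_l$ with $l\ge i$ comes after $A_i$, hence after $k_j$, so $|s_k|\ge|s_{k_j}|$ for all $k\in W_j$, which is Eqn.~\raf{eq:order}. The only step that is more than bookkeeping is the monotonicity lemma for~\raf{con}: $|\sum_k s_k|\le\hat C_e$ is genuinely non-monotone for arbitrary complex demands (cancellation can shrink a partial sum), so the cone hypothesis $\theta<\tfrac{\pi}{2}$ is essential there, with $\theta_{\rm zs}<\tfrac{\pi}{2}$ playing the analogous role for~\raf{conV2}.
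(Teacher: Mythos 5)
Your proof is correct and follows essentially the same route as the paper's: show $\bigcup_{l=1}^{i-1}A_l\subseteq R_{j-1}$, invoke the feasibility of $R_{j-1}$ to rule out any element of $B_l$ with $l\le i-1$, and then read off the magnitude ordering from the sorted scan. The only place you go beyond the paper is in explicitly stating and proving the monotonicity of constraint violation under supersets --- which, for the capacity constraint $\bigl|\sum_k s_k\bigr|\le \hat C_e$, genuinely requires the cone condition $\theta<\tfrac{\pi}{2}$; the paper uses this step implicitly (``because $R_{j-1}$ is a feasible solution''), so your lemma is a clarification of the same argument rather than a different approach.
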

We prove Claim~\ref{claim:kj} as follows. First, since $R_{j-1} \supseteq \cup_{l=1}^{i-1} A_{l}$, it follows that $R_{j-1} \cap B_{l} = \varnothing$ for all $l \le i-1$, because $R_{j-1}$ is a feasible solution. Therefore, $W_j \subseteq \bigcup_{l=i}^m B_{l}$ and $W_j \cap \bigcup_{l=1}^{i-1} B_{l} = \varnothing$. Then, Eqn.~\raf{eq:order} follows from the non-decreasing order of demands $|s_1| \le |s_2| \le ....$
 
Hence, we obtain:
\begin{equation} 
{|s_{k_j}|} \le \frac{\sum_{k \in W_j^{(1)}\backslash\{k^\circ\}} |s_k|}{|W_j^{(1)}|-1} \label{eq:rel2}
\end{equation}

Rearranging Eqn.~\raf{eq:rel2} and using the fact that $|W_j^{(1)}|$ is an integer, we apply Lemma~\ref{lem:tb} 
\ifsupplementary
(in the appendix)
\else
(in the supplementary materials)
\fi 
to obtain:
\begin{align*}
|W_j^{(1)}|  &\le \Bigg\lfloor \frac{ \sum_{k \in W_j^{(1)}\backslash \{k^\circ\}}|s_k| }{|s_{k_j}| } + 1 \Bigg\rfloor \\
&\le \Bigg\lfloor  \frac{\sec \tfrac{\theta}{2} \cdot \Big|\sum_{k \in W_j^{(1)}\backslash \{k^\circ\} } s_k \Big| }{|s_{k_j}|}  \Bigg\rfloor +1
\end{align*}
Let 
$$
d_0 = \sum_{k \in R_j \backslash \{k_j\},e^\circ \in P_k } s_k+s_{k^\circ}, \ \
d_1 = s_{k_j}, \ \ d_2 = \sum_{k \in W_j^{(1)} \backslash \{k^\circ\}}  s_k
$$
By Eqn.~\raf{eq:rel}, it follows that $|d_0+d_1|  > |d_0+d_2|$.

Next, we apply Lemma~\ref{lem:ratio}
\ifsupplementary
 (in the appendix) 
\else
 (in the supplementary materials) 
\fi
to bound $\big|\sum_{k \in W_j^{(1)}\backslash \{k^\circ\} } s_k \big|/|s_{k_j}| \le  \sec \theta$, and obtain:
$$
|W_j^{(1)}| \le \lfloor \sec \theta \cdot \sec\tfrac{\theta}{2}  \rfloor+ 1 
$$
\end{proof}

\begin{customlem}{6} 
	Define $W_j^{(2)}$ as in Eqn.~\raf{eq:v2}. We obtain:
	\begin{equation}
	|W_j^{(2)}| \le \lfloor \eta\cdot {\rho\cdot\sec\theta_{\rm zs}}\rfloor + 1 \label{eq:vpf3}.
	\end{equation}
\end{customlem}
\begin{proof}
Assume $|W_j^{(2)}|>1$. Let $k' \in W_j^{(2)} $ be an arbitrary customer. Considering Cons.~\raf{conV2}, define edge ${\tilde{e}} \in \cL$ such that
\begin{equation}
	Q_{k_j}({\tilde{e}}) \ge \sum_{k\in W_j^{(2)}} Q_k({\tilde{e}}) -  Q_{k'}({\tilde{e}}).\label{eq:vpf}
\end{equation}
Note that ${\tilde{e}}$ must exist, otherwise
$$Q_{k_j}( e) +  Q_{k'}( e) < \sum_{k\in W_j} Q_k( e),  \forall e \in \cL$$
This implies that $W_j^{(2)}$ is not minimal, namely, $\{k'\} \cup R_j$ is a feasible solution, which contradicts the definition of $W_j^{(2)}$ in  Eqn.~\raf{eq:v2}. Let {$e_{\max} \triangleq  \arg\max_{e \in P_{k_j}}  z_e^{\rm R} s_{k_j}^{\rm R} + z_e^{\rm I} s_{k_j}^{\rm I}$}.

Then, for all $k \in W_j^{(2)}\backslash\{k'\}$ and $e \in P_{k}\cap P_{{\tilde{e}}}$, we obtain:
\begin{align}
z_{e_{\max}}^{\rm R} s_{k_j}^{\rm R} + z_{e_{\max}}^{\rm I} s_{k_j}^{\rm I} = \ &  \frac{z_{e_{\max}}^{\rm R} s_{k_j}^{\rm R} + z_{e_{\max}}^{\rm I} s_{k_j}^{\rm I}}{z_{e}^{\rm R} s_{k}^{\rm R} + z_{e}^{\rm I} s_{k}^{\rm I}} (z_{e}^{\rm R} s_{k}^{\rm R} + z_{e}^{\rm I} s_{k}^{\rm I}) \notag\\
\le  \ & \frac{|z_{e_{\max}}|\cdot|s_{k_j}| }{|z_e|\cdot|s_{k}|\cdot \cos \theta_{\rm zs}} (z_{e}^{\rm R} s_{k}^{\rm R} + z_{e}^{\rm I} s_{k }^{\rm I}) \label{eq:w1z}\\
\le  \ & \tfrac{\rho}{\cos\theta_{\rm zs}}\cdot (	z_{e}^{\rm R} s_{k}^{\rm R} + z_{e}^{\rm I} s_{k}^{\rm I}), \label{eq:w2z}
\end{align}
where Eqn.~\raf{eq:w1z} follows by Cauchy-Schwarz inequality and $\theta_{\rm zs} \in [0,\tfrac{\pi}{2})$, Eqn.~\raf{eq:w2z}  by $\frac{|s_{k_j}|}{|s_k|}\le 1$ by Claim~\ref{claim:kj}. 

Summing all $k \in W_j^{(2)}\backslash\{k'\}$ and $e \in P_{k}\cap P_{{\tilde{e}}}$, we obtain:
\begin{align}
& z_{e_{\max}}^{\rm R} s_{k_j}^{\rm R} + z_{e_{\max}}^{\rm I} s_{k_j}^{\rm I}\notag\\
\le \  &  \bigg(\frac{\rho}{\cos\theta_{\rm zs}} \bigg)\cdot \frac{  \sum_{k \in W_j^{(2)}\backslash\{k'\}} \sum_{e \in P_{k} \cap P_{{\tilde{e}}} } 	z_e^{\rm R} s_{k}^{\rm R} + z_e^{\rm I} s_{k}^{\rm I}}{\sum_{k \in W_j^{(2)} \backslash \{k'\}} |P_k \cap P_{{\tilde{e}}} |} \notag\\
\le  \ & \bigg(\frac{\rho}{\cos\theta_{\rm zs}} \bigg) \cdot \frac{ \sum_{k \in W_j^{(2)}\backslash\{k'\}} Q_k({\tilde{e}})}{|W_j^{(2)}| -1}  \label{eq:vpf1},
\end{align}
because of $|P_k \cap P_{{\tilde{e}}}| \ge 1$.
By the definition of $e_{\max}$, we obtain:
\begin{align}
  (z_{e_{\max}}^{\rm R} s_{k_j}^{\rm R} + z_{e_{\max}}^{\rm I} s_{k_j}^{\rm I}) \ge \ & \frac{1}{| P_{k_j} \cap P_{{\tilde{e}}}|}	\sum_{e \in P_{k_j} \cap P_{{\tilde{e}}} } z_{e}^{\rm R} s_{k_j}^{\rm R} + z_{e}^{\rm I} s_{k_j}^{\rm I}\notag \\
   \ge \ &  \frac{1}{\eta} \cdot	Q_{k_j}({\tilde{e}}) \label{eq:vpf2}
\end{align}
By Eqns.~\raf{eq:vpf}, \raf{eq:vpf1}, \raf{eq:vpf2} and $|W_j^{(2)}|$ as an integer, we obtain:
\begin{align*}
|W_j^{(2)}| &\le \left\lfloor\eta\cdot \bigg(    \frac{\rho}{\cos\theta_{\rm zs}} \bigg)  \frac{\sum_{k \in W_j^{(2)}\backslash\{k'\}}Q_k({\tilde{e}})}{Q_{k_j}({\tilde{e}})} \right\rfloor + 1 \notag \\
&\le  \lfloor \eta\cdot {\rho\cdot\sec\theta_{\rm zs}}\rfloor + 1 \label{eq:vpf3}
\end{align*}
\end{proof}

\begin{customlem}{7} \label{lem:ratio}
	Given vectors $d_0, d_1,d_2 \in \RR_+^2$ such that $|d_0 + d_1|  \ge |d_0+d_2|$, and $\arg(d_0),\arg(d_1), \arg(d_2) \in [0,\tfrac{\pi}{2})$; then
	$$ \frac{|d_2|}{|d_1|} \le \sec \theta,$$
	where $\theta$ is the maximum angle between any pair of vectors and $0 \le \theta < \frac{\pi}{2}$.
\end{customlem}
\begin{proof}
For $\theta = 0$, the statement is trivially true. Therefore, we assume otherwise.
Let $C \in \RR_+$ be a real number such that $|d_0 + d_1|  \ge C \ge |d_0+d_2|$.  Using the triangular inequality $|x|+|y| \ge |x+y|$ for $x,y \in \RR$ obtain:s, 
\begin{equation}
|d_1|  +|d_0|\ge |d_1 + d_0| \ge C \Rightarrow |d_1| \ge C- |d_0|. \label{eq:d1}
\end{equation}
Therefore,
$$
\frac{|d_2|}{|d_1|} \le \frac{|d_2|}{C-|d_0|} \le \frac{|d_2'|}{1-|d_0'|},
$$
where $d_2'$ and $d_0'$ are obtained from $d_2$ and $d_0$ respectively by dividing their magnitudes by $C$. We observe that  
$|d'_2 +d_0'| \le 1$.
We rewrite $|d'_2 + d_0'|$ as:
\begin{align*}
|d_2' +d_0'| & = 
|d_2'|^2 +|d_0'|^2 +  2 |d_2'| \cdot |d_0'| \cos (\theta_{d_2} - \theta_s)\\
& \le
|d_2'|^2 +|d_0'|^2 +  2 |d_2'| \cdot |d_0'| \cos (\theta) \le 1,
\end{align*}
where $\theta_{d_2}$ and $\theta_s$ are the arguments of $d_2'$ and $d_0'$ respectively.
We solve $|d_2'|^2 +|d_0'|^2 +  2 |d_2'| \cdot |d_0'| \cos (\theta) \le 1 $ for $|d_0'|$ to obtain:
$$
|d_0'| \le \sqrt{|d_2'|^2\cos^2 \theta - |d_2'|^2 + 1} - |d_2'| \cos \theta.
$$

Let $f(x) \triangleq \frac{\sqrt{x^2 \cos^2 \theta - x^2 + 1} - x \cos \theta}{1-x}$, for $x \in (0,1)$. Notice that $\frac{|d_0'|}{1-|d_2'|}  \le f(|d_2'|)$. We take the first derivative of $f(x)$
$$
f'(x) = \frac{-  \cos\theta  \sqrt{x^2   \cos^2 \theta -x^2+1}-x   \sin^2\theta +1}{(x-1)^2 \sqrt{x^2   \cos^2\theta-x^2+1}}
$$
Then, we solve $f'(x) = 0$ for $x$
\begin{align*}
&	-  \cos\theta  \sqrt{x^2   \cos^2 \theta -x^2+1}-x   \sin^2\theta +1 = 0\\
\Rightarrow \  &    \cos\theta  \sqrt{-x^2   \sin^2 \theta +1}+x   \sin^2\theta -1 = 0
\end{align*}
Dividing both sides by $\cos^2 \theta$,
$$
\Rightarrow \ \ \sqrt{-x^2   \tan^2 \theta +\frac{1}{\cos^2 \theta }}+x   \tan^2\theta -\frac{1}{\cos^2 \theta} = 0
$$

Rearranging the equation, squaring both sides, and using $\frac{1}{\cos^2 \theta} = 1 + \tan^2 \theta $, we obtain:
\begin{align*}
&- x^2   \tan^2 \theta + \tan^2 + 1\\
= \ &  (\tan^2 \theta+1)^2 - 2 x \tan^2\theta (\tan^2 \theta +1) + x^2 \tan^4\theta  \\
\Rightarrow \quad & (\tan^4 \theta + \tan^2 \theta ) (x^2 -2  x + 1) = 0 \\
\Rightarrow \quad & x = 1
\end{align*}
Since the critical point occurs on the boundaries, it is sufficient to check the values of $f(0)$ and $\lim_{x \to 1} f(x)$.
We find $\lim_{x \to 1} f(x)$ by applying L'Hospital's rule
\begin{align*}
&\lim_{x\to 1}	\frac{\frac{d}{d x} ( \sqrt{x^2 \cos^2 \theta - x^2 + 1} - x \cos \theta)}{\frac{d}{d x} (1-x) }\\
= \ &  \lim_{x \to 1} \frac{x-x\cos^2\theta }{\sqrt{x^2 \cos^2 \theta - x^2 + 1 }} + \cos \theta \\
= \ & \frac{1-\cos^2 \theta}{\cos \theta} + \cos \theta = \frac{1}{\cos \theta}
\end{align*}
Finally, we observe that $f(0)=1 \le  \lim_{x \to 1} f(x) = \frac{1}{\cos \theta}$, therefore the supremum is $\frac{1}{\cos \theta }$.
\end{proof}


\begin{customlem}{8} 
	\label{lem:tb}
	Given a set of 2D vectors $\{d_i \in \RR^2\}_{i=1}^n$
	$$ \frac{\sum_{i=1}^n |d_i| }{\bigg| \sum_{i =1}^n d_i \bigg|} \le {\sec \tfrac{\theta}{2}},$$
	where $\theta$ is the maximum angle between any pair of vectors and $0 \le \theta \le \frac{\pi}{2}$.
\end{customlem}
\begin{proof}
If $\theta=0$ then the statement is trivial, therefore we assume otherwise. We prove $\frac{(\sum_{i=1}^n |d_i| )^2}{|\sum_{i=1}^n d_i |^2} \le \frac{2}{\cos + 1}$  by induction (notice that $\sec \tfrac{\theta}{2} = \sqrt{\frac{2}{\cos \theta + 1}}$).  First, we expand  the left-hand side by
\begin{align}
&\frac{  \sum_{i=1}^n |d_i|^2 +  2\sum_{1\le i < j \le n} |d_i| \cdot |d_j|  } { \sum_{i =1}^n |d_i|^2 +  2\sum_{1\le i < j \le n} |d_i| \cdot |d_j| (\sin \theta_i \sin \theta_j + \cos \theta_i \cos \theta_j)}\notag\\
&=\frac{ \sum_{i=1}^n |d_i|^2 +  2\sum_{1\le i < j \le n} |d_i| \cdot |d_j|  } { \sum_{i=1}^n |d_i|^2 +  2\sum_{1\le i < j \le n} |d_i| \cdot |d_j| \cos (\theta_i - \theta_j)}, \label{eq:ind}
\end{align}
where $\theta_i$ is the angle that $d_i$ makes with the $x$ axis.

Consider the base case: $n= 2$. Eqn.~\raf{eq:ind} becomes
\begin{align}
\frac{|d_1|^2 +|d_2|^2 + 2 |d_1|\cdot |d_2|}{|d_1|^2 +|d_2|^2 + 2 |d_1|\cdot |d_2| \cos(\theta)} = f\Big(\frac{|d_2|}{|d_1|}\Big),
\end{align}
where $ f(x) \triangleq \frac{1+x^2+2x}{1+x^2+2x\cos \theta}$. The first derivative is given by
$$f'(x) = \frac{(1+x^2+2x \cos \theta )(2x + 2) - 1+x^2 + 2x)(2x + 2 \cos \theta)}{(1 + x^2 + 2x \cos \theta )^2}$$ 
$f'(x)$ is zero only when $x=1$. Hence, $f(1)$ is an extreminum point.  We compare $f(1)$ with $f(x)$ at the  boundaries $x\in \{0,\infty\}$:$$f(1) = \frac{2}{\cos \theta + 1} \ge f(0) = \lim_{x \to \infty} f(x) = 1$$
Therefore, $f(x)$ has a global maximum of $\frac{2}{\cos \theta  + 1}$.

Next, we proceed to the inductive step. We assume $\frac{\sum_{i=1}^{r-1} |d_i| }{\big| \sum_{i=1}^{r-1} d_i \big|} \le \sqrt{\frac{2}{\cos \theta + 1}}$ where $r \in \{1,\ldots, n\}$. W.l.o.g., assume $\theta_2\ge\theta_3\ge \cdots \ge \theta_n \ge \theta_1$. Rewrite Eqn.~\raf{eq:ind} as
\begin{equation}
\frac{ ( \sum_{i=1}^r |d_i|)^2 } { \displaystyle \sum_{i=1}^r |d_i|^2 + 2 \smashoperator{\sum_{1\le i<j<r}} |d_i|  |d_j| \cos (\theta_i - \theta_j) +  2 |d_r| \smashoperator{\sum_{1\le i<r}} |d_i|  \cos (\theta_i - \theta_r)} \label{eq:den}
\end{equation}
Let $g(\theta_r)$ be the denominator of Eqn.~\raf{eq:den}. We take the  second derivative of $g(\theta_r)$:
$$
g''(\theta_r) = -2 |d_r| \sum_{1 \le i < r} |d_i| \cos(\theta_i - \theta_r)
$$ 
Notice that $\cos(\theta_i - \theta_r) \ge 0$, therefore the second derivative is always negative. This indicates that all local exterma in $[0,\theta_{r-1}]$ of $g(\theta_n)$ are local maxima. Hence, the minimum occurs at the boundaries:
$$
\min_{\theta_r \in [0,\theta_{r-1}]}  g(\theta_r) \in \{g(0), g(\theta_{r-1})\}
$$
If $\theta_r \in \{0,\theta_r\}$ , then there must exist at least a pair of vectors in $\{d_i\}_{i=1}^r$ with the same angle. Combining these two vectors into one, we can obtain an instance with $r-1$ vectors. Hence, by the inductive hypothesis, the same bound holds up to $r$ vectors.
\end{proof}

\begin{customthm}{7}
	Assume that $\rho$, $\sec\theta$ and $\sec\theta_{zs}$ are constants, and $\theta, \theta_{zs} < \tfrac{\pi}{2}$, then		
	\begin{enumerate}
		\item {\sc InelasDemAlloc$_{\sf C}$} is  $\frac{1}{O( \log n)}$-approximation for {\sc sMaxOPF$_{\sf C}$}.
		\item {\sc InelasDemAlloc$_{\sf V}$} is $\frac{1}{O(\eta\cdot \log n)}$-approximation for {\sc sMaxOPF$_{\sf V}$}.
		\item {\sc InelasDemAlloc} is $\frac{1}{O(\eta\cdot \log n)}$-approximation for \textsc{sMaxOPF}.
	\end{enumerate}
\end{customthm}
\begin{proof}
By rounding utilities in {\sc InelasDemAlloc}, $\bar u_k \in \{0,...,n^2\} =  \{0,...,2^{2 \log n}\}$ for all $k\in \cN$. Therefore there are at most $2 \log n + 1$ groups of users (denoted by $\hat\cN_1,...,\hat\cN_{2\log n + 1}$ respectively). Let $M_1,...,M_{2\log n + 1}$ be their respective unit-utility solutions, returned by Algorithm~\ref{alg1}. Define $\OPT$ to be an optimal solution value for \textsc{sMaxOPF} (resp., \textsc{sMaxOPF}$_{\sf C}$, \textsc{sMaxOPF}$_{\sc V}$) and $R_i^\ast$, $i \in \{1,...,2\log n + 1\}$ be the subset of this optimal solution that belongs to group $i$. 
Clearly, $u_{\max}\le\OPT$, assuming each load can be individually served (those loads that cannot be individually served can be determined by checking the feasibility of the problem with exactly one load turned on). 
Define $u(N)\triangleq \sum_{k \in N} u_k$ for any $N \subseteq \cN$. The solution returned by {\sc InelasDemAlloc} satisfies the following:
\begin{equation}
u(M)=\max_{i \in \{1,...,2\log n +1\}} u(M_i) \ge \frac{\sum_{i=1}^{2\log n+1} u(M_i)}{2 \log n + 1}    \label{eq:xl0}
\end{equation}

Using the fact that $x \ge y \big\lfloor \tfrac{x}{y} \big\rfloor$ for any $x,y \in \RR$ and $y \neq 0$, we obtain for $i\in\{1,...,2\log n + 1\}$,
\begin{align}
u(M_i) = \sum_{k \in M_i} u_k \ge L \sum_{k \in M_i} \bar u_k \ge \alpha L \sum_{k \in R^\ast_i} \bar u _k\label{eq:xl1} 
\end{align}
where $\alpha$ is the approximation ratio of {\sc GreedyAlloc} on unit-utility instances.

Next, we use $x \le y \big\lfloor \tfrac{x}{y} \big\rfloor + y$ for any $x,y \in \RR$ and $y \neq 0$ to obtain:
\begin{align}
&\sum_{k \in R^\ast_i}  u_k \le 	\sum_{k \in R^\ast_i} (L \bar u_k + L) = 	L \sum_{k \in R^\ast_i} \bar u_k  + |R^\ast_i| L  \notag \\
\Rightarrow  \quad &	L \sum_{k \in R^\ast_i} \bar u_k  \ge \sum_{k \in R^\ast_i}  u_k  - |R^\ast_i| L \label{eq:xl2}
\end{align}

Finally, we complete the proof using Eqns.~\raf{eq:xl0}-\raf{eq:xl2}:	
\begin{align*}
u(M) \ge \ & \frac{\alpha}{2 \log n + 1} \sum_{i=1}^{2\log n+1} (\sum_{k \in R^\ast_i} u_k - |R^\ast_i| L )\\
= \ &  \frac{\alpha}{2 \log n + 1} ( \OPT - \sum_{i=1}^{2\log n+1} |R^\ast_i| L)\\
\ge \ &  \frac{\alpha}{2 \log n + 1} ( \OPT - \frac{n \cdot u_{\max }}{n^2} ) \\
\ge \ & \frac{\alpha}{2 \log n + 1}  \cdot  (1-\tfrac{1}{n}) \cdot \OPT
\end{align*}

Therefore,
\begin{enumerate}

\item {\sc InelasDemAlloc$_{\sf C}$} is  $\bar \alpha$-approximation for {\sc sMax-OPF$_{\sf C}$}, where
$$\bar \alpha = \frac{\Big(  \big\lfloor   \sec \theta \cdot \sec\tfrac{\theta}{2}    \big\rfloor + 1 \Big)^{-1}}{  2\log n    + 1}\cdot  \big(1-\tfrac{1}{n}\big) = \frac{1}{O( \log n)}$$

\item {\sc InelasDemAlloc$_{\sf V}$} is  $\bar \alpha$-approximation for {\sc sMax-OPF$_{\sf V}$}, where
$$\bar \alpha = \frac{\Big(\big\lfloor   \eta\cdot  \rho \cdot\sec\theta_{\rm zs} \big\rfloor + 1  \Big)^{-1}}{  2\log n    + 1} \cdot  \big(1-\tfrac{1}{n}\big) = \frac{1}{O(\eta\cdot \log n)}$$

\item {\sc InelasDemAlloc} is $\bar \alpha$-approximation for \textsc{sMaxOPF}, where
\begin{align*}
\bar \alpha & = \frac{\Big(\big\lfloor   \eta\cdot  \rho \cdot\sec\theta_{\rm zs} \big\rfloor + \big\lfloor   \sec \theta \cdot \sec\tfrac{\theta}{2}\big\rfloor + 2 \Big)^{-1}}{  2\log n     + 1} \cdot \big(1-\tfrac{1}{n}\big) \\
& = \frac{1}{O(\eta\cdot \log n)}
\end{align*}

\end{enumerate}

\end{proof}

\subsection{Hardness of {\sc sMaxOPF$_{\sf V}$}}

\begin{definition}
For $\alpha\in(0,1]$ and $\beta\ge1$, we define a bi-criteria $(\alpha,\beta)$-approximation to \textsc{sMaxOPF} as a solution $\hat x =\big((\hat{x}_k)_{k \in \cI}, (\hat x_k)_{k\in\cF}\big) \in \{0, 1\}^{|\cI|}\times[0,1]^{|\cF|}$ satisfying
\begin{align}
&\bigg| \sum_{k: e  \in P_k} s_k x_k \bigg| \le \beta \hat{C}_{e}, \qquad \forall  e \in \cE\\
&\frac{1}{\beta} \underline{V_e} \le \sum_{k \in \cN} \Big( \sum_{ e'  \in P_k \cap P_e} z^{\rm R}_{e'} s_k^{\rm R}+ z^{\rm I}_{e'} s_k^{\rm I} \Big) x_k \le \beta \overline{V_e}, \ \forall  e \in \cE 
\end{align}
such that $u(\hat x) \ge \alpha \OPT$.
\end{definition}

\begin{customthm}{8}\label{thm:sOPFV-hard}
Unless {\rm P}={\rm NP}, there is no $(\alpha,\beta)$-approximation for {\sc sMaxOPF$_{\sf V}$} (even when $|\cE| = 1$) by a polynomial-time algorithm in $n$, for any $\alpha$ and $\beta$ have polynomial length in $n$.
\end{customthm}

\begin{proof}
We present a reduction from \textsc{SubSum} to {\sc sMaxOPF$_{\sf V}$}.
Assume that there is an ($\alpha, \beta$)-approximation for \textsc{sMaxOPF}$_{\sf V}$. We construct an instance $I'$ of  \textsc{sMaxOPF}$_{\sf V}$ for each instance  $I$ of \textsc{SubSum}, such that \textsc{SubSum}$(I)$ is a ``yes'' instance if and only if the $(\alpha, \beta)$-approximation of 	\textsc{sMaxOPF}$_{\sf V}(I')$ gives a total utility at least $\alpha$.

We define the \textsc{sMaxOPF}$_{\sf V}$ instance $I'$ as follows. Consider a graph with a single edge $e$. Let $z_e \triangleq 1 + \bf i$, $\underline{V} = \underline{V_e}$ and $\overline{V} = \overline{V_e}$. Define  $\Lambda \triangleq \max \{-\tfrac{1}{\beta} \underline V, \beta \overline V\}$. Let  $\cN= \cI =  \{1,...,m+1\}$ be the set of customers (i.e., all having inelastic demands). For each $k \in \{1,...,m+1\}$, define the customers' demands and utilities by:
$$
s_k \triangleq 2  \Lambda a_k, \quad  u_k  \triangleq  \tfrac{\alpha}{m+1}	 \quad
s_{m+1} \triangleq - {\bf i} 2  \Lambda B, \quad   u_{m+1} \triangleq 1
$$

First, we prove that if \textsc{SubSum}$(I)$ is a ``yes'' instance, then the $(\alpha, \beta)$-approximation of \textsc{sMaxOPF}$_{\sf V}(I')$ gives a total utility at least $\alpha$. If \textsc{SubSum}$(I)$ is a ``yes'' instance, then $\sum_{k = 1}^{m}a_k \hat{x}_k = B$, where $\hat x \in \{0,1\}^m$ is a solution vector of \textsc{SubSum}. Construct a solution $x\in \{0,1\}^{m+1}$ of \textsc{sMaxOPF}$_{\sf V}$ such that
$$
x_k = \left\{
\begin{array}{l l}
\hat x_k & \quad \text{if $k = 1, ..., m$}\\
1 & \quad \text{if $k = m+1$}
\end{array} \right.
$$

\begin{table*}[!htb]
	\def\arraystretch{1.5}
	\centering
	{\scriptsize
		\begin{tabular}{|c|c|c|c||c|c|c|c||c|c|c|c|}
			\hline\multicolumn{12}{|c|}{Network Data}\\
			\hline 
			            &  (p.u.)  &   (p.u.)  & (p.u.)  &     &   (p.u.)  &  (p.u.)  & (p.u.) &     &   (p.u.)  &  (p.u.)  & (p.u.) \\
			\hline   $(0,2)  $ &  $0.000574$  & $0.000293 $ & $ 4.6$&    $(14,15)$ &  $0.00368 $  & $0.003275 $ & $ 0.3$&     $(27,28)$ &  $0.006594$  & $0.005814$ & $ 1.5$ \\
			\hline   $(2,3)  $ &  $0.00307 $  & $0.001564 $ & $ 4.1$&    $(15,16)$ &  $0.004647$  & $0.003394 $ & $ 0.25$&    $(28,29)$ &  $0.005007$  & $0.004362$ & $ 1.5$ \\
			\hline   $(3,4)  $ &  $0.002279$  & $0.001161 $ & $ 2.9$&    $(16,17)$ &  $0.008026$  & $0.010716 $ & $ 0.25$&    $(29,30)$ &  $0.00316 $  & $0.00161 $ & $ 1.5$ \\
			\hline   $(4,5)  $ &  $0.002373$  & $0.001209 $ & $ 2.9$&    $(17,18)$ &  $0.004558$  & $0.003574 $ & $ 0.1$&     $(30,31)$ &  $0.006067$  & $0.005996 $ & $ 0.5$ \\
			\hline   $(5,6)  $ &  $0.0051  $  & $0.004402 $ & $ 2.9$&    $(2,19) $ &  $0.001021$  & $0.000974 $ & $ 0.5$&     $(31,32)$ &  $0.001933$  & $0.002253 $ & $ 0.5$ \\
			\hline   $(6,7)  $ &  $0.001166$  & $0.003853 $ & $ 1.5$&    $(19,20)$ &  $0.009366$  & $0.00844  $ & $ 0.5$&     $(32,33)$ &  $0.002123$  & $0.003301 $ & $ 0.1$ \\
			\hline   $(7,8)  $ &  $0.00443 $  & $0.001464 $ & $ 1.05$&   $(20,21)$ &  $0.00255 $  & $0.002979 $ & $ 0.21$&    $(8,34) $ &  $0.012453$  & $0.012453 $ & $ 0.5$ \\
			\hline   $(8,9)  $ &  $0.006413$  & $0.004608 $ & $ 1.05$&   $(21,22)$ &  $0.004414$  & $0.005836 $ & $ 0.11$&    $(9,35) $ &  $0.012453$  & $0.012453 $ & $ 0.5$ \\
			\hline   $(9,10) $ &  $0.006501$  & $0.004608 $ & $ 1.05$&   $(3,23) $ &  $0.002809$  & $0.00192  $ & $ 1.05$&    $(12,36)$ &  $0.012453$  & $0.012453 $ & $ 0.5$ \\
			\hline   $(10,11)$ &  $0.001224$  & $0.000405 $ & $ 1.05$&   $(23,24)$ &  $0.005592$  & $0.004415 $ & $ 1.05$&    $(18,37)$ &  $0.003113$  & $0.003113 $ & $ 0.5$ \\
			\hline   $(11,12)$ &  $0.002331$  & $0.000771 $ & $ 1.05$&   $(24,25)$ &  $0.005579$  & $0.004366 $ & $ 0.5$&     $(25,38)$ &  $0.003113$  & $0.003113 $ & $ 0.1$ \\
			\hline   $(12,13)$ &  $0.009141$  & $0.007192 $ & $ 0.5$&    $(6,26) $ &  $0.001264$  & $0.000644 $ & $ 1.5$& \multicolumn{4}{c|}{\multirow{2}{*}{$S_{\text{base}}=1$MV, $V_{\text{base}}=12.66$KV}}  \\
			\cline{1-8}   $(13,14)$ &  $0.003372$  & $0.004439 $ & $ 0.45$&   $(26,27)$ &  $0.00177 $  & $0.000901 $ & $ 1.5$& \multicolumn{4}{c|}{}  \\
			\hline
		\end{tabular} 
	}
	\caption{The settings of line impedance and maximum capacity of the 38-node electric network.}
	\label{tab:net}
\end{table*}

By $\sum_{k = 1}^{m}a_k \hat x_k - B = 0$,  we obtain:
\begin{align*}
\sum_{k=1}^{m+1} (z_e^{\rm R}s^{\rm R}_k + z_e^{\rm I}s^{\rm I}_k)x_k
& = \sum_{k=1}^m 2 \Lambda a_k x_k - 2  \Lambda B x_{m+1} \\
&= \sum_{k=1}^m  2 \Lambda(a_k x_k -  B x_{m+1}) = 0
\end{align*}

Therefore, $(x_k)_{k\in \cN}$ is a feasible solution of {\sc sMaxOPF$_{\sf V}$} and satisfies Eqn.~\raf{conV}.
Since $u_{m+1}=1$, $u(x)$ is at least one which implies that $\OPT$ is also at least 1, and hence, by the feasibility of this solution, any $(\alpha, \beta)$-approximation gives a total utility at least $\alpha$. 

\medskip

Conversely, assume the $(\alpha,\beta)$-approximation gives a solution $x\in\{0,1\}^{m+1}$ of total utility at least $\alpha$. Since customer $m+1$ has valuation $v_{m+1}=1$, while the rest of customers valuations total to less than $\alpha$ (i.e., $\sum_{k = 1}^{m} u_k < \alpha$), customer $m+1$ must be satisfied in this solution.
Therefore, we obtain:
\begin{align}
& \displaystyle \frac{1}{\beta} \underline{V} \le \sum_{k=1}^m  2 \Lambda(a_k x_k - B) \le \beta \overline{V}\\
\Rightarrow \quad & \tfrac{\underline V}{2 \beta \Lambda} \le \sum_{k=1}^m a_k x_k -  B \le \tfrac{\beta \overline V}{2 \Lambda}
\end{align}

Since $-\tfrac{\underline V}{2 \beta \Lambda}, \tfrac{\beta \overline V}{2 \Lambda} \le \frac{1}{2} < 1$, and $a_k, B$ are integers, this implies  $ \sum_{k=1}^m a_k x_k - B = 0$. Hence, \textsc{SubSum}$(I)$ is a ``yes" instance.
\end{proof}

\subsection{Setting of Simulation}

We consider a 38-node electric network in Fig.~\ref{fig:net}. The settings of line impedance and maximum capacity are provided in Table~\ref{tab:net}. 

%

\end{document}